\title[Fast and Space-Efficient Parallel Algorithms for Influence Maximization]{Fast and Space-Efficient Parallel Algorithms \\ for Influence Maximization}
\newcommand\vldbdoi{10.14778/3632093.3632104}
\newcommand\vldbpages{400 - 413}
\newcommand\vldbvolume{17}
\newcommand\vldbissue{3}
\newcommand\vldbyear{2023}
\newcommand\vldbauthors{\authors}
\newcommand\vldbtitle{\shorttitle}
\newcommand\vldbavailabilityurl{https://github.com/ucrparlay/Influence-Maximization}
\newcommand\vldbpagestyle{empty}
\def\fullversion{}
\newcommand{\codeskip}{{\vspace{.05in}}}
\definecolor{mypurple}{RGB}{142,38,9}
\newcommand{\revision}[1]{#1}
\definecolor{forestgreen}{rgb}{0.13, 0.55, 0.13}
\definecolor{mordantred19}{rgb}{0.68, 0.05, 0.0}
\definecolor{green(ryb)}{rgb}{0.4, 0.69, 0.2}
\definecolor{green(html/cssgreen)}{rgb}{0.0, 0.5, 0.0}
\newcommand{\knn}{$k$-NN\xspace}
\newcommand{\rate}{\alpha}
\newcommand{\implementationname}[1]{\textit{#1}}
\newcommand{\staticgreedy}{\implementationname{StaticGreedy}\xspace}
\newcommand{\simplegreedy}{\implementationname{GeneralGreedy}\xspace}
\newcommand{\mixedgreedy}{\implementationname{MixGreedy}\xspace}
\newcommand{\infusersimple}{\implementationname{Infuser}\xspace}
\newcommand{\infuser}{\implementationname{InfuserMG}\xspace}
\newcommand{\ripples}{\implementationname{Ripples}\xspace}
\newcommand{\NoSingles}{\implementationname{No-Singles}\xspace}
\newcommand{\BST}{\implementationname{BST}\xspace}
\newcommand{\wintree}{\implementationname{Win-Tree}\xspace}
\newcommand{\WinTree}{\wintree}
\newcommand{\ourtree}{\wintree}
\newcommand{\ptree}{\implementationname{P-tree}\xspace}
\newcommand{\oursystem}{\implementationname{PaC-IM}}
\newcommand{\ours}[1]{{Ours}$_{#1}$}
\newcommand{\sketch}[1]{\Phi_{#1}}
\newcommand{\marginal}{\textsc{Marginal}}
\newcommand{\ccsize}{\textit{size}}
\newcommand{\lbl}{\textit{label}}
\newcommand{\compactcc}{\textsc{CompactSketch}}
\newcommand{\getcenter}{\textsc{GetCenter}}
\newcommand{\sketchfunc}{\textsc{Sketch}}
\newcommand{\numcenters}{\rho}
\newcommand{\CC}[1]{\mathcal{C}_{{#1}}}
\newcommand{\influence}{\sigma}
\newcommand{\truegain}[1]{\Delta(#1)}
\newcommand{\gain}[1]{\Delta(#1)}
\newcommand{\lazygain}[1]{\bar{\Delta}[#1]}
\newcommand{\lazygaini}[2]{\bar{\Delta}_{#1}[#2]}
\newcommand{\lazygainc}[1]{\bar{\Delta}_{CELF}[#1]}
\newcommand{\lazygaint}[1]{\bar{\Delta}_{BST}[#1]}
\newcommand{\vvv}[1]{v_{#1}}
\newcommand{\uuu}[1]{u_{#1}}
\newcommand{\deltas}{\Delta^*}
\newcommand{\gainsimple}{\Delta}
\newcommand{\nextseed}{\textsc{NextSeed}}
\newcommand{\numseeds}{k}
\newcommand{\modelop}[1]{\texttt{#1}}
\newcommand{\forkins}{\modelop{fork}}
\newcommand{\thread}{thread}
\newcommand{\WriteMax}{\mf{WriteMax}\xspace}
\newcommand{\writemax}{\WriteMax}
\mathchardef\sdash="2D
\newcommand{\polylog}{\text{polylog}}
\newcommand{\true}{\emph{true}}
\newcommand{\false}{\emph{false}}
\newcommand{\False}{\textsc{False}\xspace}
\newcommand{\lleft}{\mathit{left}}
\newcommand{\rright}{\mathit{right}}
\newcommand{\renew}{\mathit{renew}}
\newcommand{\stale}{\mathit{stale}}
\newcommand{\parent}{\mathit{parent}}
\newcommand{\rt}{\mathit{root}}
\newcommand{\globalmax}{\deltas}
  \newcommand{\ifconference}[1]{{{\ifx\fullversion\undefined{#1}\fi}}}
  \newcommand{\iffullversion}[1]{{{\ifx\conference\undefined{#1}\fi}}}
\def\dfnt@space@setup{%
\dfnt@preskip=\parskip
  \dfnt@postskip=0pt}
\newtheoremstyle{exampstyle}
{.05in} 
{.05in} 
{} 
{.5em} 
{\sc \bfseries} 
{.} 
{.5em} 
{} 
\theoremstyle{exampstyle} 
\theoremstyle{exampstyle} 
\theoremstyle{exampstyle} 
\renewenvironment{proof}[1][\proofname]{\par
\vspace{-\topsep}
\pushQED{\qed}%
\normalfont
\topsep0pt \partopsep0pt 
\trivlist
\item[\hskip\labelsep
      \itshape
  #1\@addpunct{.}]\ignorespaces
}{%
\popQED\endtrivlist\@endpefalse
\addvspace{3pt plus 3pt} 
}
 \crefname{section}{Sec.}{Sec.}
 \crefname{theorem}{Thm.}{Thm.}
 \crefname{lemma}{Lem.}{Lem.}
 \crefname{corollary}{Col.}{Col.}
 \crefname{table}{Tab.}{Tab.}
 \crefname{algorithm}{Alg.}{Alg.}
 \crefname{figure}{Fig.}{Fig.}
 \crefname{fact}{Fact}{Fact}
\Crefname{table}{Tab.}{Tab.}
\crefname{problem}{Problem}{Problem}
\definecolor{commentgreen}{RGB}{0,128,0}
\begin{document}
\fancyhead{}

  \balance


\author{Letong Wang}
\affiliation{%
  \institution{UC Riverside}
}
\email{lwang323@ucr.edu}

\author{Xiangyun Ding}
\affiliation{%
  \institution{UC Riverside}
}
\email{xding047@ucr.edu}

\author{Yan Gu}
\affiliation{%
  \institution{UC Riverside}
  \country{}
}
\email{ygu@cs.ucr.edu}

\author{Yihan Sun}
\affiliation{%
  \institution{UC Riverside}
  \country{}
}
\email{yihans@cs.ucr.edu}



\begin{abstract}
Influence Maximization (IM) is a crucial problem in data science.
The goal is to find a fixed-size set of highly influential \emph{seed} vertices on a network
to maximize the influence spread along the edges.
While IM is NP-hard on commonly used diffusion models, a greedy algorithm can achieve $(1-1/e)$-approximation by repeatedly selecting the vertex with the highest \emph{marginal gain} in influence as the seed. 
However, we observe two performance issues in the existing work that prevent them from scaling to today's large-scale graphs: space-inefficient memorization to estimate marginal gain, and time-inefficient seed selection process due to a lack of parallelism. 

This paper significantly improves the scalability of IM using two key techniques. The first is a \emph{sketch-compression} technique for the independent cascading model on undirected graphs. It allows combining the simulation and sketching approaches to achieve a time-space tradeoff. The second technique includes new data structures for parallel seed selection. Using our new approaches, we implemented \oursystem{}: \underline{Pa}rallel and \underline{C}ompressed IM.

We compare \oursystem{} with state-of-the-art parallel IM systems on a 96-core machine with 1.5TB memory.
\oursystem{} can process the ClueWeb graph with 978M vertices and 75B edges in about 2 hours.
On average, across all tested graphs, our uncompressed version is 5--18$\times$ faster and about 1.4$\times$ more space-efficient than existing parallel IM systems.
Using compression further saves 3.8$\times$ space with only 70\% overhead in time on average.
\end{abstract} 


\maketitle

\ifconference{
\pagestyle{\vldbpagestyle}
\begingroup\small\noindent\raggedright\textbf{PVLDB Reference Format:}\\
\vldbauthors. \vldbtitle. PVLDB, \vldbvolume(\vldbissue): \vldbpages, \vldbyear.\\
\href{https://doi.org/\vldbdoi}{doi:\vldbdoi}
\endgroup
\begingroup
\renewcommand\thefootnote{}\footnote{\noindent
This work is licensed under the Creative Commons BY-NC-ND 4.0 International License. Visit \url{https://creativecommons.org/licenses/by-nc-nd/4.0/} to view a copy of this license. For any use beyond those covered by this license, obtain permission by emailing \href{mailto:info@vldb.org}{info@vldb.org}. Copyright is held by the owner/author(s). Publication rights licensed to the VLDB Endowment. \\
\raggedright Proceedings of the VLDB Endowment, Vol. \vldbvolume, No. \vldbissue\ %
ISSN 2150-8097. \\
\href{https://doi.org/\vldbdoi}{doi:\vldbdoi} \\
}\addtocounter{footnote}{-1}\endgroup

\ifdefempty{\vldbavailabilityurl}{}{
\vspace{.3cm}
\begingroup\small\noindent\raggedright\textbf{PVLDB Artifact Availability:}\\
The source code, data, and/or other artifacts have been made available at \url{\vldbavailabilityurl}.
\endgroup
}
}


\vspace{-0.2mm}
\section{Introduction}\label{sec:intro}
Influence Maximization (IM) is a crucial problem in data science.
The goal is to find a fixed-size set of highly influential \emph{seed} vertices on a network
to maximize the spread of influence along the edges.
For example, in viral marketing, the company may choose to send free samples to a small set of users in the hope of triggering a large cascade of further adoptions through the ``word-of-mouth'' effects.
Given a graph $G=(V,E)$ and a stochastic \emph{diffusion model} to specify how influence spreads along edges, we use $n=|V|$, $m=|E|$, and $\influence(S)$ to denote the expected influence spread on $G$ using the seed set $S\subseteq V$. 
The IM problem aims to find a seed set $S$ with size $\numseeds$ to maximize $\influence(S)$.
Given its importance, IM is widely studied, and we refer the audience to a list of surveys~\cite{banerjee2020survey, AACGS10,zhou2021survey} that reviews the numerous applications and a few hundred papers on this topic.

Among various diffusion models, Independent Cascade (IC)~\cite{goldenberg2001talk} (defined in \cref{sec:prelim}) is one of the earliest and most widely used.
In IC, only seeds are \defn{active} initially.
In each timestamp, each vertex~$v$ that is newly activated in the last timestamp will activate its neighbors $u$ with a probability $p_{vu}$.
Although IM is NP-hard on IC~\cite{kempe2003maximizing}, the monotone and submodular properties of IC allow for a greedy algorithm with ($1-1/e$)-approximation~\cite{kempe2003maximizing}.
Given the current seed set $S$, the greedy algorithm selects the next seed as the vertex with the highest \emph{marginal gain},
i.e., $\arg \max_{v\in V} \{\influence(S\cup \{v\})-\influence(S)\}$.
Due to the theoretical guarantee, this greedy strategy generally gives better solution quality than other heuristics~\cite{li2018influence}.
However, the challenge lies in estimating the influence $\influence(S)$ of a seed set $S$.
Early work uses Monte-Carlo (MC) experiments by averaging $R'$ rounds of influence diffusion simulation~\cite{kempe2003maximizing,leskovec2007cost},
but the solution quality relies on a high value of $R'$ (usually around $10^4$).
\hide{To avoid MC experiments, \defn{sketch-based} approaches have been widely used~\cite{chen2009efficient, cheng2013staticgreedy, ohsaka2014fast, borgs2014maximizing,cohen2014sketch,tang2014influence,tang2015influence}.
}
Later work uses \defn{sketch-based} approaches~\cite{chen2009efficient, cheng2013staticgreedy, ohsaka2014fast, borgs2014maximizing,cohen2014sketch,tang2014influence,tang2015influence} to avoid MC experiments.
Such algorithms pre-store $R$ \defn{sketches}. Each sketch is a sampled graph---an edge $(v,u)$ is chosen with probability $p_{vu}$.
When estimating $\influence(S)$, the sampled graphs are used as the results of the MC experiments of influence diffusion.
In an existing study~\cite{cheng2013staticgreedy}, using $R\approx 200$ sketches achieves a similar solution quality to $R'=10^4$ MC experiments, greatly improving efficiency. 
The sketches can either be the sampled graphs and/or \emph{memoizing} more information from the sampled graphs to accelerate influence computation, such as connectivity~\cite{chen2009efficient,gokturk2020boosting} or strong connectivity~\cite{ohsaka2014fast}.

While numerous sketch-based solutions have been developed, we observed great challenges in scaling them to today's large-scale graphs.
\hide{
In a SIGMOD'17 benchmark paper~\cite{arora2017debunking},
the largest tested graph with (sequential) sketch-based algorithms on the IC model is the Twitter graph (42M vertices and 1.4B edges)~\cite{kwak2010twitter}.
Among nine tested algorithms, only one can process Twitter, but it takes 10 hours.
Even the recent parallel algorithms~\cite{gokturk2020boosting,popova2018nosingles,minutoli2019fast, minutoli2020curipples} need more than 10 minutes to process Twitter on a 96-core machine (See Table~\ref{tab:baselines}).
}
In a benchmark paper~\cite{arora2017debunking} on nine state-of-the-art (SOTA) sequential IM solutions,
none of them can process the Friendster (FT) graph~\cite{kwak2010twitter} with 65M vertices and 3.6B edges due to timeout (more than 40 hours) or out-of-memory.
Even the recent parallel algorithms~\cite{gokturk2020boosting,popova2018nosingles,minutoli2019fast, minutoli2020curipples} need more than half an hour to process FT on a 96-core machine (See Table~\ref{tab:baselines}).
Two major challenges exist to scale sketch-based approaches to billion-scale graphs.
The first is the \emph{space}.
Storing each sketch usually needs per-vertex information.
This indicates $O(Rn)$ space, which is expensive on large graphs (empirically, $R$ is a few hundred).
The second is \emph{insufficient parallelism}.
Many SOTA IM solutions use the \emph{CELF}~\cite{leskovec2007cost}
optimization for seed selection (see details below), which is inherently sequential.

This paper takes a significant step to \defn{improve the scalability of sketch-based IM solutions} and tests the algorithms on \defn{real-world billion-scale graphs}.
We propose two techniques to improve both space and time.
The first is a \defn{sketch compression} technique for the \emph{IC model on undirected graphs},
which limits the auxiliary space by a user-defined capacity to reduce space usage.
Our second technique is \defn{parallel data structures} for seed selection to reduce running time, which works on general graphs and any diffusion model with submodularity.
Combining the new ideas, we implemented \textbf{\oursystem{}: \underline{Pa}rallel and \underline{C}ompressed IM}.
We show a heatmap in \cref{fig:heatmap} to overview our results.
On the aforementioned FT graph, \oursystem{} only uses 128 seconds without compression (using 2.5$\times$ auxiliary space on top of the input graph), or 609 seconds when limiting auxiliary space in 0.45$\times$ input size, 
using a 96-core machine.
\oursystem{} is at least $15\times$ faster than existing parallel solutions while using much less space and achieving the same solution quality (see \cref{tab:baselines}).
Below, we overview the key contributions of this paper. 

\begin{figure}
  \centering
  \includegraphics[width=0.9\columnwidth]{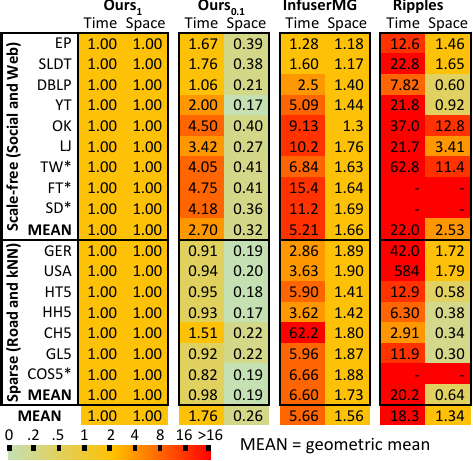}
  \caption{\textbf{Heatmap of relative running time and space usage, normalized to \ours{1}.}
  \ours{1}: \oursystem{} with no compression. \ours{0.1}: \oursystem{} with $10\times$ sketch compression.
  \infuser{}~\cite{gokturk2020boosting} and \ripples{}~\cite{minutoli2019fast}: existing parallel IM systems.
  Lower/green is better. The graph information is in \cref{tab:graph_info}.
  The running times are in \cref{tab:baselines}. $*$: graphs with more than a billion edges.
  \label{fig:heatmap}}
\end{figure}  

Our first contribution is a \textbf{compression scheme for sketches} on undirected graphs and the IC model, which allows for user-defined compression ratios (details in \cref{sec:compact_sketching}).
Similar to existing work~\cite{chen2009efficient,gokturk2020boosting}, \oursystem{} memoizes connected components (CC) of the sketches
but \emph{avoids the $O(Rn)$ space to store per-vertex information}.
Our idea is a combination (and thus a tradeoff) of memoization and simulation.
The idea is to memoize the CC information only for \emph{centers} $C\subseteq V$, where $|C|=\alpha n$, and $\rate\in[0,1]$ is a user-defined parameter. 
A local simulation will retrieve the influence information of a non-center vertex.
\hide{
While our solution is sophisticated,
the high-level idea is to run a local simulation on the sketch during influence estimation until a center is seen, 
and compute the influence by the center's CC information stored in the sketch.
As such, we can achieve a tradeoff between time and space by controlling the value of $\numcenters$ (or $\rate$).
}
Theoretically, we show that we can limit the auxiliary space by a factor of $\rate$ by increasing the time by a factor of $O(1/\rate)$. 
Experimentally, such tradeoff is studied in \cref{fig:compression}.

Our second contribution is \textbf{two new parallel data structures for seed selection}.
Recall that many SOTA IM solutions~\cite{chen2009efficient, gokturk2020boosting, minutoli2019fast, kim2013scalable, cheng2013staticgreedy} use the CELF optimization~\cite{leskovec2007cost} (details in \cref{sec:prelim}) for seed selection. 
In a nutshell, CELF is an iterative approach that lazily evaluates the marginal gain of vertices in seed selection, one at a time.
While laziness reduces the number of vertices to evaluate,
CELF is inherently sequential. 
We proposed two novel solutions that achieve high parallelism for CELF. 
The challenge is evaluating more vertices in parallel while avoiding "unpromising" vertices as in CELF.
Our first solution is a binary search tree (BST) called \ptree{}~\cite{sun2018pam,blelloch2016just,blelloch2022joinable} (\cref{sec:BST_section}). We highlight its \emph{theoretical efficiency} (\cref{lemma:connectness,lemma:efficiency,thm:ptreecost}).
Our second solution is referred to as \wintree{} (\cref{sec:WinTree}), which has lower space usage,
leading to slightly better overall performance.
The two solutions work on both directed and undirected graphs and any diffusion model with submodularity. They are potentially extendable to other optimization problems with submodular objective functions (see discussions in \cref{sec:conclusion}).

\hide{
 focusing on theoretical and practical efficiency, respectively.

Recall that the greedy algorithm always selects the vertex with the highest marginal gain as the next seed. Instead of trying all remaining vertices, many state-of-the-art IM solutions~\cite{chen2009efficient, gokturk2020boosting} use the CELF optimization~\cite{leskovec2007cost} (more details in \cref{sec:prelim}), utilizing the \emph{submodularity} of the problem.
CELF evaluates vertices' marginal gain in decreasing order of their stale (lazily-evaluated) marginal gain,
and checks an early stop condition (see \cref{alg:CELF_framework}) after each re-evaluation, such that not all vertices need to be re-evaluated.
Since this idea is inherently sequential, existing parallel solutions~\cite{chen2009efficient, gokturk2020boosting} left the CELF process sequential, which limits parallelism.
To tackle this challenge, we propose two new solutions focusing on theoretical and practical efficiency, respectively.
Both of them carefully choose batches of vertices to evaluate in parallel without introducing much overhead.
Our first solution is based on a parallel binary search tree (BST) called \ptree{}~\cite{sun2018pam,blelloch2016just,blelloch2022joinable}.
We prove that the total number of re-evaluated vertices by our BST-based approach is asymptotically the same as the standard CELF, while enabling good parallelism.
Our second solution is based on simpler but potentially more practical winning trees proposed in this paper, called \wintree{}.
Our experiments show that both data structures achieve high parallelism,
while the winning tree performs slightly better due to its simpler structure.
We present more details in \cref{sec:selecting}.
}
\hide{
We believe that both our compact skecth technique and the seed selection data structures are of independent interest within and beyond the IM problem. Our compact sketch algorithms provide a general interface to re-evaluate the new marginal gain of a vertex and mark a vertex as a seed, which can be combined with any seed selection algorithm. Our seed selection data structures can be combined with any influence estimate algorithm and find the next seed based on the greedy strategy. In addition, it is general to other submodule functions with CELF optimization.
}
\hide{
Although our sketch compression scheme focuses on IC model and undirected graphs, we believe that the algorithmic insights are of independent interest.
Our seed selection data structures work on both directed and undirected graphs, and any diffusion model with  submodularity. 
Our new data structures are also potentially general to other submodular functions with CELF optimization.
We leave this as future work, and discuss it briefly in \cref{sec:conclusion}.
}

We experimentally study the performance of \oursystem{} and compare it with SOTA parallel IM systems.
We tested 17 graphs, five of which have over a billion edges.
Besides social networks,
we also tested other real-world graphs, including web graphs, road graphs, and \knn{} graphs.
One can consider IM on such graphs as studying the influence diffusion among webpages and geologically or geometrically close objects.
On all tested graphs, \oursystem{} achieves the best running time and space usage while guaranteeing comparable or better solution quality to all baselines.
Compared to the best baseline, \oursystem{} with no compression is $5.6\times$ faster and is 1.5$\times$ more space-efficient on average (geometric mean across tested graphs), and is $3.2\times$ faster and uses $6\times$ less space using compression with $\rate=0.1$.
Due to space- and time-efficiency, \oursystem{} is the only system to process the largest graph ClueWeb~\cite{webgraph} with 978M vertices and 75B edges.
We believe \oursystem{} is the first IM solution that scales to tens of billions of edges and close to a billion vertices.

\iffullversion{We publish our code at \cite{wang2023fastcode}.}
\ifconference{For page limit, we provide the full version of this paper~\cite{wang2023fast} with full proofs and more experimental evaluations.}

\hide{
Our idea is based on an existing observation~\cite{} that on the sampled graph, a vertex will influence all vertices in the same connected component (CC),
and thus the CC size for each vertex $v$ can be stored as $v$'s influence on this sketch.
If any vertex in this CC is selected as a seed, the (marginal) influence of all vertices in this CC can be set as 0 on this sketch.
Different from previous work~\cite{}, we only store this information for $\numcenters$ \emph{center} vertices $C\subseteq V$, such that each sketch has $o(n)$ size.
Our idea of compressing sketches is inspired by recent theory work on sublinear CC representation~\cite{}.
To compute the influence of a vertex $v$ on a compressed sketch, we will start a simulation from $v$ on the fly using breadth-first-search (BFS), but stop BFS once a center $c\in C$ is encountered.
With randomly selected centers, a vertex connected with a center will likely encounter a center in a few hops, and $v$ has the same influence as $c$ on this sketch.
If $v$ is not connected with any centers, $v$ is likely in a small CC, and the BFS will terminate quickly. In either case, we can expect a reasonable cost to compute the influence for each vertex.
By controlling the parameter $\numcenters$, we can achieve trade-off between
time and space, i.e., smaller $\numcenters$ indicates lower space usage, but a higher cost when computing influence.
This approach is a combination of the sketch-based and the simulation-based algorithms - it stores a small set of influence information to avoid complete simulation to save time,
but allows for a ``partial'' simulation
such that only a subset of vertices' influence information is needed.
To compute the connectivity of each sketch, we adopt a highly-parallelized union-find algorithm from xxx, which provided high performance in our experiments.

Our second technique is proposing parallel priority queues for CELF seed selection (see more details below). 
Recall that the greedy algorithm repeatedly adds the vertex with the highest marginal gain to the seed set.
However, in general, selecting each seed requires evaluating the latest marginal gain of all vertices, which can be expensive.
Most state-of-the-art implementations use the CELF optimization~\cite{} (see more details in \cref{sec:prelim}) based on the submodularity of the IM problem.
In general, CELF uses a priority queue $Q$ to maintain the marginal gain of all vertices with lazy updates. 
To select a seed, the algorithm starts with popping the top vertex $v$ from $Q$ and re-evaluating its marginal gain $\gain{v}$. If $\gain{v}$ becomes lower than the marginal gain of the next candidate $u$ in $Q$, $v$ will be inserted back to $Q$, and the algorithm repeats to pop $u$ from $Q$ and re-evaluate $u$'s marginal gain.
Although CELF greatly decreases the number of candidates to evaluate for selecting each seed, this process is inherently sequential, and parallelizing it is highly non-trivial.
As a result, all existing parallel IM algorithms~\cite{} only parallelize sketch computing, but still perform seed selection sequentially, which becomes their performance bottleneck (as we will show in our experiments).
In this paper, we propose two parallel data structures to address this challenge, focusing on theoretical and practical efficiency, respectively.
Our first solution is based on parallel binary search trees (BST).
We proved that using BST, the total number of re-evaluated vertices is asymptotically the same as the sequential CELF, while achieving good parallelism.
Our second solution is based on simpler but potentially more I/O-friendly tournament trees.
Our experimental results show that both data structures achieve high parallelism, while the tournament tree performs slightly better due to its I/O-friendliness.
}

\section{Preliminaries}\label{sec:prelim}

For graph $G=(V,E)$, we use $n = |V|$ and $m = |E|$.
Since our sketch compression is designed for undirected graphs and IC model,
throughout the paper, we assume $G$ is undirected and consider the IC model unless otherwise specified.
A \defn{connected component (CC)} is a maximal subset in $V$ s.t. every two vertices in it are connected by a path.
In a max-priority-queue, we use \defn{top} to refer to the element with the largest key, and use function \defn{pop} to find and remove the top element.
$\tilde{O}(f(n))$ denotes $O(f(n)\cdot \polylog(n))$.

\myparagraph{Computational Model}.
We use the fork-join parallelism~\cite{CLRS,blelloch2020optimal}, and the work-span analysis~\cite{blumofe1999scheduling,gu2021parallel,gu2022analysis}.
We assume a set of \thread{}s that access a shared memory.
A thread can \forkins{} two child software \thread{s} to work in parallel.
When both children complete, the parent process continues.
A parallel for-loop can be simulated by recursive \forkins{s} in logarithmic levels.
The \defn{work} of an algorithm is the total number of instructions, and
the \defn{span} is the length of the longest sequence of dependent instructions.
We can execute the computation using a randomized work-stealing scheduler~\cite{blumofe1999scheduling,ABP01} in practice.
We use \emph{atomic} operation \textsc{WriteMax}$(t,v_{\mathit{new}})$ to write value $v_{\mathit{new}}$ at the memory location $t$ if $v_{\mathit{new}}$ is larger than the current value in $t$.
We use compare-and-swap to implement \textsc{WriteMax}.

\subsection*{The Influence Maximization (IM) Problem}

\begin{table}[t]
\small
\hide{
\begin{tabular}{ll}
   $\influence_{G,M}(S)$  & The influence spread (expected number of activated \\
   or  $\influence(S)$ & vertices) of seed set $S$ on graph $G$ and diffusion model $M$\\
   \hline
   $\gain{v\,|\,S}$  & $\gain{v\,|\,}=\influence(S\cup \{v\})-\influence(S)$. \\
   or $\gain{v}$&Marginal increase of $v$ on top of $S$\\
   \hline
   $\gainsimple[v]$ & The (Lazy-evaluated) marginal gain. It may be a stale value \\
   & and is an upper bound of the true marginal gain $\gain{v\,|\,S}$ \\
   &for the current seed set $S$
\end{tabular}
}

\caption{\textbf{Notations in the paper.}\label{tab:notations}}

\rule{\columnwidth}{.05em} 


\begin{description}[labelwidth=.15in,leftmargin=.15in]
    \item[$\boldsymbol{G=(V,E)}$]: The input graph.\hfill $\boldsymbol{\numseeds}$~~: number of seeds.\quad
    \item[$	\boldsymbol{\influence(S)}$ \textbf{or} $\boldsymbol{{\influence_{G,M}(S)}}$]: The influence spread of seed set $S\subseteq V$ on graph $G$ and diffusion model $M$.
    \item[$\boldsymbol{\truegain{v\,|\,S}}$]: The \defn{true score} (marginal gain) of $v$ on top of $S$. $\truegain{v\,|\,S}=\influence(S\cup \{v\})-\influence(S)$. We omit $S$ and use $\truegain{v}$ with clear context.
    \item[$\boldsymbol{\lazygain{v}}$]: The \defn{stale score} (lazily-evaluated marginal gain) of $v$ stored in an array. It is an upper bound of $\truegain{v\,|\,S}$ for the current seed set $S$. 
    \item[$\boldsymbol{\sketch{1..R}}$]: the sketches. Formally defined in \cref{sec:compact_sketching}.
    \item[$\boldsymbol{\rho}$ and $\boldsymbol{\rate}$]: $\rho=\rate n$ is the number of centers. 
\end{description}
 \vspace{-.1in}
\rule{\columnwidth}{.05em} 


\begin{description}[labelwidth=.15in,leftmargin=.15in]
\item [\textbf{Function names:}]
\item [\sketchfunc$(G,r)$]: Compute the $r$-th sketch from graph $G$
\item [\marginal{}$(S,v,\sketch{1..R})$]: The marginal gain of vertex $v$ given the current seed set $S$ estimated from $R$ sketches $\sketch{1..R}$
\item [\textsc{NextSeed}$(S,\sketch{1..R})$]: Greedily determine the next seed based on sketches $\sketch{1..R}$ given the current seed set $S$.
\end{description}
 \vspace{-.1in}
\rule{\columnwidth}{.05em}

\hide{
\begin{itemize}[leftmargin=*]
    \item $G=(V,E)$: The input graph.
    \item $S\subseteq V$: seed set $S$.
    \item $\boldsymbol{\influence(S)}$ or ${{\influence_{G,M}(S)}}$: The influence spread (expected number of activated vertices) of seed set $S$ on graph $G$ and diffusion model $M$.
    \item $\gain{v\,|\,S}$: Marginal gain of $v$ on top of $S$. $\gain{v\,|\,S}=\influence(S\cup \{v\})-\influence(S)$.
    \item $\gainsimple{[}v{]}$: The lazy-evaluated marginal gain of $v$. It is an upper bound of the true marginal gain $\gain{v\,|\,S}$ for the current seed set $S$.
    \item $\sketch{r}$: the $r$-th sketch. Formally defined in \cref{sec:cc}.
\end{itemize}
}

\end{table}

Given a graph $G=(V,E)$, an influence diffusion model $M$ specifies how influence spreads from a set of current \defn{active} vertices to \defn{activate} more vertices in $V$.
Given a \defn{seed} set $S\subseteq V$, we use $\influence_{G,M}(S)$ to denote the expected number of vertices that $S$ can activate (including $S$) under diffusion model $M$ on graph $G$.
The IM problem is to find $S^*\subseteq V$ with size $\numseeds$, s.t. $S^*$ maximizes the influence spread function $\influence_{G,M}$.
\hide{
Given a graph $G=(V,E)$, the edge weight function $p:E\mapsto [0,1]$ is defined as the probability that vertex

If the influence spread from vertex $i$ to $j$, we say $i$ \defn{activate} $j$.
$p_{i,j}$ is the probability that vertex $i$ activate $j$ through edge $(i,j)$.
The \defn{Influence Spread Function $\influence_{G,E}(S)$} of a vertex subset $S\subseteq V$, is the expected number of vertices that $S$ can activate (including $S$) under the propagation model $M$ on graph $G$.
}
With clear context, we omit $M$ and $G$, and use $\influence(\cdot)$.
Several propagation models have been proposed, including the Independent Cascade (IC) model~\cite{goldenberg2001talk}, Linear Threshold (LT) model~\cite{granovetter1978threshold,schelling2006micromotives}, and more~\cite{kempe2003maximizing,liu2012time, chen2012time, rodriguez2011uncovering}.
Since our sketch compression focuses on the IC model, we briefly introduce it here.
In the IC model, influence spreads in rounds.
Initially, only the seed vertices are active.
In round $i$, each vertex $u$ that was newly activated in round $i-1$ attempts to spread the influence via all incident edges $e$, and activate the other endpoint $v$ with probability $p_e$.

\hide{
\myparagraph{Submodularity and the Greedy Algorithm.} Kempe et al.~\cite{kempe2003maximizing} proved that IM under the IC model is NP-hard, and that the influence spread function is \defn{submodular}, i.e., for every $X,Y\subseteq V$ where $X\subseteq Y$, and $v\in V\setminus Y$, we have:
$$\influence(X\cup\{v\})-\influence(X)\ge \influence(Y\cup\{v\})-\influence(Y)$$
Due to the submodularity, the following greedy algorithm (later referred to as \simplegreedy{}) can achieve a $(1-1/e)$-approximation.
The algorithm starts with $S=\emptyset$ and repeatedly adds the vertex with the highest \defn{marginal gain} to $S$, until $|S|=\numseeds$.
The marginal gain $\truegain{v\,|\,S}$ of a vertex $v$ given the current seed set $S$ is defined as:
\begin{align}
\truegain{v\,|\,S}=\influence(S\cup \{v\})-\influence(S)
\end{align}
}

Kempe et al.~\cite{kempe2003maximizing} proved that IM under the IC model is NP-hard, and that the influence spread function has the following properties: for every $X,Y\subseteq V$ where $X\subseteq Y$, and $v\in V\setminus Y$, we have:
\begin{align}
\text{\bf Monotonicity: }&  \quad\quad\quad\,\,\influence(Y\cup\{v\})\ge \influence(Y) \\
\text{\bf Submodularity: }&\influence(X\cup\{v\})-\influence(X)\ge \influence(Y\cup\{v\})-\influence(Y)
\label{eq:submodular}
\end{align}
These two properties allow the following \textbf{greedy algorithm} (later referred to as \simplegreedy{}) to give a $(1-1/e)$-approximation.
The algorithm starts with $S=\emptyset$ and repeatedly adds the vertex with the highest \defn{marginal gain} to $S$, until $|S|=\numseeds$.
The marginal gain $\truegain{v\,|\,S}$ of a vertex $v$ given the current seed set $S$ is defined as:
\begin{align}
\truegain{v\,|\,S}=\influence(S\cup \{v\})-\influence(S)
\end{align}


With clear context, we omit $S$ and use $\truegain{v}$, and also call it the \defn{score} or the \defn{true score} of $v$.
We call the process to compute the true score of a vertex an \defn{evaluation}.
To estimate $\influence(\cdot)$, 
early solutions average $R'$ rounds of Monte Carlo (MC) experiments of influence diffusion simulation.
However, on real-world graphs, this approach requires a large value of $R'$ to converge, which can be expensive.

\myparagraph{Sketch-Based Algorithms.}
Sketch-based algorithms are proposed to accelerate $\truegain{v}$ evaluations.
Instead of running independent MC experiments for each evaluation,
sketch-based algorithms statically sample $R$ graphs to reflect MC experiments 
and consistently simulate the results on them.
Using sketches allows for much faster convergence, making the number of needed simulations (i.e., sketches) $R$ smaller than that in MC experiments.
Hence, sketch-based algorithms are widely studied~\cite{chen2009efficient, cheng2013staticgreedy, ohsaka2014fast, borgs2014maximizing,cohen2014sketch,tang2014influence,tang2015influence}.
We summarize sketch-based algorithms in two steps (see \cref{alg:sketch_framework}):
\emph{sketch construction} and \emph{seed selection}.
Next, we introduce both steps with optimizations in previous work.
We summarize some related work in \cref{tab:related}, and review more in \cref{sec:related}.
Some notations are given in \cref{tab:notations}.

\newcommand{\tmc}{T}
\newcommand{\nc}{n_c}
\begin{table*}[t]
    \centering
    \small\vspace{-1em}
    
    \caption{\textbf{Existing approaches and our new one.}
    \mixedgreedy{} uses different approaches to select the first seed and the other seeds, so we list them separately.
    ``\#vertices per seed'': number of vertices to visit in all re-evaluations involved to find a seed. 
    For \staticgreedy{}, we assume the \emph{fusion} optimization in~\cite{gokturk2020boosting} to avoid explicitly storing sampled graphs. 
    $n$: number of vertices.     $\nc$: number of re-evaluations needed in CELF.
    $\tmc$: the average number of reachable vertices in a simulation (or a sketch).
    Empirically, $\tmc$ is large, and $\nc \ll n$. To achieve similar quality, $R\ll R'$.
    }
    
    \begin{tabular}{@{}c@{  }@{  }c@{  }@{  }c@{  }c@{  }@{  }c@{  }@{  }c@{  }@{  }c@{}}
    & \bf Randomization  & \bf Compute Influence & \bf Select Seed &\bf \#vertics per seed & \bf Space & \bf Parallel\\
    \hline
         \bf \simplegreedy{}~\cite{kempe2003maximizing} & $R'$ Monte Carlo experiments  &  Simulation  &Evaluate all & $O(nR'\tmc)$ & $O(n)$ & no\\
         \bf \mixedgreedy{}~\cite{chen2009efficient} 1st Seed&  Fixed $R$ sampled graphs & Memoization&Evaluate all & $O(nR)$ & $O(n)$& no\\
         \bf \mixedgreedy{}~\cite{chen2009efficient} Others&  $R'$ Monte Carlo experiments & Simulation& CELF& $O(\nc R'\tmc)$ & $O(n)$& no\\
         \bf \staticgreedy{}~\cite{cheng2013staticgreedy} &  Fixed $R$ sampled graphs & Simulation& CELF& $O(\nc R\tmc)$ & $O(n)$& no\\
         \bf \infuser{}~\cite{gokturk2020boosting} & Fixed $R$ sampled graphs & Memoization& CELF&$O(\nc R)$ &  $O(nR)$& yes\\
         \bf \oursystem{} (this work) & Fixed $R$ sampled graphs & Simulation + Memoization &CELF& $O(\nc R\cdot\min(T,1/\alpha))$ & $O((1+\alpha R)n)$& yes\\
    \end{tabular}
    
    \label{tab:related}\vspace{-.5em}
\end{table*}


\myparagraph{Sketch Construction.}
In the earliest sketch-based algorithm \staticgreedy{}, $R$ sampled graphs~\cite{cheng2013staticgreedy} are explicitly stored as sketches.
In the IC model, the $r$-th sketch corresponds to a sampled graph $G'_r=(V,E_r')$, where $E_r'\subseteq E$, such that each edge $e\in E$ is sampled with probability $p_{e}$,
meaning a successful activation.
An evaluation will average the number of reachable vertices on all sampled graphs from the seed set $S$.
A later paper \infusersimple{}~\cite{gokturk2020boosting} proposed the \defn{fusion} optimization, which uses hash functions to avoid explicitly storing the sampled graphs $G'_r$.
They sample an edge $e$ in a sketch $G'_r$ with a random number generated from seed $\langle e,r\rangle$,
such that whether an edge is selected in a certain sketch is always deterministic, and a sampled graph $G'_r$ can be fully reconstructed from the sketch id $r$.
We also use this idea in our sketch compression algorithm.

\newcommand{\nosemic}{\renewcommand{\@endalgocfline}{\relax}}
\newcommand{\dosemic}{\renewcommand{\@endalgocfline}{\algocf@endline}}
\newcommand{\popline}{\Indm\dosemic}
\newcommand{\pushline}{\Indp}
\setlength{\algomargin}{.5em}
\SetSideCommentLeft

\begin{algorithm}[t]
\SetNoFillComment
\small
\caption{Sketch-based IM algorithm}
\label{alg:sketch_framework}
\SetKwInOut{Global}{Notations}
\Global{$G=(V,E)$: the input graph.
	$\numseeds$: the number of seed vertices.
	$R$: the number of sampled graphs}
\KwOut{
	$S$: a set of $K$ seeds that maximizes influence on $G$}
\SetKwInOut{Maintains}{Maintains}
\SetKwInput{Notes}{Notes}
\SetKwProg{myfunc}{Function}{}{}
\SetKwFor{parForEach}{ParallelForEach}{do}{endfor}
\Notes{
\noindent$\sketch{1\dots R}$: $R$ sketches computed from $R$ sampled graphs \\
}
\DontPrintSemicolon
\tcp{\underline{Step 1: Sketch construction}}
\parForEach {$r \gets 1 \dots R$  \label{line:sketch_begin}}{
	$\sketch{r}\gets$ \sketchfunc{}$(r)$\tcp*[f]{Compute the $r$-th sketch} \label{line:sketch_end}\\
}
\vspace{.05in}
\tcp{\underline{Step 2: Seed selection using CELF}}
\hide{
\tcp{{\upshape\marginal{}}$(S,v,\sketch{1..R})$: the marginal gain of vertex $v$'s influence on $R$ sketches $\sketch{1..R}$ given seed set $S$}
\parForEach {$v \in V$}{
		$\gain{v} \gets \marginal(\emptyset, v, \sketch{1..R})$
}
\tcp{A max-priority-queue on all vertices. The key for vertex $v$ is $\gain{v}$}
$Q \gets \textsc{PriorityQueue}(V, \gain{\cdot})$\\
}
$S \gets \emptyset$\\
\hide{
\While{$|S|<\numseeds$}{
	$u \gets Q.\textsc{Pop}()$\tcp*[f]{{\upshape\textsc{Pop}}: find and remove the top}\\
        \tcp{Compute the latest marginal gain of $u$ based on $S$}
        $\gain{u}\gets$ \marginal$(S,u,\sketch{1..R})$\\
        \tcp{the new influence $\gain{u}$ is still larger than the best candidate in $Q$}
	\If(\tcp*[f]{\textsc{\upshape Top}: find the top}){$\gain{u}>\gain{Q.\textsc{Top}()}$}{
            \textsc{MarkAsSeed}($u$)\\
            $S \gets S \cup \{u\}$
        } \lElse{
        $Q.\textsc{Insert}(u)$ \tcp*[f]{insert $u$ back with new $\gain{u}$}
        }
}
}
\While{$|S|<\numseeds$ \label{line:select_begin}}{
$s^*\gets $\nextseed$(S,\sketch{1..R})$\tcp*[f]{Find \upshape $\arg\max_{v\in V}$\marginal$(S,v,\sketch{1..R})$}\\
\textsc{MarkSeed}($s^*,\sketch{1..R},S$)\tcp*[f]{Mark $s^{*}$ as a seed in the sketches} \label{line:next_seed}\\
$S \gets S \cup \{s^*\}$ \label{line:select_end}
}
\Return{$S$}


\end{algorithm}

\begin{algorithm}[t]
\SetNoFillComment
\small
\caption{Sequential Seed Selection with CELF}
\label{alg:CELF_framework}
\hide{
\KwIn{$G=(V,E)$: the input graph\\
\pushline
	$K$: the number of seed vertices\\
	$R$: the number of sampled graphs}
\KwOut{
	$S$: a seed set that maximizes influence on $G$}
 }
\SetKwInOut{Maintains}{Maintains}
\SetKwInput{Notes}{Notes}
\SetKwProg{myfunc}{Function}{}{}
\SetKwFor{parForEach}{ParallelForEach}{do}{endfor}
\SetKwFor{Justrepeat}{Repeat}{}{}
\Notes{
\noindent
$Q$: max-priority-queue on all vertices $v\in V$ with key $\lazygain{v}$\\
\pushline
Initially $\lazygain{v}=\marginal(\emptyset,v,\sketch{1..R})$\\
}
\DontPrintSemicolon
\myfunc(\tcp*[f]{$S$: current seed set; $\sketch{1..R}$: $R$ sketches}){\upshape{\textsc{NextSeed}}$(S,\sketch{1..R})$}{
\Justrepeat{}{
        $s^* \gets Q.\textsc{Pop}()$\tcp*[f]{{\upshape\textsc{Pop}}: find and remove the top} \label{line:Q_start}\\
        $\lazygain{s^*}\gets$ \marginal$(S,s^*,\sketch{1..R})$ \label{line:marginal}\\
	\lIf{$\lazygain{s^*} > \lazygain{          Q.\textsc{Top}()$} }{
            \Return {$s^*$}
        } \lElse{
        $Q.\textsc{Insert}(s^*)$ \tcp*[f]{insert $s^*$ back with new score} \label{line:Q_end}
        }
}
}

\end{algorithm}

\hide{

\myfunc{\sc{Sample}($G,r$)}{
$E' \gets \emptyset$\\
\For {$(u,v) \in E$}{
        \tcp{{Generate random number $p\in[0,1]$ from seed $u,v,r$}}
        $p\gets $ random$(u,v,r)$ \\
	\If {$p \leq w_{u,v}$}{
		$E' \gets E' \cup \{(u,v)\}$
	}
}
Construct $G' = (V,E')$\\
\Return $G'$
}
} 

Many existing algorithms also use \defn{memoization} to avoid influence spread simulation on sketches.
On undirected graphs and the IC model, the \mixedgreedy{} paper~\cite{chen2009efficient} first observed that a vertex $v$'s influence on a sketch is all vertices in the same connected component (CC) as $v$, but they only used this idea to select the first seed.
Later, \infuser{} adopts this idea to select all seeds and memoizes the CC information of each sampled graph as the sketch.
A vertex $v$'s score is then the average of the (inactivated) CC sizes on all $R$ sketches, which
can be obtained in $O(R)$ cost.
This approach avoids simulation but leads to $O(Rn)$ space that is expensive for large graphs.
\cref{sec:compact_sketching} presents how our sketch compression approach reduces this high space usage.

\hide{
The simplest sketches just stored the $R$ sampled graphs, as in the first sketch-based algorithm \textsc{StaticGreedy}~\cite{cheng2013staticgreedy}.
Graphs are sampled based on the diffusion model, and the MC simulation on $R$ sampled graphs gives the marginal influence.
Later work further stores \emph{additional information} as part of the sketches to improve performance.
For instance, PMC~\cite{ohsaka2014fast} stores the strongly connected components on each graph, and Infuser~\cite{gokturk2020boosting} stores connected components.
Since this paper focuses on undirected graphs, here we review how Infuser uses connectivity information to avoid MC simulation.

\yan{Review Infuser's technique here.}
\letong{
The \textsc{INFuseR-MG}~\cite{gokturk2020boosting} computes and stores all the connected component labels for each vertex on all sketches. The first seed vertex is indeed the one having the largest average component size on all sketches. Instead of resampling as in \textsc{NewGreedy} and \textsc{MixGreedy}, \textsc{INFuseR-MG} utilizes this information during the CELF selection state for computing marginal gains and finding the remaining $K-1$ seed vertices.
The marginal gain for a vertex $u$ can be found by computing the average size of connected components (over all the $R$ samples) that contain $u$ but do not contain any seed vertices.
}

The vertices activated by seed set $S$ on a sketch are all vertices reachable from $S$.
Namely, a vertex $v$ is activated if and only if there exists a path from $u$ to $v$ where $u\in S$.
The influence $\influence(S)$ for $S$ can be computed by averaging the number of activated vertices from the sampled graphs.
Existing experimental results show that, using pre-generated (static) snapshots for MC simulations, the influence spread function converges much faster with the number of simulations (or sketches) $R$.

Also, Infuser avoids storing the sampled graph by using hash functions.
\yan{Review the solution here.}
\letong{When an edge $(u,v)$ with a certain orientation is read from the memory, \textsc{INFuseR-MG} decides to skip it or not in the $r$-th sampling by comparing the propagation probability $p_{u,v}$ and the outcome of a direction-oblivious hash function that takes edge vertices and sketch number as the input $\rho(u,v)_r$.
}
}

\hide{
\begin{table}
    \centering
    \small
    \begin{tabular}{@{}l@{ }lll@{ }l@{}}
    & Randomization & Get Influence& Time & Space\\
    \hline
         SimpleGreedy & Independent & MC simulation & $O(R'T_{MC})$ & $O(n)$\\
         MixGreedy$^1$&  Fixed & CC info& $O(R)$ & $O(n)$\\
         MixGreedy$^*$&  Independent & MC simulation& $O(R'T_{MC})$ & $O(n)$\\
         StaticGreedy &  Fixed & MC simulation& $O(RT_{MC})$ & $O(n)$\\
         Infuser & Fixed & CC info& $O(R)$ & $O(Rn)$\\
         \oursystem{} & Fixed & partial MC + & $O(R/\alpha)$ & $O(\alpha Rn)$\\
         &&partial CC info
    \end{tabular}
    \caption{Summarizing methodology in existing work and our system.
    ``randomization'' the use of randomization to decide the activation of edges.
    ``Independent'': independent simulation every evaluation. ``Fixed'': only simulate on $R$ fixed sampled graphs.
    ``Get Influence'': the way to compute influence with given randomization.
    ``MC simulation'': simulate the spread process.
    ``CC info'': pre-compute the CC sizes as the influence.
    ``Time'': time of one influence estimation. $T_{MC}$: the time for one MC simulation.
    }
    \label{tab:related}
\end{table}
}

\myparagraph{Seed Selection with CELF.}
A useful optimization for the greedy algorithm is CELF~\cite{leskovec2007cost}, which avoids evaluating all vertices in $\nextseed$.
CELF uses \emph{lazy} evaluation for {submodular} functions and
evaluates a vertex only if it becomes a ``promising'' candidate for the next seed.
We show the CELF seed selection algorithm in \cref{alg:CELF_framework}.
CELF uses a priority queue $Q$ to maintain all vertices with their scores as the key.
Due to submodularity, $\gain{v}$ is non-increasing with the expansion of the seed set $S$. With lazy evaluation, the scores in $Q$ may be stale but are always upper bounds of the true scores.
We call this lazily-evaluated score the \defn{stale score} of $v$ and denote it as $\lazygain{v}$ stored in an array.
To select the next seed, CELF keeps popping the top element $v$ from $Q$, re-evaluating its true score $\truegain{v}$, storing $\truegain{v}$ to $\lazygain{v}$, and inserting it back unless $\truegain{v}$ is greater than the current largest value in $Q$ (\cref{line:Q_start}-\ref{line:Q_end}).
In this case, we can set $v$ as the next seed without more evaluations, as the true scores of other vertices can not exceed their values in $Q$.
\hide{
The algorithm uses a priority queue $Q$ to maintain all vertices with their marginal gain as the key. Initially, each $\gain{v}$ is the influence of $v$ as a single seed.
To select the next seed (the while-loop in line xx), CELF pops the top element $u$ from $Q$. Note that $\gain{u}$ may be stale due to lazy evaluation, and thus we re-evaluate $\gain{u}$ using the sketches (Line xx).
Due to submodularity, $\gain{u}$ is non-increasing with the expansion of the seed set $S$.
As a result, $\gain{u}$ may get lower after re-evaluation.
If $\gain{u}$ is lower than the current top element in $Q$, then $\gain{u}$ may not be the largest marginal gain, and we insert $u$ back to $Q$ and repeat the process (Line xx).
Otherwise, we can directly mark $u$ as a seed and finish this round (Line xx). We do not need to evaluate further vertices in $Q$ since their true marginal gain can only be lower than the value stored in $Q$.
Existing results showed that CELF greatly reduces the number of influence evaluations.
}
Although CELF can reduce the number of evaluations, it is essentially sequential and evaluates vertices one by one.
In \cref{sec:selecting}, we present our new data structures that allow for parallel evaluations in CELF.

\hide{

The sketch-based algorithms pre-compute $R$ sketches, each storing some information of an MC simulation.
The sketch-based algorithms are derivatives of the \textsc{StaticGreedy} algorithm\yan{cite}, which samples graphs based on the diffusion model and computes marginal influence on them.
The sampled graphs are also called snapshots.
In the IC model, each snapshot $G'=(V,E')$ is generated by selecting each edge $(u,v)\in E$ with probability $p_{u,v}$.
In addition to the sampled graph, the sketch-based  algorithms~\cite{ohsaka2014fast,gokturk2020boosting} choose to store some other useful information (like strongly connected components in PMC~\cite{ohsaka2014fast}, and connected components in Infuser~\cite{gokturk2020boosting}) in the sketch to accelerate influence evaluation.
}

\section{Space-Efficient Sketches}\label{sec:compact_sketching}

\begin{figure*}
    \centering 
    \includegraphics[width=0.95\textwidth]{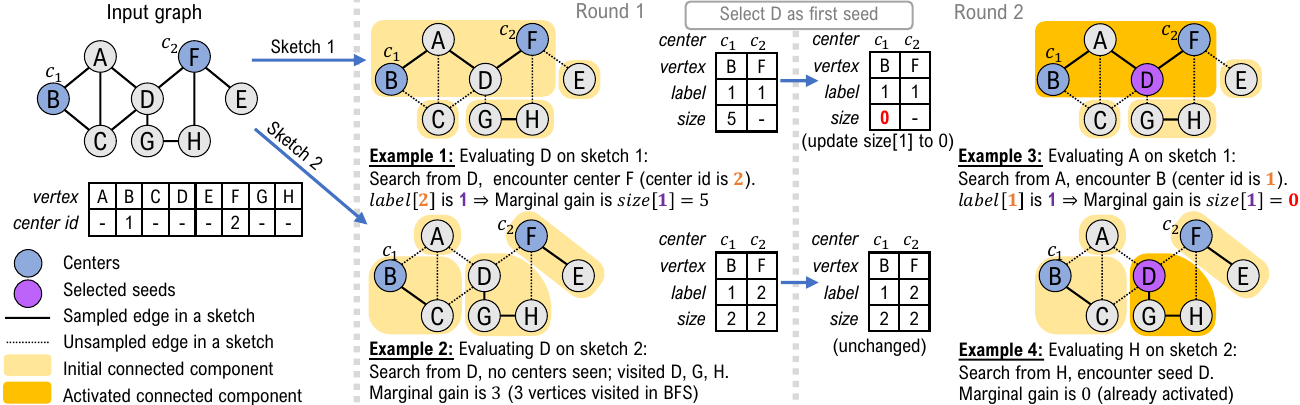}\vspace{.5em}
    \caption{
        \textbf{An example of our sketch compression} on a graph with 8 vertices (B and F as centers) and $R=2$ sampled graphs. 
        \label{fig:sketch}
    }
\end{figure*}

\begin{algorithm}[t]
\small
\caption{Our sketch algorithm with compression \label{alg:oursketch} }
\SetKwInput{Global}{Global Variables}
\SetKwInput{Notes}{Notes}
\Global{$G=(V,E)$: input graph; $R$: number of sketches\\
\DontPrintSemicolon
\pushline
$C=\{c_1,c_2,\dots,c_{\numcenters}\}\subseteq V$: randomly selected \emph{centers}. $\numcenters=|C|$
}
\Notes{A sketch $\sketch{r}$ is a triple $\langle r,\ccsize[1..\numcenters],\lbl[1..\numcenters]\rangle$, defined at the beginning of \cref{sec:compact_sketching}.\\
\hide{
\pushline
        $r$: the id of the sketch. \\
	$\ccsize[1..\numcenters]$: $\ccsize[i]$ is the connect component size of center $i$. \\
        $\lbl[1..\numcenters]$: $\lbl[i]$ is the connect component id of center $i$. }
}
\SetKwProg{myfunc}{Function}{}{}
\SetKwFor{parForEach}{ParallelForEach}{do}{endfor}
\vspace{.05in}

\myfunc(\tcp*[f]{$r$: sketch id }){\upshape \textsc{Sketch}($r$)
	} {
\hide{
\tcp{$L[v]$ maintains a label for each vertex to track its connectivity information with the centers}
  \tcp{the initial label $L[v]$ is the id of any center connected with $v$}
\lparForEach{$v \in V$}{
  $ L[v]\gets \text{\getcenter{}}(r,v)$
}
\tcp{parallel Union-Find}
\parForEach{$(u,v)\in E$ \label{line:union}} {
    \If {\textsc{Sample}$(u,v,r)$ and $L[u]> 0$ and $L[v]> 0$} {
    \lIf {$\textsc{Find}(L[u]) \neq \textsc{Find}(L[v])$}{
        \textsc{Union}$(L[u],L[v])$
    }}
}
\lparForEach{$v\in V$ s.t. $L[v]> 0$} {
$L[v]\gets$\textsc{Find}($L[v]$) \label{line:find}
}
$h \gets$  \textsc{Histogram}$(L)$\tcp*[f]{$h[i]$: \#occurrences of $i$ in array $L[\cdot]$}\\
}
Compute the connected components of graph $G'=(V,E')$, where $E=\{(u,v)~|~(u,v) \in E, \textsc{Sample}(u,v,r)=\true\}$\\
\parForEach{$c_i \in C$}{
$\lbl[i] \gets \min_j\{c_j \text{~is in the same CC as~} c_i\}$\\

}
\DontPrintSemicolon
\parForEach{$c_i \in C$}{
\lIf{$\lbl[i]=i$}{$\ccsize[i] \gets$ the CC size of center $c_i$}}
\Return $\langle r, \lbl[1..\numcenters],\ccsize[1..\numcenters]\rangle$
}

\codeskip
\tcp{sample an edge $e$ with probability $p_{e}$ for sketch $r$}
\myfunc(\tcp*[f]{$e$: edge identifier; $r$: sketch identifier}){\sc{Sample}($e,r$)}{
$p\gets\mathit{random}(e,r)$\tcp*[f]{Generate $p\in[0,1]$ from random seed $e,r$}\\
\Return {$p \leq p_{e}$}
}

\codeskip

\DontPrintSemicolon

\tcp{$\delta$: marginal influence of $v$ on sketch $\sketch{r}$. $l$: label of $v$'s CC in sketch $\sketch{r}$ if $v$ is connected to any center; otherwise $l=-1$.
$S$: the current seed set. }
\myfunc{\upshape $\langle \delta, l \rangle=$ \getcenter{}$(\sketch{r},v,S)$}{
Start BFS from $v$ on a sampled subgraph of $G$, where an edge $e\in E$ exists if $\textsc{Sample}(e,r)=\true$. Count the \#reached-vertices as $n'$\\
\If{a center $c_i\in C$ is encountered during BFS\label{line:found_center}}{
$l\gets\sketch{r}.\lbl[i]$\tcp*[f]{Find the label of the center}\\
\Return{$\langle \sketch{r}.\ccsize[l], l \rangle $} \tcp*[f]{the CC size and label of the center}
}
\lIf{any $v'\in S$ has been visited by BFS}{\Return{$\langle 0,-1 \rangle$}\label{line:nocenter}}
\lElse{\Return{$\langle n', -1\rangle $} \tcp*[f]{influence is \#visited vertices in BFS}}
}
\codeskip

\tcp{Marginal gain of $v$ given seed set $S$ on sketches $\sketch{1..R}$}
\myfunc{\sc{\marginal{}}($S,v,\sketch{1..R}$)}{
    \parForEach{$r\gets 1..R$}{
        $\langle \delta_r, \cdot \rangle \gets \getcenter{}(\sketch{r},v,S)$
    }
    \Return{$(\sum_{r=1}^{R}\delta_r)/R$} \tcp*[f]{the sum can be computed in parallel}
}

\codeskip
\myfunc{\sc{MarkSeed}$(v,\sketch{1..r},S)$} {
    \parForEach(\tcp*[f]{For each sketch $\sketch{r}$}){$r\gets 1..R$}{
        $\langle \delta, l \rangle\gets \getcenter(\sketch{r},v,S)$\\
        \lIf {$l>0$} {
        $\sketch{r}.\ccsize[l]\gets 0$ \tcp*[f]{Clear the corresponding CC size}
        }
    }
}

\end{algorithm}

\hide{
\begin{algorithm}[ht]
\small
\caption{Our sketch algorithm with compression \label{alg:oursketch} }
\SetKwInput{Global}{Global Variables}
\SetKwInput{Notes}{Notes}
\Global{$G=(V,E)$: the input graph\\
\pushline
$C=\{c_1,c_2,\dots,c_{\numcenters}\}\subseteq V$: randomly selected \emph{centers}. $\numcenters=|C|$\\
$R$: the number of sketches
}
\Notes{A sketch $\sketch{r}$ is a triple $\langle r,\ccsize[1..\numcenters],\lbl[1..\numcenters]\rangle$:\\
\pushline
        $r$: the id of the sketch. \\
	$\ccsize[1..\numcenters]$: $\ccsize[i]$ is the connect component size of center $i$. \\
        $\lbl[1..\numcenters]$: $\lbl[i]$ is the connect component id of center $i$. }
\SetKwProg{myfunc}{Function}{}{}
\SetKwFor{parForEach}{ParallelForEach}{do}{endfor}
\vspace{.05in}

\myfunc(\tcp*[f]{$r$: sketch id }){\upshape \textsc{Sketch}($G,r$)
	} {
\hide{
\tcp{$L[v]$ maintains a label for each vertex to track its connectivity information with the centers}
  \tcp{the initial label $L[v]$ is the id of any center connected with $v$}
\lparForEach{$v \in V$}{
  $ L[v]\gets \text{\getcenter{}}(r,v)$
}
\tcp{parallel Union-Find}
\parForEach{$(u,v)\in E$ \label{line:union}} {
    \If {\textsc{Sample}$(u,v,r)$ and $L[u]> 0$ and $L[v]> 0$} {
    \lIf {$\textsc{Find}(L[u]) \neq \textsc{Find}(L[v])$}{
        \textsc{Union}$(L[u],L[v])$
    }}
}
\lparForEach{$v\in V$ s.t. $L[v]> 0$} {
$L[v]\gets$\textsc{Find}($L[v]$) \label{line:find}
}
$h \gets$  \textsc{Histogram}$(L)$\tcp*[f]{$h[i]$: \#occurrences of $i$ in array $L[\cdot]$}\\
}
Compute the connected components of graph $G'=(V,E')$, where $E=\{(u,v)~|~(u,v) \in E, \textsc{Sample}(u,v)=\true\}$\\
\parForEach{$c_i \in C$}{
	$\ccsize[i] \gets$ the connected component size of center $c_i$\\
        $\lbl[i] \gets \min_j\{c_j \text{~is in the same CC as~} c_i\}$
}
\Return $\langle r,\ccsize[1..\numcenters], \lbl[1..\numcenters]\rangle$
}

\codeskip

\tcp{sample an edge $(u,v)$ with probability $w_{u,v}$ for sketch $r$}
\myfunc{\sc{Sample}($u,v,r$)}{
$p\gets\mathit{random}(u,v,r)$\tcp*[f]{Generate $p\in[0,1]$ from seed $u,v,r$}\\
\Return {$(p \leq p_{u,v})$}
}

\codeskip

\tcp{For sketch $r$, compute the \textbf{center id} for vertex $v$. If no such center found, return a negative value $-1\times$(\#vertices connected with $v$)}
\myfunc{\upshape \getcenter{}$(r,v)$}{
Start BFS from $v$ in $G$, but skip edge $(u,v)$ if \textsc{Sample}$(u,v,r)$ is \false{}. Count \#visited vertices in BFS as $n'$\\
If any seed is encountered during BFS, \Return{0}\\
Stop BFS when a center $c_i\in C$ is encountered and \Return{$i$} (the center id). \label{line:found_center} \\
\Return{$-n'$} \tcp*[f]{return $-n'$ if BFS stops without finding a center}
\hide{
Start BFS from $v$ in $G'$\\
\lIf{any center $c$ encountered during BFS} {\Return{$c$}}
\lElse{
\Return{$-n'$}, where $n'$ is \#vertices visited by BFS
}
}
}
\codeskip

\tcp{Marginal gain of $v$ given seed set $S$ on sketches $\sketch{1..R}$}
\myfunc{\sc{\marginal{}}($S,v,\sketch{1..R}$)}{
    $t\gets 0$\\
    \parForEach{$r\gets 1..R$}{
        $x \gets \getcenter{}(r,v)$\\
        \If(\tcp*[f]{Center encountered}) {$x>0$} {
        $l\gets\sketch{r}.\lbl[x]$\\
        $t\gets t+ \sketch{r}.\ccsize[l]$\tcp*[f]{Get CC size by the center id}
        }
        \lElse
        {
        $t\gets t+(-x)$ \tcp*[f]{CC size = \#visited vertices}
        }
    }
    \Return{$t/R$}
}

\codeskip
\myfunc{\sc{MarkAsSeed}$(v)$} {
    \parForEach(\tcp*[f]{For each sketch $\sketch{r}$}){$r\gets 1..R$}{
        $x\gets \getcenter(r,v)$\\
        \If {$x>0$} {
        $l\gets\sketch{r}.\lbl[x]$\\
        $\sketch{r}.\ccsize[l]\gets 0$ \tcp*[f]{Set the corresponding CC size as 0}
        }
    }
}

\end{algorithm}
}

\hide{
\SetSideCommentLeft
\begin{algorithm}[t]
\SetNoFillComment
\small
\caption{Our sketch algorithm with compaction}
\KwIn{$G'=(V,E')$: the sampled influence graph\\}
\KwOut{ ~\\
\pushline
A sketch containing the following information $\langle G',C,s[1..|C|]\rangle$:\\
        $G'$: the sampled graph\\
        $C$: a set of centers\\
	$s$: $s_i$ is the connect component size of center $i$}
\SetKwProg{myfunc}{Function}{}{}
\SetKwFor{parForEach}{ParallelForEach}{}{endfor}
\vspace{.05in}
\tcp{$C\in V$ is a set of \emph{centers} from the input graph }
\myfunc{Sketch$(G',C)$} {
  return $(G', \mathit{CC-comact()})$
}
\vspace{.05in}
\tcp{$C$: selected centers}
\myfunc{CC\_Compact($G',C$)
	} {
\parForEach{$v \in V$}{
	compute the nearest center to $v$: $ L[v]\gets FindCenter(G',C,v)$
}
\parForEach{$(u,v)\in E'$ \Comment{parallel Union-Find}} {
    \If {$\textsc{Find}(L[u]) \neq \textsc{Find}(L[v])$}{
        $L[u] \gets \textsc{Find}(L[v])$ \Comment{Union sets $L[u]$ and $L[v]$}\\
    }
}
$hist \gets$  \textsc{Histogram}$(L)$\\
\parForEach{$c_i \in C$}{
	$s_i = hist[L[c_i]]$
}
\Return $s$
}

\myfunc{\sc{GetInfluence}($S, \sketch{r},u,C$)}{
    \If{$u\in S$}{
        \Return 0
    }
    \If {$u \in C$}{
        $i \gets c_i = u $ and $c_i \in C$\\
        \Return $\sketch_r[i]$
    }\Else{
        $c_i \gets \textsc{FindCenter}(G',C,u)$\\
        \If {$c_i$ exists}{
            \Return $\sketch_r[i]$
        }\Else{
            \If {meet seeds during \textsc{FindCenter}}{
                \Return 0
            }\Else{
                \Return |vertices visited during \textsc{FindCenter}|
            }
        }
    }
}

\end{algorithm}
}

\hide{
\SetSideCommentLeft
\begin{algorithm}[t]
\SetNoFillComment
\small
\caption{Our sketch algorithm with compaction}
\SetKwInOut{Global}{Global Variable}
\Global{$G=(V,E)$: the input graph}
\KwIn{\\
\pushline
$r$: the id of the sketch
}
\KwOut{ ~\\
\pushline
A sketch $\sketch[r]$, which is a triple $\langle r,C,\ccsize[1..\numcenters]\rangle$:\\
        $r$: the label of the sketch. It can be used to reconstruct the sampled graph for this sketch\\
        $C$: a set of centers. $|C|=\numcenters$\\
	$\ccsize[1..\numcenters]$: $\ccsize[i]$ is the connect component size of center $i$. \\
        $\lbl[1..\numcenters]$: $\lbl[i]$ is the connect component size of center $i$. }
\SetKwProg{myfunc}{Function}{}{}
\SetKwFor{parForEach}{ParallelForEach}{do}{endfor}
\vspace{.05in}
\myfunc{\upshape \textsc{Sketch}$(G,r)$} {
  $C\gets$ select $\numcenters$ random \textit{centers} from $V$\\
  \Return $\langle r,C, \text{\textsc{\compactcc{}}}(r,C)\rangle$
}

\codeskip
\tcp{$r$: sketch id. $C$: selected centers.}
\myfunc{\upshape \textsc{\compactcc{}}($r,C$)
	} {
\tcp{$L[v]$ maintains a label for each vertex to track its connectivity information with the centers}
\parForEach{$v \in V$}{
  \tcp{the initial label $L[v]$ is the id of any center connected with $v$}
  $ L[v]\gets \text{\getcenter{}}(r,C,v)$
}
\tcp{parallel Union-Find}
\parForEach{$(u,v)\in E$} {
    \If {\textsc{Sample}$(u,v,r)$ and $L[u]> 0$ and $L[v]> 0$} {
    \lIf {$\textsc{Find}(L[u]) \neq \textsc{Find}(L[v])$}{
        \textsc{Union}$(L[u],L[v])$
    }}
}
\lparForEach{$v\in V$ s.t. $L[v]> 0$} {
$L[v]\gets$\textsc{Find}($L[v]$)
}
\tcp{$h[i]$: \#occurrences of $i$ in array $L[\cdot]$}
$h \gets$  \textsc{Histogram}$(L)$\\
\parForEach{$c_i \in C$}{
	$\ccsize[i] \gets h[c_i]$\\
        $\lbl[i] \gets L[c_i]$
}
\Return $\langle \ccsize[\cdot], \lbl[\cdot]\rangle$
}

\codeskip

\tcp{For sketch $r$, compute the \textbf{center id} for vertex $v$. If no such center found, return a negative value $-1\times$(\#vertices connected with $v$)}
\myfunc{\upshape \getcenter{}$(r,C,v)$}{
Start BFS from $v$ in $G$, but skip edge $(u,v)$ if \textsc{Sample}$(u,v,r)$ is not true. Count \#visited vertices by BFS as $n'$\\
Stop BFS when a center $c_i\in C$ is encountered and \Return{$i$} (the center id). \\
\Return{$-n'$} \tcp*[f] {\tcp*[f] return $-n'$ if BFS stops without finding a center}
\hide{
Start BFS from $v$ in $G'$\\
\lIf{any center $c$ encountered during BFS} {\Return{$c$}}
\lElse{
\Return{$-n'$}, where $n'$ is \#vertices visited by BFS
}
}
}
\codeskip

\tcp{sample an edge $(u,v)$ with probability $w_{u,v}$ for sketch $r$}
\myfunc{\sc{Sample}($u,v,r$)}{
$p\gets\mathit{random}(u,v,r)$\tcp*[f]{Generate $p\in[0,1]$ from seed $u,v,r$}\\
\Return {$(p \leq w_{u,v})$}
}

\codeskip

\myfunc{\sc{\marginal{}}($S,v,\sketch[1..R]$)}{
    $t\gets 0$\\
    \For{$r\gets 1..R$}{
        $c \gets \getcenter{}(r,C,v)$\\
        \If {$c>0$} {
        $l\gets\sketch[r].\lbl[c]$\\
        $t\gets t+ \sketch[r].\ccsize[l]$}
        \ElseIf{no seed $u\in S$ encountered in \getcenter}
        {
        $t\gets t+(-c)$
        }
        }
}

\codeskip
\myfunc{\sc{MarkAsSeed}$(v)$} {
    \For(\tcp*[f]{For each sketch $\sketch[r]$}){$r\gets 1..R$}{
    $c\gets \getcenter(r,\sketch[r].C,v)$\\
    \lIf {$c>0$} {
    $l\gets\sketch[r].\lbl[c]$
    $\sketch[r].\ccsize[l]\gets 0$ \tcp*[f]{set the corresponding CC size as 0}
    }
    }
}

\end{algorithm}

}

This section presents our new technique to construct compressed sketches in parallel for the IC model on undirected graphs.
In this setting, as discussed in \cref{sec:prelim}, a vertex $v$ can activate all vertices in the same CC on a certain sketch.
Thus, \emph{memoizing} per-vertex CC information in sketches~\cite{gokturk2020boosting} can accelerate the influence evaluation but requires $O(Rn)$ space, which does not scale to large input graphs.
Alternatively, one can avoid memoization and run a \emph{simulation} by traversing the sampled graph to find the CC when needed.
This requires no auxiliary space but can take significant time.

Our approach combines the benefit of both, using bounded-size auxiliary space while allowing for efficiency.
Our key idea is to only \emph{partially memoize} CC information in each sketch,
and retrieve this information by \emph{partial simulations}.
In \oursystem{}, we only store the CC information for $\numcenters=\rate \cdot n$ \defn{center} vertices $C\subseteq V$, where $\rate\in[0,1]$ is a user-defined parameter.
Each sketch $\sketch{r}$ is a triple $\langle r,\lbl[\cdot],\ccsize[\cdot]\rangle$ corresponding to an implicit sampled graph $G'_r$ from $G$, where each edge $e=(u,v)$ is retained with probability $p_{e}$.

\begin{itemize}[leftmargin=*,noitemsep,topsep=.3em]
    \item ${r}$: the sketch id. Similar to the fusion idea mentioned in \cref{sec:prelim}, the sketch id fully represents the sampled graph.
    \item ${\lbl[1..\numcenters]}$: the CC label of center $c_i$ on this sketch, 
    which is the smallest center id $j$ where $c_j$ is in the same CC as $c_i$.
    \item ${\ccsize[1..\numcenters]}$: if $\lbl[i]=i$ (i.e., $i$ represents the label of its CC),
    $\ccsize[i]$ is the influence of center $c_i$ on this sketch. It is initially the CC size of $c_i$ and becomes 0 when any vertex in this CC is selected as a seed.

\end{itemize}

\hide{
The total space is $O(\numcenters)$ for each sketch.
In addition, we also store a boolean flag for each vertex to indicate if it is a center.
Therefore, the total memory usage for storing $R$ sketches is $O(R\numcenters + n)$ or $O((1+\rate R)n)$.
}
\myparagraph{Algorithm Overview}.
We first present the high-level idea of our sketch compression algorithm. 
We present our algorithm in \cref{alg:oursketch} and an illustration in \cref{fig:sketch}. 
As mentioned, we select $\rho=\rate n$ center vertices {uniformly at random}. 
We only store the CC information (label and size) for the centers in sketches. 
We use a global flag array to indicate if a vertex is a center,
and thus the total space is $O((1+\rate R)n)$.
To retrieve the CC size of a vertex $v$ on sketch $\sketch{r}$, we start a breadth-first search (BFS) from $v$.
If any center $c_i$ is encountered, $v$ should activate the same set of vertices as $c_i$ on this sketch.
As such, we can stop searching and use $c_i$'s influence (CC size) as the influence for~$v$. 
If $v$ is not connected to any center,
the CC containing $v$ is likely small, and the BFS can visit all of them quickly.
In either case, the number of visited vertices in BFS can be bounded.
In \cref{thm:sketch}, we show that compressing the auxiliary space by a factor of $\rate$ roughly increases the evaluation time by a factor of $O(1/\rate)$.
By controlling the number of centers,
we can achieve a tradeoff between the 
evaluation time and space usage.

Next, we elaborate on the three functions in \cref{alg:sketch_framework}: \sketchfunc{}$(G,r)$, which constructs the $r$-th sketch from $G$, \marginal$(S,v,\sketch{1..R})$, which computes the score (marginal gain) of a vertex $v$ on top of $S$ using sketches $\sketch{1..R}$, and \textsc{MarkSeed}$(s^*, \sketch{1..R})$, which adjusts the sketches $\sketch{1..R}$ when $s^*$ is selected as a seed.

\myparagraph{Sketch Construction (\sketchfunc{}$(G,r)$).} 
\hide{We start by selecting $\numcenters=\alpha n$ center vertices $C=\{c_1,c_2,\dots,c_{\numcenters}\}$ uniformly at random. We will only maintain the CC information of the $\numcenters$ centers on all sketches.}
Recall that we maintain CC information for $\rho=\rate n$ centers $C=\{c_1,c_2,\dots,c_{\numcenters}\}$ in sketches. 
To construct a sketch $\sketch{r}$, we first compute the CC information of the sampled graph $G'_r$, which can be performed by any parallel connectivity algorithm~\cite{dhulipala2020connectit}. We store the CC information for all centers in two arrays.
$\sketch{r}.\lbl[i]$ records the label of CC of the $i$-th center.
For multiple centers in the same CC, we simply use the smallest CC id as the label for all of them to represent this CC, so all centers find the CC information by referring to their label.
For a center $c_i$, If $i$ is the label of its CC, we use $\sketch{r}.\ccsize[i]$ to record the size of this CC.
An example of these arrays is given in \cref{fig:sketch}.

\myparagraph{Computing the Marginal Gain (\marginal$(S,v,\sketch{1..R})$).} Given the sketches $\sketch{1..R}$ and the current seed set $S$, the function \textsc{\marginal}$(S,v,\sketch{1..R})$ computes the marginal gain of a vertex $v$ by averaging the marginal gains of $v$ on all sketches.
We use a helper function $\langle \delta,l \rangle=$ \getcenter{}$(\sketch{r},v,S)$,
which returns $\delta$ as the marginal gain of $v$ on sketch $\sketch{r}$,
and $l$ as the label of centers connected to $v$ ($l=-1$ if no center is connected to $v$).
This function will run a breadth-first search (BFS) from $v$ on $G'_r$ (i.e., only using edges $e\in E$ s.t.\ \textsc{Sample}$(e,r)$ is \true{}).
If any center is encountered during this BFS (Examples 1 and 3 in \cref{fig:sketch}),
then the influence of $v$ is the same as $c_i$ on this sketch.
The information of $c_i$ is retrieved by its label $l=\sketch{r}.\lbl[i]$, and thus $\delta=\sketch{r}.\ccsize[l]$.
The influence $\delta$ is either the size of the CC containing $v$ when no vertices in this CC are seeds (Example 1 in \cref{fig:sketch}),
or 0 otherwise, as is updated in \textsc{MarkSeed} (Example 3 in \cref{fig:sketch}).
Otherwise, if the BFS terminates without visiting any centers, it will return -1 for the label $l$.
The influence $\delta$ is either the number of vertices $n'$ visited during the BFS, which is also the size of CC containing $v$ (Example 2 in \cref{fig:sketch}), or $\delta=0$ (\cref{line:nocenter}) if any seed is visited during BFS (Example 4 ).
\hide{
If any seed is visited during BFS, then the marginal gain is $\delta=0$ (\cref{line:nocenter}, Example 4 in \cref{fig:sketch}).
Otherwise, $\delta$ is the number of vertices $n'$ visited during BFS, which is also the size of CC containing $v$ (Example 2 in \cref{fig:sketch}).
}
Using the \getcenter{} function, the marginal influence of $v$ on sketch $\sketch{r}$ can be obtained as the first return value $\delta$ of \getcenter{}$(\sketch{r},v,S)$.

\myparagraph{Marking a Seed (\textsc{MarkSeed}$(s^*,\sketch{1..R})$).} The function updates the sketches when $s^*\in V$ is selected as a seed.
For each sketch~$\sketch{r}$, the CC label of $s^*$ is the second return value of \getcenter{}$(\sketch{r},s^*,S)$.
If the label is not $-1$, we set $\sketch{}[r].\ccsize[l]$ as 0---since $s^*$ is selected, all other vertices in this CC will get no marginal gain on this sketch.

\hide{
\myparagraph{Selecting Centers. } We will start our algorithm by selecting $\numcenters$ \defn{center} vertices in the graph. In our sketches, we will only maintain the influence information from the $\numcenters$ centers to save space.
For all sketches, we use the same set of centers $C=\{c_1,c_2,\dots,c_{\numcenters}\}$ that are selected uniformly random from $V$.

\myparagraph{The \getcenter{} Algorithm.} With the centers selected, we first present a useful primitive \getcenter{}$(r,v)$.
Given the $r$-th sketch, this function returns the connectivity information for vertex $v$. If $v$ is connected to (can influence) any center $c_i$ on $G'_r$, this means that the influence of $v$ is the same as $c_i$ on this sketch. Therefore the function will return the center id $i$.
Otherwise, it will compute the size $n'$ of the connected component containing $v$, and return $-n'$, which indicates that the influence of $v$ on $G'$ is $-n'$. To get the return value, we start a breadth-first search (BFS) from $v$ on $G'_r$ (i.e., skipping edges $(u,v)$ s.t. \textsc{Sample}$(u,v,r)$ is \False{}). If any center is encountered during the BFS, we can stop and return the center id (line \ref{line:found_center}).
If $v$ is not connected to any center, the BFS will finish after visiting all vertices connected to $v$. Therefore, we can record the total number of vertices visited in the BFS as $n'$ and return $-n'$.

\myparagraph{Computing the Influence of Each Center.} For sketch $\sketch{r}$, the initial influence of center $c_i$ is its the connected component size on the sampled graph $G'_r$.
As such, we use a parallel union-find algorithm to compute this (Lines \ref{line:union}-\ref{line:find}).
For each edge $(u,v)$, this means combining the set of the labels $L[u]$ and $L[v]$.
If any of them is not connected to any center ($L[\cdot]<0$), then we ignore this edge since it does not impact the connectivity of the centers. Finally, the array $L[\cdot]$ stores the connected component id (which is also the id of the representative center) of each vertex or a negative value indicating that $v$ is not connected with any center. We then count the occurrences of each (positive) value $i$ in $L[\cdot]$ in $h[i]$, indicating the size of the connected component $i$ in this sketch.
Finally, in sketch $\sketch{r}$, we also need to record the labels of all centers in $\sketch{r}.\lbl[\cdot]$. This is because multiple centers can also be connected with each other, and we need to use the same CC id for all of them.

\myparagraph{Mark a Seed.} When a vertex $v$ is selected as a seed, we use the \textsc{MarkSeed} to update the sketches. For each sketch $\sketch{r}$, we will find the CC id of $v$ by $x\gets$\getcenter{}$(r,v)$. If $x$ is positive, we will use $\sketch{r}.\lbl[\cdot]$ to look up the corresponding CC label $l$.
Since $v$ is selected, all other vertices in the same CC will not get any marginal gain on this sketch, and thus we set $\sketch{}[r].\ccsize[l]$ as 0.

\myparagraph{Compute the Marginal Gain.} The function \textsc{\marginal}$(S,v,\sketch{1..R})$ computes the marginal gain of a vertex $v$ on sketches $\sketch{1..R}$ given the current seed set $S$.
We will enumerate all sketches and sum up the marginal gain of $v$ for all of them. For sketch $\sketch{r}$, we will find a center connected with $v$ by $x\gets$\getcenter{}$(r,v)$. If $x$ is positive, we will use $\sketch{r}.\lbl[\cdot]$ to look up the corresponding CC label $l$.
The marginal gain of $v$ is then $\sketch{r}.\lbl[l]$. This is either the size of the corresponding CC when no other vertices in the same CC are selected as seeds or 0 otherwise (as is updated in \textsc{MarkSeed}).
If $x$ is negative, $-x=|\CC{v}|$ is the size of the CC containing $v$.
$v$ should activate all $-x$ vertices in $\CC{v}$. If we encounter any seed in this process, the influence of $v$ is 0 on this sketch. Otherwise (Line xx), the influence of $v$ is $-x$.
}


Our approach allows for a tradeoff between space and query time in \marginal{}$()$: a smaller $\rho$ (fewer centers) means less space but a higher evaluation cost, as it may take longer to find a center.
\oursystem{} unifies and is a hybrid of \staticgreedy{} and \infuser{}.
Theoretically, using $\rate=1$, our sketch is equivalent to \infuser{} where the CC information for all vertices on all sketches are memoized;
using $\rate=0$, our sketch is equivalent to \staticgreedy{} with no memoization,
and evaluations are done by traversing the sampled graph.
In practice, \oursystem{} is much faster than \staticgreedy{} and \infuser{} even when with no compression due to better parallelism. 
We summarize the theoretical guarantees in \cref{tab:related}, and state them in \cref{thm:sketch}.
\ifconference{We give the proof in the full version of this paper. }

\begin{theorem}\label{thm:sketch}
    \oursystem{} with parameter $\rate$ requires $O((1+\rate R)n)$ space to maintain $R$ sketches,
    and visits $O(R\cdot \min(1/\rate,T))$ vertices to re-evaluate the marginal gain of one vertex $v$,
    where $T$ is the average CC size of $v$ on all sketches.
\end{theorem}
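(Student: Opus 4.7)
The plan is to split the theorem into the space bound and the time bound, and handle them independently. The space bound follows directly from the description of the data structure: each of the $R$ sketches stores only the id $r$ plus the two arrays $\lbl[1..\numcenters]$ and $\ccsize[1..\numcenters]$, each of length $\numcenters = \rate n$, contributing $O(\rate n)$ per sketch and $O(\rate R n)$ in total. On top of that, one global bit vector of length $n$ records which vertices are centers. Summing the two pieces gives $O((1+\rate R)n)$, matching the claim.

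For the per-sketch evaluation cost, fix a sketch $\sketch{r}$ and let $T_r$ denote the size of the CC containing $v$ in the implicit sampled graph $G'_r$. The BFS inside $\getcenter{}(\sketch{r},v,S)$ visits each vertex of $v$'s CC at most once and halts either when the CC is exhausted or when the first center is discovered, which already yields the bound $T_r$. For the bound $O(1/\rate)$, I would exploit the fact that $C$ is drawn uniformly at random from $V$ and is statistically independent of the sampled graph $G'_r$ (and hence of the order in which BFS explores vertices). By exchangeability, for any fixed enumeration of $t$ candidate vertices, the index of the first element that belongs to $C$ has expectation at most $(t+1)/(\numcenters+1) = O(1/\rate)$. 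Applying this to the order induced by BFS shows that, in expectation, BFS terminates after $O(1/\rate)$ vertex visits whenever $v$'s CC contains at least one center. Combining the two gives per-sketch expected cost $O(\min(1/\rate, T_r))$.

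To aggregate over all $R$ sketches, I would sum the per-sketch expectations and then use concavity: $\sum_{r=1}^R \min(1/\rate, T_r) \le R \cdot \min\bigl(1/\rate, \bar{T}\bigr)$ by Jensen's inequality, where $\bar{T} = \tfrac{1}{R}\sum_r T_r$ is exactly the quantity called $T$ in the theorem statement. This yields the $O(R \cdot \min(1/\rate, T))$ vertex-visit bound.

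The main obstacle is the probabilistic step that BFS discovers a center after $O(1/\rate)$ vertices in expectation. The adaptivity of BFS, whose order depends on the random sampled subgraph, looks at first glance incompatible with a clean expectation. The key observation is that the randomness used to choose $C$ is \emph{decoupled} from the randomness used to sample $G'_r$: conditioned on any exploration sequence BFS might follow, each visited vertex is a center with marginal probability $\rate$ under the uniform random subset model, so the index of the first center is stochastically dominated by a geometric variable with mean $1/\rate$. Making this decoupling precise via exchangeability of uniformly random subsets of size $\numcenters$ is the one nonroutine step; the rest of the argument is bookkeeping.
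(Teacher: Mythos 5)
Your proposal is correct and follows essentially the same route as the paper's proof: the space bound is read off the data-structure layout, and the evaluation cost is bounded per sketch by $\min(T_r,1/\rate)$ using the fact that each BFS-visited vertex is a center with probability $\rate$, then summed over sketches. Your treatment is in fact slightly more careful than the paper's in two places — making the independence of $C$ from the BFS order explicit via exchangeability, and replacing the paper's informal $\sum_r \min(T_r,1/\rate)=\min(RT,R/\rate)$ step with a proper concavity (Jensen) argument — but these are refinements of the same argument, not a different approach.
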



\iffullversion{\begin{proof}
\hide{
                 Assume that no connectivity is stored in the sketches, then running a simulation for a vertex will visit $RT$ vertices, based on the definition of $T$.
                 This is also what the baseline algorithm \staticgreedy{} does.
                 We now consider the case that $\rho=\alpha n$ centers are selected.

                 On sketch $r$, assume the CC size of $v$ is $T_r$. Therefore, the BFS will visit all $T_r$ vertices in a certain order. Such BFS will terminate either 1) a center is encountered or 2) all $T_r$ vertices have been visited.
                 For case 1, each visited vertex has probability $\rate$ to be a center, since the centers are picked independently and uniformly at random. Therefore, the expected number of visited vertex before termination in the BFS is $\sum_{i=1}^{T_r}(1-\rate)^{i-1}\rate\cdot i$, which solves to $1/\rate$.
                 For case 2, at most $T_r$ vertices will be visited.
                 Therefore, the cost of evaluating $v$ on sketch $r$ is $O(\min(1/\rate, T_r))$. Considering $T=(\sum_{r=1}{R}T_r)/R$, adding the bound for all $R$ sketches gives the stated bound.
}
Assume that no connectivity is stored in the sketches, then running a simulation for a vertex will visit $RT$ vertices, based on the definition of $T$.
This is also what the baseline algorithm \staticgreedy{} does.
We now consider the case that $\rho=\alpha n$ centers are selected.
Note that all centers are picked independently and uniformly at random.
This means that each visited vertex in the BFS has the probability of $\rate$ that is a center.
Let us first focus on a search on a specific sketch $r$, and assume the CC size of $v$ is $T_r$.
The search terminates either 1) all $T_r$ vertices have been visited, or 2) a center is encountered.
Therefore, when visiting the $T'$ vertices in the BFS order, each of them stops the search with a probability of $\rate$.
The expected number of visited vertices in the BFS is therefore also bounded by $\sum_{i=1}^{\infty}(1-\alpha)^{i-1}\alpha\cdot i$, which solves to $1/\alpha$.
The total number of visited vertices in all BFS is $\sum_r \min(T_r,1/\alpha)=\min(RT,R/\alpha)=R\cdot \min(T,1/\alpha)$.
\end{proof}
}

\iffullversion{
  Note that the centers are selected uniformly at random to bound the number of visited vertices in \cref{thm:sketch}. 
  \hide{
  There may be other center selection heuristics, such as using betweenness centrality or PageRank centrality. 
  We leave this as future work. 
  }
  It is possible to use other center selection heuristics, such as betweenness or PageRank centrality. We leave this as future work. 
}

\hide{
\begin{theorem}
  On undirect graphs with IC model, using sketches generated in \cref{alg:oursketch} and a greedy seed selection approach gives the same result as the \textsc{StaticGreedy} algorithm.
\end{theorem}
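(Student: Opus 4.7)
The plan is to reduce the theorem to a per-sketch equivalence: on each of the $R$ sampled graphs, our \marginal{} function returns exactly the value that \staticgreedy{} would compute by direct simulation on that same graph; since both algorithms average over the same $R$ sampled graphs, they then select identical seeds under a common tiebreaking rule. First I would note that both algorithms construct $G'_1,\dots,G'_R$ via the hash-based sampling scheme of \cref{sec:prelim}: the sampled graph $G'_r$ is fully determined by the sketch id $r$ and by \textsc{Sample}$(e,r)$, so the two algorithms operate on identical sampled graphs.

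The heart of the proof is a lemma: for any current seed set $S$, any vertex $v$, and any sketch $\sketch{r}$, the value $\delta$ returned by \getcenter{}$(\sketch{r},v,S)$ equals the true marginal gain of $v$ on $G'_r$ given $S$, namely $|\CC{v}|$ if no vertex of $v$'s CC lies in $S$ and $0$ otherwise. I would prove this by induction over the sequence of seeds inserted, maintaining the invariant that after each call to \textsc{MarkSeed}$(s,\sketch{1..R},S)$, the entry $\sketch{r}.\ccsize[l]$ equals $0$ whenever $l$ is the label of a CC in $G'_r$ that already contains some seed and at least one center, and still equals the initial CC size otherwise. The invariant follows from the two observations that (i)~the CC labels $\lbl[\cdot]$ are assigned so that all centers in the same CC share the smallest center id as their label, and (ii)~any BFS from within a CC on $G'_r$ is performed on the same sampled edge set as the one used to build $\sketch{r}$, so reachability agrees with the precomputed CC structure.

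For the lemma itself I would do a four-way case split on whether the BFS from $v$ encounters a center and whether $v$'s CC contains a seed. The clean cases are (a) no center, no seed in the CC, where the BFS exhausts the CC and returns $n'=|\CC{v}|$, and (b) a center is found and no seed lies in the CC, where $\sketch{r}.\ccsize[\lbl[i]]$ is untouched and equals $|\CC{v}|$. The subtle case is when $v$'s CC contains both a center and a previously chosen seed $s$: the BFS from $v$ may hit a center $c_i$ before reaching $s$, but by the invariant the earlier \textsc{MarkSeed}$(s,\cdot)$ call itself ran a BFS on the same CC and found \emph{some} center $c_j$; since $c_i$ and $c_j$ lie in the same CC, $\lbl[i]=\lbl[j]$, so $\ccsize[\lbl[i]]$ has already been zeroed to $0$, matching the true marginal gain. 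The remaining degenerate case, where the CC contains a seed but no center, is handled because the BFS from $v$ must visit $s\in S$ before terminating, triggering the return of $0$ at \cref{line:nocenter}.

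Once the per-sketch equivalence is in hand, the theorem follows by a straightforward induction on the seed selection loop of \cref{alg:sketch_framework}: if both algorithms have chosen the same seeds $s_1,\dots,s_{i-1}$, then every vertex's averaged marginal gain agrees, so both pick the same $s_i$ under a common tiebreaking rule. I expect the main obstacle to be the bookkeeping in the subtle case above, i.e., verifying that \textsc{MarkSeed}'s zeroing always occurs at the \emph{canonical} label of a CC regardless of which center its BFS happens to discover first; this relies crucially on the deterministic choice $\lbl[i]=\min\{j:c_j \text{ is in the same CC as }c_i\}$ set during \sketchfunc{}.
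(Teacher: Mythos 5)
Your proposal is correct. One thing to be aware of: the paper never actually proves this statement---the theorem appears only as a commented-out remark in the source (with an author note questioning whether to keep it), and the main text of \cref{sec:compact_sketching} merely asserts, without argument, that $\rate=0$ degenerates to \staticgreedy{} and $\rate=1$ to \infuser{}. So there is no paper proof to compare against; your write-up supplies the missing formalization, and it does so along the lines the paper's prose implicitly suggests. The load-bearing pieces are exactly the ones you identify: (i) the per-sketch lemma that \getcenter{}$(\sketch{r},v,S)$ returns $|\CC{v}|$ if $v$'s component of $G'_r$ contains no seed and $0$ otherwise; (ii) the invariant that \textsc{MarkSeed} zeroes $\ccsize$ precisely at the canonical label (the minimum center id) of any component containing both a seed and a center---your treatment of the subtle case, where the BFS from $v$ reaches a center before reaching the seed but the earlier \textsc{MarkSeed} BFS necessarily encountered \emph{some} center of the same component and hence zeroed the shared canonical entry, is the point that needs care and your case split covers it, including the centerless component handled by \cref{line:nocenter}; and (iii) the round-by-round induction under a fixed tie-breaking rule (the paper breaks ties by vertex id). Two small points worth making explicit rather than implicit: the equivalence is a coupling statement, i.e., it holds once both algorithms are run on the same $R$ sampled graphs (which the deterministic \textsc{Sample}$(e,r)$ hashing provides), since \staticgreedy{} as originally described samples its graphs independently; and canonical labels of distinct components are distinct because their center sets are disjoint, so zeroing one component's entry cannot corrupt another's---you rely on both facts and they are immediate, but they belong in the proof.
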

\yihan{Not sure if we want to keep the above theorem. Also not sure if we've introduced enough about staticgreedy such that this point is clear.}
}

\section{Parallel Seed Selection}\label{sec:selecting}

We now present the parallel seed selection process in \oursystem{}.
We call each iteration in seed selection (selecting one seed) a \defn{round}.
Recall that prior solutions use the CELF optimization (see \cref{tab:related,sec:prelim}), which maintains (possibly stale) scores of all vertices in a priority queue $Q$ and updates them lazily.
In each round, CELF pops the vertex with the highest (stale) score from $Q$ and re-evaluates it.
The process terminates when a newly evaluated score is higher than all scores in $Q$; otherwise, the new score will be inserted back into $Q$.
For simplicity, we assume no tie between scores.
In the cases when ties exist, we break the tie by vertex id.
Let $F_i=\{ v ~|~ \lazygaini{i-1}{v} \ge \Delta^*\}$ be the set of vertices re-evaluated by CELF in round $i$, where $\lazygaini{i-1}{v}$ is the stale score of $v$ after round $i-1$,
and $\deltas$ is the maximum true score in round $i$ (i.e., the score of the chosen seed in round $i$).
\hide{
Empirically, $|F_i|$ is usually large in the first several rounds, but can be rather small in later rounds, in which case CELF greatly reduces the number of re-evaluations.
}
\revision{
We experimentally study the distribution of $|F_i|$ on all graphs and show three representative graphs in \cref{fig:celf_evaluations}.
Except for road networks (e.g., GER in \cref{fig:celf_evaluations}),
most graphs may require a large number of re-evaluations in certain rounds.} 
However, CELF is inherently sequential.
Some existing parallel implementations (e.g.,~\cite{gokturk2020boosting, minutoli2019fast}) only parallelize the evaluation function $\marginal$ (\cref{line:marginal} in \cref{alg:CELF_framework}),
but leave the CELF process sequential, and perform all $|F_i|$ evaluations one by one.
When $F_i$ is large (as in social and web networks), the sequential CELF results in low parallelism. 
\begin{figure}[t]
  \centering
  \includegraphics[width=\columnwidth]{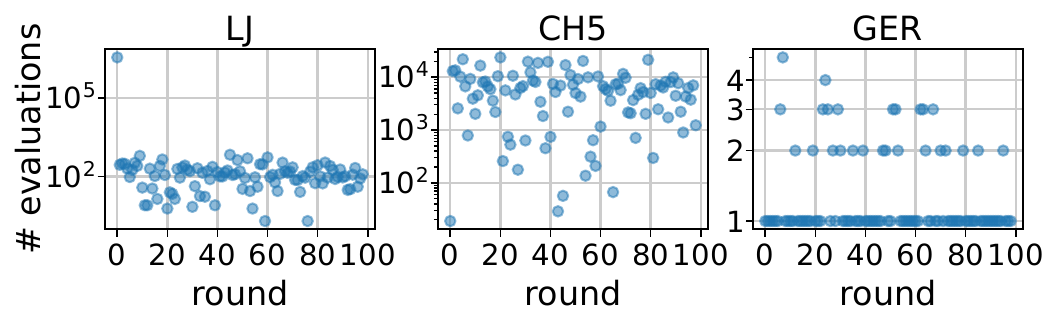}
  \caption{\small \revision{\textbf{The number of evaluations by CELF in each round.} A point $(x,y)$ means that CELF does $y$ re-evaluations in the $x$-th round.}}
  \label{fig:celf_evaluations}
\end{figure}


\hide{Usually, $|F_i|$ is very small. From our observation, it is usually large in the first several rounds (proportional to $n$), and small in the latter rounds (only tens to hundreds).
}

\hide{
Besides memory usage, another challenge of running IM on large-scale graphs is running time.
The sketch construction process (\cref{line:sketch_begin}-\cref{line:sketch_end} in \cref{alg:sketch_framework}) is easy to parallelize, while parallelizing the seed selection process with asymptotic the same work as CELF
(\cref{alg:CELF_framework}) is challenging.

Sequentially, CELF can greatly reduce the number of vertices needed in marginal gain re-evaluation. The set of vertices to be re-evaluated by CELF to select the $i$-th seed is $F_i=\{ v | \Delta_{i-1}[v] > \Delta^*\}$, where $\Delta_{i-1}[v]$ is the lazy-evaluated marginal value of $v$ before round $i$, and $\Delta^*$ is the maximum marginal influence in round $i$. Usually, $|F_i|$ is very small. From our observation, it is usually large in the first several rounds (proportional to $n$), and small in the latter rounds (only tens to hundreds).

However, the process of popping and re-inserting the vertex from the priority queue (\cref{line:Q_start},\cref{line:Q_end} in \cref{alg:CELF_framework}) is inherently sequential and hard to parallelize. As a result, some parallel implementations~\cite{gokturk2020boosting} only parallelize the re-evaluation function $\marginal$ (\cref{line:marginal} in \cref{alg:CELF_framework} ), but pop candidate vertices and re-insert them one by one. The span of this method is $O(|F_i| \cdot D_{\Delta})$, where $D_{\Delta}$ is the span of computing $\Delta$ for one vertex.  The span will be very large when set $F_i$ is large, which is true for the first several rounds in practice.

Alternatively, one can trivially parallelize seed selection without CELF by computing all $\gain{\cdot}$ values in parallel
and finding the maximum score.
Although all evaluations are embarrassingly parallel,
evaluating all $n$ vertices is much more expensive than CELF when $|F_i| \ll n$.
Hence, our goal is to make this process \defn{highly parallel}, while \defn{keeping the work close to CELF}. 
To do so, we design \defn{parallel priority queues} to replace the sequential one in CELF.
}
\myparagraph{Prior Work on Parallel Priority Queue}.
As a fundamental data type,  parallel priority queues
are widely studied~\cite{wang2020parallel,dong2021efficient,blelloch2016just,sun2018pam,sun2019parallel,bingmann2015bulk,sanders2019sequential}.
However, as far as we know, all these algorithms/interfaces require knowing the batch of operations (e.g., the threshold to extract keys or the number of keys to extract) \emph{ahead of time}.
This is not true in CELF---the set $F_i$ is only known during the execution.
Thus, we need different approaches to tackle this challenge.

\myparagraph{Overview of Our Approaches}.
We first formalize the interface of the parallel priority queue needed in \cref{alg:sketch_framework}.
The data structure maintains an array $\lazygain{\cdot}$ of (stale) scores for all vertices.
It is allowed to call \marginal{} function to re-evaluate and obtain the true score of any vertex.
The interface needs to support \nextseed{} function, which returns the vertex id with the highest true score.

We present two parallel data structures to maintain the scores.
The first one is based on a parallel binary search tree (BST) called \textbf{\ptree{}}~\cite{blelloch2016just,sun2018pam,blelloch2022joinable,dhulipala2022pac}.
We prove that using \ptree{s}, our approach has work (number of evaluations) asymptotically the same as CELF,
while is highly parallel: the selection of the $i$-th seed finishes in $\log|F_i|$ iterations of evaluations (each iteration evaluates multiple vertices in parallel),
instead of $|F_i|$ iterations (one vertex per iteration) in CELF.
We also propose a new data structure \textbf{\ourtree{}}, based on parallel winning trees.
\wintree{} does not maintain the total order of scores, and is simpler and potentially more practical than \ptree{s}.
We introduce the \ptree{}-based approach in \cref{sec:BST_section} and \ourtree{}-based approach in \cref{sec:WinTree}, and compare their performance in \cref{sec:exp-ana}.
\emph{These two approaches are \textbf{independent of the sketching algorithm} and apply to seed selection on all submodular diffusion models} (not necessarily the IC model and/or on undirected graphs).
Note that most IM diffusion models are submodular (e.g., IC, Linear Threshold (LT), Triggering
 (TR~\cite{kempe2003maximizing}), and more~\cite{tsaras2021collective, zhang2021grain, kempe2003maximizing}).

\hide{
Our goal is to re-evaluate multiple vertices in parallel, and the total number of vertices we re-evaluate is not much larger than that of CELF. In this section, we propose two methods using different parallel priority queues to maintain the vertices by their lazy-evaluated marginal gains thus enabling re-evaluating multiple vertices at once. The \BST method uses a parallel binary search tree (BST) as a priority queue and the prefix-doubling technique to guarantee the theoretical work bound. The \ourtree method uses a tournament tree we designed for this problem as a priority queue and performs more efficiently in practice.
We introduce the \BST method in \cref{sec:BST_section} and \ourtree in \cref{sec:WinTree}. These two seed selection methods are independent of the previous compact sketching algorithm and you can independently apply them or combine them together.
}

\begin{figure*}
    \centering 
    \includegraphics[width=0.95\textwidth]{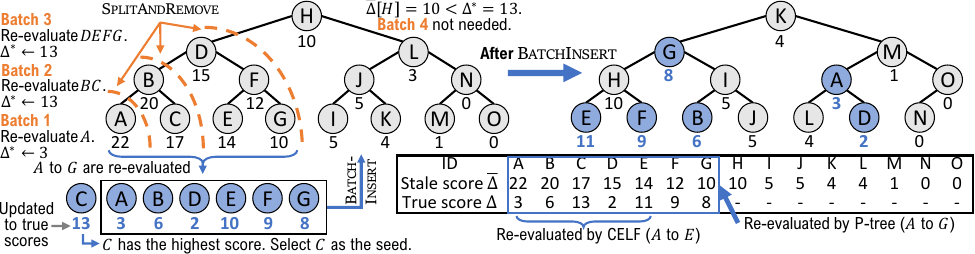}
    \caption{
        \small
        \textbf{Example of \ptree{}-Based Seed Selection.} The letters in the tree nodes represent vertices, and the numbers below them are their stale scores. \ptree{} maintains decreasing order of the (stale) scores. By prefix-doubling, we extract batches of 1, 2, 4 vertices and evaluate each batch in parallel. After batch 3, the highest true score (13) is higher than the current best in the tree (10), and the algorithm stops. We will select the node with the highest true score and insert the rest back to the tree with their new score. \ptree{} may evaluate more vertices than CELF, but the extra work can be bounded (\cref{lemma:efficiency}).
        \label{fig:bst}
    }
\end{figure*}

\begin{algorithm}[t]
\small
\caption{Seed Selection based on \ptree\label{alg:BST}}
\SetKwProg{myfunc}{Function}{}{}
\SetKwFor{parForEach}{ParallelForEach}{do}{endfor}
\SetKwFor{mystruct}{Struct}{}{}
\SetKwFor{pardo}{In Parallel:}{}{}
\SetKwInOut{Maintains}{Maintains}
\Maintains{A parallel binary search tree $T$ for all $\lazygain{v}$. }
\DontPrintSemicolon
\myfunc{\upshape\textsc{NextSeed}$(S,\sketch{1..R})$} {
$s^{*}\gets \bot$\tcp*[f]{The best seed so far}\\
$j\gets 0$\\
\Repeat{$\lazygain{s^*} > T.$\textsc{Max(\,) 
\label{line:BST_stop}}\tcp*[f]{$\lazygain{s^*}$ is better than the top in $T$}} {
\tcp{Extract (remove \& output) the top $2^j$ elements in $T$ into array $B_j$}
  $B_j[1..2^j]\gets T.$\textsc{SplitAndRemove}$(2^j)$ \label{line:split}\\
  \parForEach(\tcp*[f]{Get the true score for each $v\in B_j$}){$v\in B_j$ \label{line:BST_evalue_begin}} {
    $\lazygain{v}\gets\marginal{}(S,v,\sketch{1..R})$ \label{line:BST_evalue_end}\\
  }
  $t\gets \arg\max_{v\in B_j}\lazygain{v}$ \tcp*[f]{can be computed in parallel}\label{line:BST_max}\\
  \lIf{($s^*=\bot$) \textsc{or} ($\lazygain{t}>\lazygain{s^*}$)}{
    $s^*\gets t$ \label{line:BST_update_seed}
  }
  $j\gets j+1$\\
}
  $T.$\textsc{BatchInsert}$(\bigcup_{j'=0}^{j-1} B_{j'} \setminus \{s^*\})$ \label{line:batch_insert}\\
  \Return{$s^*$}
}
\end{algorithm} 

\subsection{Parallel Priority Queue Based on \ptree{}} \label{sec:BST_section}

\hide{
To parallelize CELF, our first approach is to maintain the total (decreasing) order of the scores (marginal gains) of all vertices, using a binary search tree (BST).
As such, suppose we know that CELF will re-evaluate $|F_i|$ vertices in round $i$, we can find $F_i$ based on the total order given by the BST, re-evaluate them in parallel, and update them all in the BST.
We first discuss the high-level idea using an existing data structure \ptree{}, and then show how to deal with the case where we do not know $F_i$ ahead of time.

Our parallel approach is based on the \ptree{}~\cite{blelloch2016just,blelloch2022joinable} from the PAM library~\cite{sun2018pam}.
In our case, we will use two functions on a \ptree{} $T$: 1) $T.\textsc{SplitAndRemove}(k)$, which removes and outputs (to an array) the first $k$ tree nodes ($k$ largest scores) from $T$, and 2) $T.\textsc{BatchInsert}(B)$, which inserts a set of keys $B$ to $T$.
Both algorithms are parallel with polylogarithmic span.

We now deal with the key challenge that we are unaware of the number of re-evaluations, i.e.\ $|F_i|$, ahead of time.
Our idea is to use \emph{prefix doubling} to achieve both efficiency and high parallelism, inspired by some recent work on parallel algorithms~\cite{blelloch2016parallelism,gu2023parallel,gu2022parallel,shen2022many,gbbs2021}.
The pseudocode and illustration is given in \cref{alg:BST} and \cref{fig:bst}, respectively.
When selecting the $i$-th seed, our tree $T$ starts with the stale values $\lazygain{\cdot}$ of the vertices from the previous round (entering the repeat-until loop).
It will then extract the top (largest stale score) nodes in batches of size $1, 2, 4, 8, ...$ from $T$ (\cref{line:split}) as an array $B_i$.
We will re-evaluate all vertices in $B_i$ in parallel (\cref{line:BST_evalue_end}).
These new scores are used to update the current best seed $s^*$.
The loop terminates when the score of ${s^*}$ is no less than the top score in $T$ (\cref{line:BST_stop}).
Finally, the algorithm selects $s^*$ as the seed and inserts the rest of the newly re-evaluated values $\bigcup B_j \setminus \{s^*\}$ back to $T$.
}

Our first approach to parallelizing CELF is to maintain the total (decreasing) order of the scores of all vertices, using a parallel binary search tree (BST) called \ptree{}~\cite{blelloch2016just,blelloch2022joinable}.
We will use two functions on a \ptree{} $T$: 1) $T.\textsc{SplitAndRemove}(k)$, which extracts (removes and outputs to an array) the first $k$ tree nodes ($k$ largest scores) from $T$, and 2) $T.\textsc{BatchInsert}(B)$, which inserts a set of keys $B$ to $T$. Both algorithms are parallel with a polylogarithmic span.

\hide{Different from many priority queue implementations (e.g., binary heap) that extract the top element one at a time,
\ptree{} maintains the total order of all vertices, and thus we can extract a batch of vertices with top (stale) scores and evaluate them in parallel.}
Since \ptree{} maintains the total order of all vertices, we can extract a batch of vertices with top (stale) scores and evaluate them in parallel.
The key challenge is a similar stop condition as CELF to avoid evaluating too many vertices, since we are unaware of the number of ``useful'' vertices, i.e., $|F_i|$, ahead of time.
Our idea is to use \emph{prefix doubling}~\cite{shun2015sequential,blelloch2016parallelism,gu2023parallel,gu2022parallel,shen2022many,gbbs2021,wang2023parallel} to achieve work efficiency and high parallelism.
The pseudocode is given in \cref{alg:BST} with an illustration in \cref{fig:bst}.
To find the next seed, the \ptree{} $T$ starts with the stale scores $\lazygain{\cdot}$ from the previous round.
We then extract the top (largest stale score) nodes in batches of size $1, 2, 4, 8, ...$ from~$T$ (\cref{line:split}).
Within each batch $B_j$, we re-evaluate all vertices in parallel (\cref{line:BST_evalue_end}).
These new scores are used to update the current best seed~$s^*$.
The loop terminates when the score of ${s^*}$ is
better
than the best score in $T$ (\cref{line:BST_stop}).
Finally, we select $s^*$ as the seed and insert the rest of the new true scores $\bigcup B_j \setminus \{s^*\}$ back to $T$.

Due to prefix doubling, each seed selection finishes in at most $O(\log n)$ rounds.
Note that our approach evaluates more vertices than CELF, but due to the stop condition (\cref{line:BST_stop}),
the extra work is bounded by a constant factor (proved in \cref{lemma:efficiency}).

\begin{theorem}[\ptree{} Correctness]\label{lemma:connectness}
\cref{alg:BST} always selects the next seed with the largest marginal gain, i.e., $\gain{s^*}=\max_{v\in V}\{\gain{v}\}$.
\end{theorem}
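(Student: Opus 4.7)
The plan is to prove correctness via a single invariant: at every point during the execution of \textsc{NextSeed}, for every vertex $v \in V$, its stored score $\lazygain{v}$ is an upper bound on its current true marginal gain $\truegain{v \mid S}$ with respect to the current seed set $S$. This invariant is exactly the lazy-evaluation property of CELF, and it follows from submodularity (\cref{eq:submodular}): $\lazygain{v}$ was computed as $\truegain{v \mid S'}$ for some earlier seed set $S' \subseteq S$, and submodularity implies $\truegain{v \mid S'} \ge \truegain{v \mid S}$. I would first verify this invariant is maintained across rounds by \cref{line:batch_insert}, since every re-evaluated vertex that does not become the seed is reinserted with its freshly computed true score, and all other tree entries were unchanged since the previous call.

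Next I would analyze the termination condition. Because \ptree{} maintains the total decreasing order of stale scores and \textsc{SplitAndRemove} extracts the current top-$2^j$ entries, at the moment the loop exits every vertex $v$ falls into one of two disjoint sets: (i) $v \in B \coloneqq \bigcup_{j'=0}^{j-1} B_{j'}$, in which case $\lazygain{v} = \truegain{v \mid S}$ because it was just re-evaluated at \cref{line:BST_evalue_end}; or (ii) $v$ is still inside $T$, in which case $\lazygain{v} \le T.\textsc{Max}()$ by the BST ordering.

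For the concluding step, $s^* = \arg\max_{v \in B} \lazygain{v}$ by the running update at \cref{line:BST_max,line:BST_update_seed}, so for every $v \in B$ we have $\truegain{v \mid S} = \lazygain{v} \le \lazygain{s^*} = \truegain{s^* \mid S}$. For every $v$ still in $T$, the invariant and the termination condition on \cref{line:BST_stop} give $\truegain{v \mid S} \le \lazygain{v} \le T.\textsc{Max}() < \lazygain{s^*} = \truegain{s^* \mid S}$. Combining both cases yields $\truegain{s^* \mid S} = \max_{v \in V} \truegain{v \mid S}$, which is the claim.

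The main obstacle is being precise about the invariant across the prefix-doubling iterations and across successive calls to \textsc{NextSeed}. In particular, one must argue that extracting a vertex from $T$ and later batch-reinserting it (only if it is not selected as the seed) preserves the invariant, and that vertices that remain in $T$ throughout the call still carry a valid upper bound — this is where submodularity is crucial, since the seed set $S$ may have grown since their last re-evaluation. Once this bookkeeping is nailed down, the rest of the argument is a direct comparison between $\lazygain{s^*}$ and the two cases above.
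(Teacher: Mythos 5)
Your proof is correct and follows essentially the same route as the paper's: partition $V$ into the re-evaluated vertices (where $s^*$ is the maximum of fresh true scores) and the vertices still in $T$ (where the stop condition plus the submodularity-based upper-bound property of stale scores closes the argument). Your version merely spells out the upper-bound invariant across rounds more explicitly, which the paper's proof leaves implicit.
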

\ifconference{Due to the space limit, we defer the proof to the full version. }
\iffullversion{
\begin{proof}
  \hide{Let $s$ be the vertex with the largest true score, we will show that at the end of the algorithm, $s^*=s$.
  We first show that $s$ will be re-evaluated by showing that no other vertex $s'$ that is evaluated in an earlier batch than $s$ can terminate the loop.
  Re-evaluating $s'$ will update the score of $s'$ to its true score $\gain{s'}$.
  By definition, $\gain{s'}<\gain{s}<\lazygain{s}$. Since $s$ is still in $T$, the condition on \cref{line:BST_stop} will not be triggered before $s$ is re-evaluated.
  Since $s$ has the highest true score, it will be assigned to $s^*$ and never be updated by \cref{line:BST_update_seed}. Therefore, at the end of the algorithm, $s^*=s$.
  }
  Let $s^*$ be the vertex selected by the algorithm, and we will show $\gain{s^*}\ge \gain{v}$ for all $v\in V$ when the stop condition on \cref{line:BST_stop} is triggered.
  We first show that $\gain{s^*}\ge \gain{v}$ for all $v\in V$ that has been split from the tree.
  This is because all such vertices have been re-evaluated, and $s^*$, by definition, has the highest true score among them.
  We then show $s^*$ has a higher true score than any other vertices still in $T$.
  For any $u\in T$, the stop condition indicates $\gain{s^*}=\lazygain{s^*}
  >
  \lazygain{u}\ge\gain{u}$ (due to submodularity). Therefore, $\gain{s^*}$
  is the highest true score among all vertices.
  \end{proof}
}

%


\begin{theorem}[\ptree{} Efficiency]\label{lemma:efficiency}
\cref{alg:BST} has the total number of evaluations at most twice that of CELF.
\end{theorem}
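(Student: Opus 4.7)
The plan is to match the work of \cref{alg:BST} against $|F_i|$ for each round $i$ separately, so that summing over all rounds yields the stated global bound. The key observation is that the repeat-loop cannot terminate until every vertex of $F_i$ has been pulled out of $T$, after which the geometric batch schedule $1,2,4,\dots$ over-counts the extractions by at most a factor of two.

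First I will prove a characterization of the stop condition on \cref{line:BST_stop}: the test succeeds at the end of iteration $j$ if and only if $F_i \subseteq B_0 \cup \dots \cup B_j$. If some $v \in F_i$ still lies in $T$, then $T.\textsc{Max}() \ge \lazygaini{i-1}{v} \ge \deltas$, while by submodularity $\lazygain{s^*} = \gain{s^*} \le \deltas$ (since $\deltas$ is the global maximum true score), so $\lazygain{s^*} \le T.\textsc{Max}()$ and the test fails. Conversely, once $F_i$ has been fully extracted, every remaining key of $T$ has stale score strictly less than $\deltas$; submodularity also forces the global maximiser $v^*$ with $\gain{v^*}=\deltas$ into $F_i$ (its stale score upper-bounds its true score), so $v^*$ has been evaluated by \cref{line:BST_evalue_end} and hence $\lazygain{s^*} = \deltas > T.\textsc{Max}()$.

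Second, I will count. Let $j^*$ be the index of the final batch. By the characterization, $j^*$ is the smallest integer with $\sum_{j=0}^{j^*} 2^j = 2^{j^*+1}-1 \ge |F_i|$; minimality gives $2^{j^*}-1 < |F_i|$, hence $2^{j^*} \le |F_i|$ and the total number of vertices evaluated during round $i$ is at most
\[
2^{j^*+1}-1 \;\le\; 2|F_i|-1 \;<\; 2|F_i|.
\]
Summing this bound over all $\numseeds$ rounds produces a total strictly less than $2\sum_i|F_i|$, i.e., at most twice the CELF evaluation count.

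The main obstacle is really only establishing the stop-condition characterization; once it is in hand, the count is a standard prefix-doubling tail estimate. The fact that the global true-max vertex is guaranteed to sit in $F_i$ is immediate from the submodularity invariant $\gain{v} \le \lazygaini{i-1}{v}$ already invoked in \cref{lemma:connectness}, so no new structural lemma is needed. One minor point to handle carefully is the initialisation $s^*=\bot$: after the very first batch $B_0$ of size one, $s^*$ is set and $\lazygain{s^*}$ equals its true score, so the characterization applies starting from iteration $j=0$.
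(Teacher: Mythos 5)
Your prefix-doubling tail estimate is correct, but it only establishes what the paper calls the ``simple case,'' and the reduction of the theorem to that case is exactly where the real difficulty lies. Your stop-condition characterization silently assumes that at the start of round $i$ the keys stored in $T$ are CELF's stale scores $\lazygaini{i-1}{\cdot}$: the step ``if some $v\in F_i$ still lies in $T$, then $T.\textsc{Max}()\ge\lazygaini{i-1}{v}\ge\deltas$'' reads off CELF's stale score for $v$, whereas the key of $v$ in $T$ is the \ptree{}'s own stale score. The two arrays coincide only in round $1$. From round $2$ onward they diverge, because in each round the two executions re-evaluate different sets of vertices, and a re-evaluation lowers the stale score in one array but not the other. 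Consequently the per-round inequality you sum, $|F'_i|\le 2|F_i|$, is false in general. A concrete failure mode: in round $1$ the \ptree{} evaluates $7$ extra vertices beyond $F_1$ and records their new (lower, say $8$) scores; in round $2$ with $\deltas=9$ CELF must re-evaluate those $7$ (their CELF-side scores are still high) and drops them to $6$, while the \ptree{} skips them; in round $3$ with $\deltas=7$ CELF skips them (score $6<7$) but the \ptree{} must evaluate all $7$ (score $8>7$), so round $3$ alone can cost the \ptree{} far more than $2|F_3|$.

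The bound therefore holds only in aggregate over all rounds, and the paper's proof spends most of its effort on precisely this cross-round accounting. It introduces separate orderings by CELF's and the \ptree{}'s stale arrays, defines $x_i=|F_i|$ and $y_i$ as the rank of $\deltas$ in the \ptree{}'s ordering (so $|F'_i|\le 2y_i$ by your tail estimate), and then proves $\sum_i y_i\le\sum_i x_i$ by an amortized token argument: whenever a vertex is evaluated by CELF in some round $j$ but not by the \ptree{}, it banks a token charged to $x_j$, and that token later pays for the round in which the \ptree{} must still evaluate it because its \ptree{}-side stale score was never lowered. You need either this amortization or some equivalent global charging scheme; summing a per-round bound cannot work.
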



\begin{proof}
Recall that $F_i$ is the set of evaluated vertices by CELF when selecting the $i$-th seed. Let $F'_i$ be the set of evaluated vertices by \ptree{s} in \cref{alg:BST}.
We first show a simple case---if both CELF and \cref{alg:BST} start with the same stale scores $\lazygaini{i-1}{\cdot}$,
then $|F'_i|\le 2|F_i|$.
Let us reorder vertices in $V$ as $v_1,v_2,\dots,v_n$ by the decreasing order of their stale score $\lazygaini{i-1}{\cdot}$.
Assume $v_l$ is the last vertex evaluated by CELF, so $|F_i|=l$.
$v_l$ must be in the last batch in \ptree{}.
Assume the last batch is batch $j$ with $2^j$ vertices.
This indicates that all $2^{j}-1$ vertices in the previous $j-1$ batches are before $v_l$.
Therefore $2^{j}-1 < l=|F_i|$, and $|F'_i|= 2^{j+1}-1$, which proves $|F'_i|\le 2|F_i|$.

We now consider the general case.
We first focus on a specific seed selection round $i$.
Due to different sets of vertices evaluated in each round,
at the beginning of round $i$,
CELF and \ptree{} may not see exactly the same stale scores.
We denote the stale score \emph{at the beginning of round $i$} in
CELF as $\lazygainc{\cdot}$ and that for \ptree as $\lazygaint{\cdot}$.
We reorder vertices by the decreasing order of $\lazygainc{\cdot}$ as $\vvv{1},\vvv{2},\dots, \vvv{n}$,
and similarly for $\lazygaint{\cdot}$ as $\uuu{1},\uuu{2},\dots, \uuu{n}$.
Denote $\deltas$ as the highest true score in round $i$.
Let $\vvv{x_i}$ be the last vertex in $v_{1..n}$ such that
$\lazygainc{\vvv{x_i}}\ge\deltas$,
and
$\uuu{y_i}$ the last vertex in $u_{1..n}$ such that
$\lazygaint{\uuu{y_i}}\ge\deltas$.
Namely, $x_i$ is the rank of $\deltas$ in $v_{1..n}$, and $y_i$ is the rank of $\deltas$ in $u_{1..n}$.
By definition, $F_i$ is exactly the first $x_i$ vertices in sequence $v$, and thus $x_i=|F_i|$;
$F'_i$ contains all vertices smaller than $\uuu{y_i}$
and possibly some more in the same batch with $\uuu{y_i}$, so $|F'_i|\le 2y_i$.
In the simple case discussed above, where we assume $\lazygainc{\cdot}=\lazygaint{\cdot}$,
we always have $x_i=y_i$.

\hide{
We will use amortized analysis to show that $\sum y_i \le \sum x_i$, which further indicates $\sum |F'_i|\le 2\cdot\sum|F_i|$.
Intuitively, \ptree{} may evaluate more vertices than CELF.
These additional re-evaluations may update $\lazygaint{\cdot}$ to newer (smaller) scores,
and prevent re-evaluating them in later rounds, and thus are not ``wasted'' in many cases.
Indeed, if $\lazygaint{t} \le \lazygainc{t}$ for all $t\in V$,
then the rank of $\deltas$ must be earlier in $u_{1..n}$ than in $v_{1..n}$, indicating that $y_i\le x_i$.

\begin{figure*}[th]
    \centering 
    \includegraphics[width=0.95\textwidth]{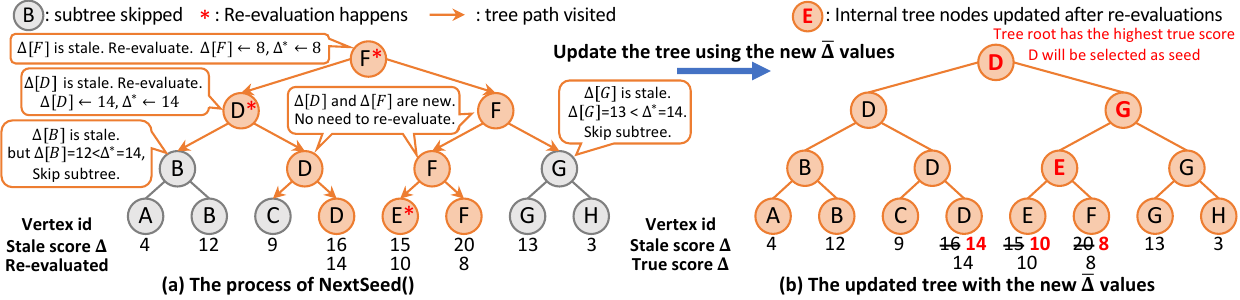}
    \caption{
        \small
        \textbf{Seed selection based on \ourtree{}}. Each leaf stores a vertex id, and each internal node stores the vertex in its subtree with the highest (stale) score. (a) An example of finding the maximum score. For illustration purposes, we assume the parallel threads work at the same speed and all tree nodes on the same level are processed in parallel (in reality the threads run asynchronously in fork-join parallelism).
        Therefore, the subtree at $G$ will see $\Delta^*=14$ updated by $D$, and this subtree will be skipped. 
        (b) Updating the internal nodes with the new $\lazygain{\cdot}$ values. Finally, the root of the tree has the highest true score.
        \label{fig:winning_tree}
    }
\end{figure*}

However, due to different re-evaluation patterns, it is possible that for some $t\in V$, we have $\lazygainc{t} < \lazygaint{t}$.
Such cases happen iff. $t$ is evaluated by CELF in a previous round, but not by \ptree{}.
Namely, there exists a round $j$, such that $t\in F_j$ but $t\notin F'_j$.
Note that $F_j=v^{*}_{1..x_j}$, and $u^{*}_{1..y_j}\subseteq F'_j$,
where $v^{*}_{1..n}$ and $u^{*}_{1..n}$ are the $u$ and $v$ sequences in round $j$, respectively.
Namely, such vertex $t$ was counted in $x_j$, but not $y_j$.
Therefore, we will save a token for such vertex $t$ when it is evaluated by CELF in round $j$,
such that when later $t$ is counted in $u_{1..y_i}$, we will use the token to count $t$ for free.
We can partition $u_{1..y_i}$ into two categories: $U_1$ as the intersection of $u_{1..y_i}$ and $v_{1..x_i}$,
and $U_2$ as the rest.
All vertices in $U_1$ are counted in $x_i$, and all vertices in $U_2$ can be counted by the saved tokens.
Therefore, by amortized analysis, we have $\sum y_i \le \sum x_i$.

Recall that $x_i=|F_i|$, and $|F'_i|\le 2y_i$. Since $\sum y_i \le \sum x_i$, we proved that
$\sum |F'_i|\le \sum 2y_i\le 2\sum x_i = 2\sum|F_i|$.
}

We will use amortized analysis to show that $\sum y_i \le \sum x_i$, which further indicates $\sum |F_i'|\le 2\cdot\sum|F_i|$. We can partition $u_{1..y_i}$ into two categories: $U_1$ as the intersection of $u_{1..y_i}$ and $v_{1..x_i}$,
and $U_2$ as the rest.
Note that for $t \in U_2$,  $\lazygaint{t} \geq  \deltas > \lazygainc{t}$. This happens iff.\ $t$ is evaluated by CELF in a previous round but not by \ptree{}.
Namely, there exists a round $j$, s.t. $t\in F_j$ but $t\notin \bigcup_{x=j}^{i-1}F'_x$.
Let $v^{*}_{1..n}$ and $u^{*}_{1..n}$ be the $u$ and $v$ sequences in round $j$, respectively.
Note that $F_j=v^{*}_{1..x_j}$, and $u^{*}_{1..y_j}\subseteq F'_j$.
Since $t\in F_j$, then vertex $t$ was counted in $x_j$.
Since $t\notin F'_j$, $t$ is not counted in any $y_j$.
Therefore, we can save a token for such $t$ when it is evaluated by CELF in round $j$ but not by \ptree{},
such that when later $t$ is counted in $u_{1..y_i}$ in round $i$, we will use the token to count $t$ for free.
Note that $t\notin (F'_j \cup F'_{j+1} \cup ... \cup F'_{i-1})$,
for the same reason, $t$ is not counted in any $y_{j'}$ for $j\le j'< i$, so the token must still be available in round $i$.
\hide{
Let $v^{l}_{1..n}$ and $u^{l}_{1..n}$ be the $u$ and $v$ sequences in round $l$, respectively.
Then for all $l$, we have $F_l=v^{*}_{1..x_l}$, and $u^{*}_{1..y_l}\subseteq F'_l$.
Since $t\in F_j$, then vertex $t$ was counted in $x_j$.
Therefore, we can save a token for such $t$ when it is evaluated by CELF in round $j$ but not by \ptree{},
such that when later $t$ is counted in $u_{1..y_i}$ in round $i$, we will use the token to count $t$ for free.
Note that $t\notin (F'_j \cup F'_{j+1} \cup ... \cup F'_{i-1})$,
therefore $t$ is not counted in any $y_{j'}$ for $j\le j'< i$, and the counter must still available in round $i$.
}
In summary, all vertices in $U_1$ are counted in $x_i$, and all vertices in $U_2$ can be counted by the saved tokens (charged to some previous $x_j$).
Therefore, using amortized analysis, we have $\sum y_i \le \sum x_i$.

Recall that $x_i=|F_i|$, and $|F'_i|\le 2y_i$. Since $\sum y_i \le \sum x_i$, we proved that
$\sum |F'_i|\le \sum 2y_i\le 2\sum x_i = 2\sum|F_i|$.
\end{proof}

\hide{
Due to different sets of vertices re-evaluated in each round,
at the beginning of round $i$,
CELF and \ptree{} may not see exactly the same stale scores.
We will denote the stale score at the beginning of round $i$ in
CELF as $\lazygainc{i}{\cdot}$ and that for \ptree as $\lazygaint{i}{\cdot}$.
We reorder vertices by decreasing order of $\lazygainc{i}{\cdot}$ as $\vvv{i}{1},\vvv{i}{2},\dots, \vvv{i}{n}$,
and reorder vertices by decreasing order of $\lazygaint{i}{\cdot}$ as $\uuu{i}{1},\uuu{i}{2},\dots, \uuu{i}{n}$.
Denote $\deltas{i}$ as the highest true score in round $i$.
We then denote $x_i$ as the rank of $\deltas{i}$ in the sequence $\vvv{i}{1..n}$,
i.e., $\vvv{i}{x_i}$ is the last vertex in $\vvv{i}{1..n}$ with stale score larger than $\deltas{i}$.
Similarly, we define $y_i$, such that $\uuu{i}{y_i}$ is the last vertex in $\uuu{i}{1..n}$ with stale score larger than $\deltas{i}$.
By definition, $F_i$ is exactly the first $x_i$ vertices in sequence $\vvv{i}{}$, and thus $x_i=|F_i|$.
$F'_i$ contains all vertices in $\uuu{i}{}$ and possibly some more in the same batch with $\uuu{i}{y_i}$, so $|F'_i|\le 2y_i$.
In the simple case discussed above where we assume $\lazygainc{i}{\cdot}=\lazygaint{i}{\cdot}$,
we always have $x_i=y_i$.
}
\hide{
We will use amortized analysis to show that $\sum |F'_i|\le 2\cdot\sum|F_i|$.
Intuitively, \ptree{} may evaluate more vertices than CELF,
but such additional re-evaluations, in many cases, will not be wasted,
because they update $\lazygaint{\cdot}$ to newer values, and avoid them to be re-evaluated in later rounds.
Indeed, if $\lazygainc{t} \ge \lazygaint{t}$ for all $t\in V$, then we must have $y_i\le x_i$ (based on how $x_i$ and $y_i$ are computed).
However, due to different re-evaluation patterns, it is still possible that for some $t\in V$, we have $\lazygainc{t} < \lazygaint{t}$.
Such case happens iff. $t$ is evaluated by CELF in a previous round, but not by \ptree{}.
Namely, there exist some round $j$, such that $t\in F_j$ but $t\notin F'_j$.
Note that $F_j=v_{1..x_i}$, and $u_{1..y_i}\subseteq F'_j$.
In this case, we will save a token for such vertex $t$ when it is evaluated by CELF in round $j$,
such that when later $t$ is counted in $u_{1..y_i}$, we will use the token to count $t$ for free.
Therefore, we partition $u_{1..y_i}$ into two categories: $U_1=F'_i\cap F_i$, and $U_2=F'_i\setminus F_i$, i.e., those also evaluated by CELF or not.
All vertices in $U_1$ are also in $F_i$, and all vertices in $U_2$ can be counted by the saved tokens. Therefore, by amortized analysis, we have $\sum y_i \le \sum x_i$.

Recall that $x_i=|F_i|$, and $|F'_i|\le 2y_i$. Based on the argument above, we proved that $\sum |F'_i|\le 2\sum|F_i|$.
}

\hide{
Let $F_i$ be the set of re-evaluated vertices by CELF when selecting the $i$-th seed, and $F'_i$ be that of \ptree{s} using \cref{alg:BST}.
We will first show a simple case---if both CELF and \cref{alg:BST} start with the same stale values $\lazygain{\cdot}$, then $|F'_i|\le 2|F_i|$.
The worst case for \cref{alg:BST} is that the last vertex in $F_i$ is the element in a single batch, say with size $2^j$.
In this case,  $|F_i|=2^j$ and $|F'_i|=1+2+\cdots+2^j=2^{j+1}-1<2|F_i|$.

We now consider the general case for all seed selection rounds.
Unfortunately, the above condition no longer holds.
Consider an example that CELF re-evaluates 8 vertices in a round with $\lazygain{s_1^*}=10$, and \cref{alg:BST} re-evaluates 15 vertices with the other 7 vertices with $\lazygain{\cdot}=8$.
Then the next round, CELF re-evaluates these 7 vertices, resets them to $\lazygain{\cdot}=6$, and finds $s_2^*$ with $\lazygain{s_2^*}=9$; \cref{alg:BST} directly finds $s_2^*$.
In the third round, CELF directly finds $s_3^*$ with $\lazygain{s_3^*}=7$.
However, \cref{alg:BST} needs to re-evaluate these 7 vertices again before it finds $s_3^*$, which is more than twice of what CELF re-evaluates.

By taking a closer look, the reason is that \cref{alg:BST} re-evaluates much fewer vertices in the second round.
Hence, here we will use an amortized analysis to show that $\sum_{i=1}^{k}|F'_i|<2\cdot\sum_{i=1}^{k}|F_i|$.
In particular, in every round, we give a ``token'' to the vertices in $F_i\setminus F'_i$, namely, the vertices re-evaluated by CELF but not \cref{alg:BST}.
When these vertices are re-evaluated by \cref{alg:BST}, we charge the cost using the token these vertices have.

The reason that we give tokens to these vertices since those are the only vertices that can have $\lazygain{\cdot}$ values in CELF smaller than in \cref{alg:BST}.
Given this fact, we can use the same argument at the beginning of this proof.
For $F_i$ for CELF when selecting the $i$-th seed, \cref{alg:BST} re-evaluates a subset of them (some may be re-evaluated in previous rounds and get smaller $\lazygain{\cdot}$ values).
Alternatively, \cref{alg:BST} may re-evaluate vertices with larger $\lazygain{\cdot}$ values than that in CELF, but they have tokens that charge to previous rounds.
Hence, in total, $|F'_i|$ is no more than twice $|F_i|$ plus the tokens consumed in this round.
Since the overall number of tokens is non-negative, we have $\sum_{i=1}^{k}|F'_i|<2\cdot\sum_{i=1}^{k}|F_i|$.
}

\hide{
\begin{lemma}[Work and Span for one seed selection]\label{lemma:work_span_one}
The BST method has $O((W_\Delta + \log n) \cdot |F_i| )$ work and $O((D_{\Delta}+\log n)\cdot \log |F_i| )$ span to select one seed, where $|F_i|$ is the number vertices will be evaluated by CELF algorithm, $W_{\Delta}$ and $D_{\Delta}$ are the work and span to re-evaluate the marginal gain of a vertex.
\end{lemma}
\begin{proof}
According to the PAM paper~\cite{sun2018pam}, \textsc{SplitAndRemove} and \textsc{Max} have $O(\log n)$ work and $O(\log n)$ span, where $n$ is the tree size. \textsc{BatchInsert} has $O(m\log m +m\log n)$ work and $O(\log m \log n)$ span, where $m$ is the size of unordered insertion batch, $n$ is the sum of tree size and batch size.
According to the conclusion from \cref{lemma:efficiency}, the number of vertices we evaluated in total is $O(|F_i|)$, and the number of prefix-doubling rounds is $O(\log |F_i|)$. Note that the tree size in \cref{alg:BST} is different in each round, for simplicity, we all use $O(n)$ to represent tree size.
From \cref{alg:BST}, in the $j$-th round,  the batch size is $B_j$.  \textsc{SplitAndRemove} in \cref{line:split} has $O(\log n)$ work and span. Re-evaluating marginal gains of the batch in \cref{line:BST_evalue_begin}-\cref{line:BST_evalue_end} has $O(B_j \cdot W_{\Delta})$ work and $O(\log B_j + D_{\Delta})$ span. Finding the maximum marginal value in a batch \cref{line:BST_max} is bounded by the re-evaluating step. Finding the max value in the tree (\cref{line:BST_stop}) has $O(\log n)$ work and span.
Note that $\sum_{j} B_j = O(F_i)$, $\sum_j{\log B_j}=O(\log |F_i|)$.
The total work and span for the previous cost in $O(\log|F_i|)$ rounds is $O(W_{\Delta} |F_i| + \log|F_i| \log n)$ and $O(D_{\Delta} \cdot \log |F_i|  + \log |F_i| \log n)$.
After that, we insert all the removed vertices except the seed into the tree, where the number of vertices inserted is $O(|F_i|)$ and the sum of the tree size and insertion batch size is $O(n)$, the work and span for batch insertion in \cref{line:batch_insert} are $O(|F_i|\log |F_i|\log n)$ and $O(\log |F_i|\log n)$.
Therefore, the total work and span of finding the $i$-th seed is $O((W_\Delta + \log n) \cdot |F_i| )$ and $O((D_{\Delta}+\log n)\cdot \log |F_i| )$.
\end{proof}

\begin{theorem}
\letong{replace WΔW_{\Delta}, DΔD_{\Delta}}
Given the same sketches \sketch1..R\sketch{1..R}, the BST method will select the same seed set as CELF with O((WΔ+logn)∑Ki=1(|Fi|))O((W_{\Delta}+\log n)\sum_{i=1}^{K}(|F_i|)) work and \\
O((DΔ+logn)∑Ki=1log|Fi|)O((D_{\Delta}+\log n)\sum_{i=1}^K \log |F_i|) span, where |Fi||F_i| is the number vertices will be evaluated by CELF algorithm to select seed ii, WΔW_{\Delta} and DΔD_{\Delta} are the work and span to re-evaluate the marginal gain of a vertex.
\end{theorem}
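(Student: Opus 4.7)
The plan is to establish correctness and the two complexity bounds by combining \cref{lemma:connectness} (per-round correctness), \cref{lemma:efficiency} (that the total number of evaluations is within a factor of two of CELF), and the standard cost bounds for operations on \ptree{}s from PAM. I would structure the proof as an induction on the seed index for correctness, followed by a per-round cost accounting that sums to the stated bounds.

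For correctness, I would argue inductively on $i=1,\dots,K$ that \cref{alg:BST} and CELF produce identical partial seed sets $S_{i-1}$. Given the inductive hypothesis, both algorithms maintain stale scores that are valid upper bounds on the true marginal gains with respect to the same $S_{i-1}$, and \cref{lemma:connectness} guarantees \textsc{NextSeed} returns $\arg\max_{v\in V\setminus S_{i-1}} \gain{v\,|\,S_{i-1}}$; CELF picks the same argmax by definition, and with the stated vertex-id tie-breaking rule the two selections coincide.

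For the work bound, I would decompose the cost of round $i$ into four pieces: the marginal-gain evaluations, the sequence of \textsc{SplitAndRemove} calls on \ptree{}, the per-batch \textsc{Max} query used in the stopping test (\cref{line:BST_stop}), and the terminal \textsc{BatchInsert} on \cref{line:batch_insert}. By \cref{lemma:efficiency}, the total number of evaluated vertices across all rounds is at most $2\sum_i |F_i|$, which contributes $O(W_\Delta \sum_i |F_i|)$ work. In round $i$, prefix doubling produces $O(\log |F_i|)$ batches summing to $O(|F_i|)$ vertices, so the \textsc{SplitAndRemove} and \textsc{Max} calls cost $O(\log |F_i| \cdot \log n) = O(|F_i| \log n)$; the terminal \textsc{BatchInsert} of at most $2|F_i|$ keys into a tree of size at most $n$ costs $O(|F_i| \log n)$ by the PAM batch-update bound. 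Summing over $i$ gives $O((W_\Delta + \log n)\sum_i |F_i|)$.

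For the span, the $O(\log |F_i|)$ prefix-doubling iterations in round $i$ are executed sequentially, so I would bound one iteration and then multiply. Within a single iteration, \textsc{SplitAndRemove} and \textsc{Max} each have $O(\log n)$ span on \ptree{}, the parallel \textsc{ParallelForEach} over the batch has span $O(D_\Delta)$ since marginal evaluations are independent, and the batch-max reduction adds $O(\log n)$. Thus one iteration has span $O(D_\Delta + \log n)$, and one round has span $O((D_\Delta + \log n)\log |F_i|)$; the terminal \textsc{BatchInsert} has $O(\log^2 n)$ span which is absorbed. Summing over the $K$ rounds yields $O((D_\Delta + \log n)\sum_i \log |F_i|)$. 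The main obstacle is the accounting subtlety that the actual per-round evaluated sets $F'_i$ differ from the CELF sets $F_i$; this is exactly what \cref{lemma:efficiency} resolves in aggregate, so I would explicitly invoke its amortized bound at the point where the per-round work is summed, rather than trying to bound $|F'_i|$ versus $|F_i|$ round by round.
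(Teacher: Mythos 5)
Your proposal is correct and takes essentially the same route as the paper's own proof: induction on the seed index via \cref{lemma:connectness} for correctness, then a per-round decomposition into evaluations, \textsc{SplitAndRemove}/\textsc{Max}, and \textsc{BatchInsert} costed with the PAM bounds, with \cref{lemma:efficiency} controlling the total number of evaluations. The only (minor, favorable) difference is that you explicitly apply the amortized bound $\sum_i|F'_i|\le 2\sum_i|F_i|$ at the aggregate level for the work term, whereas the paper charges $O(|F_i|)$ evaluations per round; note that both you and the paper bound the number of prefix-doubling iterations in round $i$ by $O(\log|F_i|)$ when it is really $O(\log|F'_i|)$, a discrepancy the amortized argument does not by itself close for the span term.
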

\begin{proof}
Firstly, we want to show that the seed set {s1,…,sK}\{s_1, \dots, s_K\} selected by \BST method is the same as CELF. We will prove it by induction.
In the first round, given the empty seed set SS,  CELF and \BST will both select the vertex with the largest marginal gain. Assume until round i−1i-1, CELF and \BST select the same seeds {s1,…,si−1}\{s_1, \dots, s_{i-1}\}. In round ii, given the same seed set {s1,…,si−1}\{s_1, \dots, s_{i-1}\}, by \cref{lemma:connectness}, they will select the same seed s∗s^*, so S←{s1,…,si−1}∪{s∗}S\gets \{s_1,\dots,s_{i-1}\} \cup \{s^*\} are still the same. Therefore, after the KK-th round (K≥1K\geq1), the seeds selected by CELF and \BST are the same.

Then, we want to show the work and span of \BST method for selecting KK seeds. By \cref{lemma:work_span_one}, selecting the ii-th seed has O((WΔ+logn)⋅|Fi|)O((W_\Delta + \log n) \cdot |F_i| ) work and O((DΔ+logn)⋅log|Fi|)O((D_{\Delta}+\log n)\cdot \log |F_i| ) span. Summing up the work and span in each round from i=1i=1 to KK, we can get the total work is O((WΔ+logn)∑Ki=1(|Fi|))O((W_{\Delta}+\log n)\sum_{i=1}^{K}(|F_i|)) and the total span is O((DΔ+logn)∑Ki=1log|Fi|)O((D_{\Delta}+\log n)\sum_{i=1}^K \log |F_i|).
\end{proof}
}

\begin{theorem}[\ptree{} Cost Bound]
\label{thm:ptreecost}
Given the same sketches $\sketch{1..R}$, our seed selection based on \ptree{} will select the same seed set as CELF with
$O(n\log n+W_{CELF})$ work and $\tilde{O}(kD_{\Delta})$ span, where k is the number of seeds,
$W_{CELF}$ is the work (time complexity) by CELF,
and $D_{\Delta}$ is the span to evaluate one vertex.
\end{theorem}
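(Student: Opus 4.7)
The plan is to prove this theorem in three parts: correctness of the selected seed set, the work bound, and the span bound.

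For correctness, I would proceed by induction on the number of seeds selected. The base case is trivial since both algorithms start from $S = \emptyset$ with identical initial stale scores. For the inductive step, if both CELF and \ptree{} have produced the same seed set $S$ after $i-1$ rounds, then the true marginal gain function $\truegain{v \,|\, S}$ is identical for both in round $i$. By \cref{lemma:connectness}, \ptree{} returns $\arg\max_v \truegain{v \,|\, S}$, which is exactly what CELF returns. Hence both select the same $i$-th seed, and by induction, the same seed set overall.

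For the work bound, I would decompose total work into evaluation cost and tree-maintenance cost. By \cref{lemma:efficiency}, the total number of evaluations \ptree{} performs is at most $2\sum_i |F_i|$, so the evaluation work is at most $2W_{CELF}$. For tree maintenance, the initial \ptree{} can be built in $O(n\log n)$ work from $n$ scores. In round $i$, the $O(\log |F_i|)$ prefix-doubling steps each invoke a \textsc{SplitAndRemove} and a top-query at polylogarithmic cost, and the final \textsc{BatchInsert} of at most $|F_i|$ keys into a tree of size at most $n$ costs $O(|F_i|\log(1 + n/|F_i|))$. Summing over $k$ rounds and using $\sum_i |F_i| \le W_{CELF}/W_\Delta$ (assuming $W_\Delta = \Omega(\log n)$, which holds since each evaluation touches at least one sketch), the total tree-maintenance cost is absorbed into $O(n\log n + W_{CELF})$.

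For the span bound, I would observe that each seed selection performs $O(\log n)$ doubling iterations, because batches grow geometrically and the tree has size at most $n$. Within each iteration, the parallel evaluations of the current batch have span $O(D_\Delta)$, the parallel reduction for the batch maximum has span $O(\log n)$, and \textsc{SplitAndRemove} and the final \textsc{BatchInsert} have polylogarithmic span. Hence each seed selection has span $\tilde{O}(D_\Delta)$, and across $k$ rounds the total span is $\tilde{O}(kD_\Delta)$.

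The main obstacle I anticipate is in the work bound: ensuring the tree-maintenance cost genuinely fits inside $O(n\log n + W_{CELF})$ rather than blowing up to $O(kn\log n)$. This relies on the tighter \textsc{BatchInsert} complexity $O(m\log(1+n/m))$ provided by \ptree{}, so that rounds with large $|F_i|$ do not inflate the cost repeatedly. The amortization across rounds with very different $|F_i|$ values is delicate, but follows once we charge each inserted key to its (already-counted) evaluation.
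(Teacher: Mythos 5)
Your proposal is correct and follows essentially the same route as the paper's proof: correctness by induction via \cref{lemma:connectness}, the work bound via \cref{lemma:efficiency} ($\sum_i|F'_i|\le 2\sum_i|F_i|$) combined with the $O(n\log n)$ \ptree{} construction cost and polylogarithmic per-operation tree costs, and the span bound via the $O(\log n)$ prefix-doubling batches per round, each contributing $\tilde{O}(D_{\Delta})$ span. The only (immaterial) bookkeeping difference is how the tree-maintenance work is absorbed into $W_{CELF}$: you charge it to the evaluations by assuming $W_{\Delta}=\Omega(\log n)$, whereas the paper charges the $O(\log n)$ per-element tree cost against CELF's own $O(\log n)$ per-element priority-queue cost.
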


\ifconference{
\cref{thm:ptreecost} can be proved by \cref{lemma:efficiency} and the cost bound of \ptree{s}. For the page limit, we present the proof in the full version.}

\iffullversion{
\begin{proof}
Using the analysis of \ptree{}~\cite{blelloch2016just}, \textsc{SplitAndRemove} have $O(\log n)$ work and span, where $n$ is the tree size. \textsc{BatchInsert} has $O(n'\log n)$ work and $O(\log n' \log n)$ span, where $n'$ is the size of unordered insertion batch.
In addition, constructing the \ptree{} uses $O(n\log n)$ work and $O(\log n)$ span.

For each seed selection round that evaluates a set of vertices $F'_i$,
\cref{alg:BST} will first split them out from the tree, re-evaluate them, and insert all but one back to tree.
Based on the aforementioned cost bounds, the work per element in $F'_i$ is $O(\log n)$, the same as in CELF per element in $F_i$.
\cref{lemma:efficiency} indicates that $\sum|F'_i|=O(\sum |F_i|)$, so the total work for P-tree is asymptotically the same as the sequential CELF except for the $O(n\log n)$ preprocsssing cost.

\hide{
\cref{alg:BST} mainly comes from 1) constructing the tree with $O(n\log n)$ work and polylogarithmic span, 2) $\log |F'_i|=O(\log n)$ splits and output the elements to an array, in $O(F'_i+\log^2 n)$ total work and polylogarithmic span,
3) $O(|F'_i|)$ re-evaluations, and 4) inserting a batch back to the tree with size $|F'_i|-1$, costing $O(|F'_i|\log n)$ work and polylogarithmic span.
Considering in CELF, the work involves 1) at least $O(n)$ work to construct the priority queue, 2)
$O(|F_i|)$ re-evaluations and 3) at least $O(1)$ cost per evaluation in operating (popping and re-inserting) the priority queue.
From \cref{lemma:efficiency}, we know $\sum|F'_i|=O(\sum |F_i|)$, therefore, the overhead of \ptree{} compared to CELF is at most a polylogarithmic factor, which proves the $\tilde{O}(W_{CELF})$ bound.
}
For the span, note that \ptree{} needs $O(\log n)$ batches per round.
Each batch requires a split, and evaluates multiple vertices in parallel. Therefore, the span is $O(D_{\Delta}+\polylog(n))$ in each round.
After all the batches are processed, finally a batch-insertion adds all but one vertices back to the tree with polylogarithmic span.
Combine all pieces together, the span is $\tilde{O}(D_{\Delta})$ per round, and $\tilde{O}(kD_{\Delta})$ for the entire \cref{alg:BST}.
\end{proof}
}

\subsection{Parallel Priority Queue: \WinTree} \label{sec:WinTree}

While \ptree{} provides theoretical efficiency for seed selection,
\hide{it maintains the total order of all vertices, which is not needed in many implementations for priority queues (e.g., a binary heap) and may cause performance overhead.}
it maintains the total order of all vertices, which is not needed in priority queues and may cause performance overhead.
Also, \ptree{} explicitly maintains the parent-child pointers, which causes additional space usage.
We now propose a more practical data structure based on a winning tree that overcomes these two challenges, although it does not have the same bounds as in \cref{thm:ptreecost}.

A classic \emph{winning tree} (aka. tournament tree) is a complete binary tree with $n$ leaf nodes and $n-1$ interior nodes.
The data are stored in the leaves. Each interior node records the larger key of its two children.
Since a winning tree is a complete binary tree, it can be stored \emph{implicitly} in an array $T[1..2n-1]$, which consumes smaller space.
In our case, each tree node stores a vertex id (noted as $t.id$).
The key of a node $t$ is the (stale) score of this vertex, i.e., $\lazygain{t.id}$.
Each interior node stores the id of its children with a larger score.
As a result, the vertex at the root has the highest (stale) score.

\hide{
\begin{algorithm}[t]
\caption{Struct of WinningTree\label{alg:winning tree}}
\SetKwProg{myfunc}{Function}{}{}
\SetKwFor{parForEach}{ParallelForEach}{do}{endfor}
\SetKwFor{mystruct}{Struct}{}{}
\SetKwFor{pardo}{ParallelDo}{}{}
\mystruct{WinningTree}{
  Array tree$[]$ \tcp{index of tree nodes}
  Array last$[]$ \tcp{the last time compute the value}
  $\sigma_{\max}$ \tcp{the max marginal value}
}

\myfunc{\textsc{Construct}$(\sigma)$}{
    parallel initialize $latest[v]$ to $0$ for $v \in V$\\
    \textsc{Construct\_rec}$(\sigma, 0, |V| )$
}

\myfunc{\textsc{Construct\_rec}$(\sigma, s, t)$}{
\tcp{$\sigma$ is the marginal value array, $s,t$ are start and end index}
    \If {$s +1 = t$}{
        \Return $s$\\
    }
    $m \gets (s+t)/2$\\
    \pardo{}{
        $L \gets \textsc{Construct\_rec}(\sigma, s, m)$\\
        $R \gets \textsc{Construct\_rec}(\sigma, m+1, t)$
    }
    $\text{tree}[m]\gets \arg\max(\sigma[L], \sigma[R])$\\
    \Return $\text{tree}[m]$
}

\myfunc{\textsc{UpdatePriority}$(\sigma, \Phi[1\dots R], G, k)$}{
\tcp{$k$ is the number of seeds in $S$ now}
    $\sigma_{max}\gets 0$\\
    \textsc{Update\_rec}$(\sigma, 0, n, \Phi[1\dots R], G, k)$
}

\myfunc{\textsc{Update\_rec}$(\sigma, s,t, \Phi[1\dots n], G, k$)}{
    \If{$s + 1 \gets t$}{
        \If {$\sigma[s]>\sigma_{max}$ and $\text{last}[s] \neq k$}{
            $\sigma[s] \gets \textsc{MarginalGain}(S,s,\Phi[1\dots R])$\\
            $\text{last}[s]\gets k$\\
            \If {$\sigma[s] > \sigma_{\max}$}{
                \textsc{Atomic\_Write\_Max}$(\&\sigma_{\max}, \sigma[s])$
            }
            \Return $s$
        }
    }
    $m \gets (s+t)/2$\\
    $root \gets tree[m]$\\
    \If {$last[root] \neq k$ and $\sigma[root]<\sigma_{\max}$}{
        \Return $root$
    }
    \If {$last[root] \neq k$ and $\sigma[root] > \sigma_{\max}$}{
        $\sigma[root] \gets \textsc{MarginalGain}(S, root, \Phi[1\dots R])$\\
        $last[root] \gets k$\\
        \If {$\sigma[root] > \sigma_{\max}$}{
            \textsc{Atomic\_Write\_Max}$(\&\sigma_{\max},\sigma[root])$
        }
    }
    \pardo{}{
        $L \gets \textsc{Update\_rec}(\sigma, s, mid, \Phi[1\dots n], G,k)$\\
        $R \gets \textsc{Update\_rec}(\sigma, mid+1, t, \Phi[1\dots n], G,k)$
    }
    $\text{tree}[m] \gets \arg\max(\sigma[L], \sigma[R])$\\
    \Return $\text{tree}[m]$
}
\end{algorithm}
}

\begin{algorithm}[t]
\small
\caption{Seed Selection based on \wintree{} \label{alg:winning tree}}
\SetKwProg{myfunc}{Function}{}{}
\SetKwFor{parForEach}{ParallelForEach}{do}{endfor}
\SetKwFor{mystruct}{Struct}{}{}
\SetKwFor{pardo}{In Parallel:}{}{}
\SetKwInput{Maintains}{Maintains}
\Maintains{Global variable $\globalmax$: the highest true score evaluated so far\\
\pushline
A winning tree $T$ with $n$ leaf nodes each storing a record.  \\
For a tree node $t\in T$, we use the following notations:\\
$t.id$: the id of the vertex stored in this node\\
$t.\parent$ / $t.\lleft$ / $t.\rright$ : the parent/left child/right child of node $t$\\
The \wintree{} is a max-priority-queue based on (stale) score $\lazygain{t.id}$ for each vertex.\\
}
\DontPrintSemicolon

\myfunc{\upshape\textsc{FindMax}(tree node $t$, current seed set $S$, sketches $\sketch{1..R}$)}{
    \lIf(\tcp*[f]{evaluated by parent} \label{line:stale_false}){$t.id=t.\parent.id$ }{$\stale\gets \false$}
    \lElse{$\stale\gets \true$ \label{line:stale_true}}
    \tcp{Skip a subtree if the max is stale and is smaller than the current best score; no re-evaluation needed for the entire subtree}
    \lIf{$\stale=\true$ and $\lazygain{t.id} < \globalmax$ \label{line:WinTree_skip}}{\Return}
    \If(\tcp*[f]{Current value is stale, re-evaluation needed}){$\stale=\true$}{
        $\lazygain{t.id} \gets$ \textsc{MarginalGain}$(S, t.id, \sketch{1..R})$\tcp*[f]{Re-evaluate} \label{line:WinTree_re-evaluate}\\
        \textsc{WriteMax}($\globalmax{}, \lazygain{t.id}$)\tcp*[f]{update the best score so far}\label{line:wintree-writemax}
    }
    \lIf{$t$ is a leaf}{\Return}
    \pardo{\label{line:divide_begin}}{
        \textsc{FindMax}($t.\lleft$, $S$, $\sketch{1..R}$) \\
        \textsc{FindMax}($t.\rright$, $S$, $\sketch{1..R}$)
        \label{line:divide_end}
    }
    \tcp{compare two branches and reset max}
    \lIf {$\lazygain{t.\lleft.id}> \lazygain{t.\rright.id}$ \label{line:update_id_begin}}{
        $t.id \gets t.\lleft.id$
    }\lElse{
        $t.id \gets t.\rright.id$
        \label{line:update_id_end}
    }
}
\myfunc{\upshape\textsc{NextSeed}($S, \sketch{1..R}$)}{
    $\globalmax \gets 0$\\ 
    \textsc{FindMax}$(T.\rt, S, \sketch{1..R})$
    \label{line:call_findmax}\\
    \Return{$T.\rt.id$}
}

\end{algorithm}

\hide{
\begin{algorithm}[ht]
\small
\caption{The Tournament-tree based \textsc{SuM-PQ} \label{alg:winning tree}}
\SetKwProg{myfunc}{Function}{}{}
\SetKwFor{parForEach}{ParallelForEach}{do}{endfor}
\SetKwFor{mystruct}{Struct}{}{}
\SetKwFor{pardo}{In Parallel:}{}{}
\SetKwInput{Maintains}{Maintains}
\Maintains{\\
\pushline
A tournament tree $T$ with $n$ leaf nodes each storing a record.  \\
For a tree node $t\in T$, we use the following notations:\\
$t.id$: the id of the vertex stored in this node\\
$t.\lleft$ / $t.\rright$: the left/right child of node $t$\\
$t.\parent$: the parent of node $t$\\
$t.\renew$: a flag indicating that this vertex has been re-evaluated.
The tournament tree is a max-priority-queue based on the (lazily-evaluated) marginal gain for each vertex,
i.e., $\lazygain{t.id}$.
}
\DontPrintSemicolon
\myfunc{\upshape\textsc{Sync}(tree node $t$)}{
    \lIf(\tcp*[f]{skip subtree if not renewed}){$t.\renew=\false$}{\Return{}}
    $t.\renew \gets \false$\tcp*[f]{set the flag back to $\false$}\\
    \lIf{$t$ is a leaf}{\Return{}}
    \pardo{}{
        $\textsc{Sync}(t.\lleft)$\\
        $\textsc{Sync}(t.\rright)$
    }
    \lIf {$\lazygain{t.\lleft.id}> \lazygain{t.\rright.id}$}{
        $t.id \gets t.\lleft.id$
    }\lElse{
        $t.id \gets t.\rright.id$ \tcp*[f]{compare two branches and reset max}
    }
}

\myfunc{\upshape\textsc{FindMax}(a pointer to a shared variable $\globalmax$, tree node $t$, current seed set $S$, sketches $\sketch{1..R}$)}{
    \lIf{$t.id=t.\parent.id$ }{$t.\renew\gets \true$}
    \tcp{Skip a subtree if the max (even if it can be stale) is smaller than the current best score}
    \lIf{$t.\renew = \false$ and $\lazygain{t.id} \leq \globalmax$}{\Return}
    \If(\tcp*[f]{Current value is stale}){$t.\renew=\false$}{
        $\lazygain{t.id} \gets$ \textsc{MarginalGain}$(S, t.id, \sketch{1..R})$\tcp*[f]{Re-evaluate}\\
        $t.\renew\gets \true$\\
        \textsc{WriteMax}($\globalmax{}, \lazygain{t.id}$)
    }
    \lIf{$t$ is a leaf}{\Return}
    \pardo{}{
        \textsc{FindMax}($\globalmax$, $t.\lleft$, $S$, $\sketch{1..R}$)\\
        \textsc{FindMax}($\globalmax$, $t.\rright$, $S$, $\sketch{1..R}$)
    }
}
\myfunc{\upshape\textsc{NextSeed}($S, \sketch{1..R})$)}{
    $\globalmax \gets 0$\tcp*[f]{The best marginal gain evaluated so far}\\
    \textsc{FindMax}$(\&\globalmax, T.\rt)$\\
    \textsc{Sync}($T.\rt$)\\
    $s^*\gets T.\rt.id$\\
    Remove $T.\rt$ from the tree\\
    \Return{$s^*$}
}
\end{algorithm}

\begin{algorithm}[ht]
\small
\caption{The Tournament-tree based \textsc{SuM-PQ} \label{alg:winning tree}}
\SetKwProg{myfunc}{Function}{}{}
\SetKwFor{parForEach}{ParallelForEach}{do}{endfor}
\SetKwFor{mystruct}{Struct}{}{}
\SetKwFor{pardo}{In Parallel:}{}{}
\SetKwInOut{Maintains}{Maintains}
\Maintains{A tournament tree $T$ with $n$ leaf nodes each corresponding to a record.
}
\myfunc{\textsc{Mark}(node $t$)}{
    \While{$t$ is not a leaf node}{
        $t.renew \gets 1$\\
        \lIf {$t.id = t.left.id$}{$t\gets t.left$}
        \lElse{$t\gets t.right$}
    }
    $t.renew \gets 1$
}

\myfunc{\textsc{Sync}(node $t$)}{
    \lIf{$t.renew=0$}{\Return{}}
    $t.renew \gets 0$\\
    \lIf{$t$ is a leaf}{\Return{}}
    \pardo{}{
        $\textsc{Sync}(t.\lleft)$\\
        $\textsc{Sync}(t.\rright)$
    }
    \lIf {$\sigma[t.\lleft.id]> \sigma[t.\rright.id]$}{
        $t.id \gets t.\lleft.id$
    }\lElse{
        $t.id \gets t.\rright.id$
    }
}

\myfunc{\upshape\textsc{FindMax}(a pointer to a shared variable $\sigma_{\max}$, tree node $t$)}{
    \lIf {$t.renew = 0$ and $\sigma[t.id] \leq \sigma_{\max}$}{\Return}
    \If{$t.renew=0$}{
        $\sigma[t.id] \gets \lazygain(t.id)$\\
        \textsc{Mark}($t$)\\
        \lIf {$\sigma[t.id] > \sigma_{\max}$}{\textsc{WriteMax}($\sigma_{\max}, \sigma[t.id]$)}
    }
    \lIf{$t$ is a leaf}{\Return}
    \pardo{}{
        \textsc{FindMax}($\sigma_{\max}$, $t.\lleft$)\\
        \textsc{FindMax}($\sigma_{\max}$, $t.\rright$)
    }
}
\myfunc{\textsc{NextSeed}($S, \sketch{1..R})$)}{
    $\sigma_{\max} \gets 0$\\
    \textsc{FindMax}$(\&\sigma_{\max}, T.root)$\\
    \textsc{Sync}($T.root$)
    \Return{$T.root$}
}
\end{algorithm}
} 

To support CELF efficiently, we use the internal nodes of \wintree{} to prune the search process.
Suppose the best true score evaluated so far is $\deltas$,
then if we see a subtree root $t$ with a stale score smaller than $\deltas$, we can skip the \emph{entire subtree}.
This is because all nodes in this subtree must have smaller stale scores than $t.id$,
which indicates even smaller true scores.
Although this idea is simple, we must also carefully maintain the \wintree{} structure, with the newly evaluated true scores.
We presented our algorithm in \cref{alg:winning tree} with an illustration in \cref{fig:winning_tree},
and elaborate on more details below.

\hide{
\myparagraph{CELF Optimization in \WinTree} \WinTree borrows the high-level idea of CELF when selecting the next seed: taking advantage of submodular property to skip re-evaluating vertices that can not be the next seed.  If a node is not updated and its marginal gain is smaller than the largest re-evaluated marginal gain so far, this node is impossible to be the next seed because, according to the sub-modular property, re-evaluation will not increase its marginal gain.
Besides, the root of a \WinTree stores the node with the maximum key in the tree, so all the nodes in the subtree rooted at this node are no larger than it and can also be skipped. In summary, we traverse the \WinTree from the top in parallel, whenever we visit a node that has not been re-evaluated and its marginal gain is smaller than the largest re-evaluated marginal gain so far, we skip traversing the whole subtree rooted at this node, and therefore skip re-evaluating the vertices stored in this subtree.
}



For $\textsc{NextSeed}$, we keep a global variable $\globalmax$ as the largest true score obtained so far, initialized to 0.
The algorithm calls the \textsc{FindMax}$(t, \dots)$ routine starting from the root, which explores the subtree rooted at $t$.
We first check if the score at $t$ is stale:
if $t$'s id is the same as its parent, the true score has been re-evaluated at its parent (\cref{line:stale_false}). 
Based on the node's status, 
we discuss three cases.
First, if the score is stale and is lower than $\globalmax$,
as discussed above, we can skip the entire subtree and terminate the function (\cref{line:WinTree_skip}).
Second, if the score is stale but is higher than $\globalmax$, we have to re-evaluate the vertex $t.id$,
since it may be a candidate for the seed (\cref{line:WinTree_re-evaluate}).
We then use the atomic operation \writemax{} to update $\globalmax$ by this true score if it is better.
The third case is when the score is not stale.
Although no evaluation is needed on $t.id$, we should still explore the subtrees, since with the newly evaluated score, the subtree structure (i.e., the ids of the internal nodes) may change.
Hence, in cases 2 and 3, we recursively explore the two subtrees in parallel (Lines \ref{line:divide_begin} to \ref{line:divide_end}).
After the recursive calls, we set the vertex id as one of its two children with a higher score (Lines \ref{line:update_id_begin} to \ref{line:update_id_end}).

\revision{
\begin{theorem}[\wintree{} Correctness]\label{lemma:wintree_connectness}
\cref{alg:winning tree} always selects the next seed with the largest marginal gain. 
\end{theorem}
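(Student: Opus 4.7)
The plan is to mirror the structure of the \ptree{} correctness proof (\cref{lemma:connectness}) but adapted to the recursive top-down traversal used by \wintree{}. I would establish the following invariant of \textsc{FindMax}$(t,S,\sketch{1..R})$: when the call returns, (a)~$t.id=\arg\max\{\lazygain{v}:v\text{ in the subtree rooted at }t\}$, and (b)~for every vertex $v$ in that subtree, either $\lazygain{v}$ equals the true score $\truegain{v\,|\,S}$, or $\lazygain{v}$ is a stale upper bound that was strictly smaller than $\globalmax$ at the moment the algorithm pruned the subtree containing $v$ at \cref{line:WinTree_skip}. This would be proven by structural induction on the subtree at $t$. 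For a leaf, the claim is immediate: either the leaf is pruned (so (b) holds by the skip condition) or re-evaluated (so (a) and (b) hold trivially). For an internal node, the parallel recursive calls at \cref{line:divide_begin,line:divide_end} establish the invariant on each child's subtree by the induction hypothesis; the assignment at \cref{line:update_id_begin,line:update_id_end} then lifts (a) to $t$ by taking the argmax of the two children's argmaxes, and (b) is inherited verbatim from both children.

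The hard part will be justifying the skip step under parallelism. At any instant during execution, $\globalmax$ is written only via \writemax{} with the true score of some already-evaluated vertex, so $\globalmax\le\max_{u\in V}\truegain{u\,|\,S}$ throughout. Hence, whenever a subtree rooted at $t$ is pruned because $\lazygain{t.id}<\globalmax$, submodularity (\cref{eq:submodular}) together with $\lazygain{\cdot}$ being an upper bound on true scores gives $\truegain{v\,|\,S}\le\lazygain{v}\le\lazygain{t.id}<\globalmax$ for every $v$ in that subtree, so no pruned vertex can attain the global maximum true score. Concurrent growth of $\globalmax$ from the sibling recursive call does not invalidate this reasoning because $\globalmax$ is monotone nondecreasing, so any skip that is correct at the moment it is made remains correct thereafter.

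Finally, I would apply the invariant at the root. Because the root has no parent whose $id$ matches, the \stale{} flag is set to \true{} at \cref{line:stale_true} on the outermost call, so the root vertex is re-evaluated and its $\lazygain$ becomes exact. Let $s^\star\in\arg\max_{v\in V}\truegain{v\,|\,S}$. Since $\lazygain{s^\star}\ge\truegain{s^\star\,|\,S}\ge\globalmax$ throughout, the skip at \cref{line:WinTree_skip} can never fire on $s^\star$'s leaf, so $s^\star$ is re-evaluated along the downward path, yielding $\lazygain{s^\star}=\truegain{s^\star\,|\,S}$. Clause (a) at the root then gives $\lazygain{T.\rt.id}\ge\lazygain{s^\star}=\globalmax$ at termination, so by clause (b) the value $\lazygain{T.\rt.id}$ cannot be stale and must equal $\truegain{T.\rt.id\,|\,S}$; combined with the maximality of $\truegain{s^\star\,|\,S}$ and the tie-breaking convention from \cref{sec:selecting}, this forces $T.\rt.id=s^\star$, completing the proof.
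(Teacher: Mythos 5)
Your proof is correct and follows essentially the same route as the paper's: both rest on the monotonicity of $\globalmax$ under \writemax{}, the case split between re-evaluated vertices (recorded in $\globalmax$) and pruned vertices (dominated via submodularity and the stale upper bound), and the fact that \cref{line:update_id_begin,line:update_id_end} keep the tree a valid winning tree so the maximum is carried to the root. Your version merely makes the paper's argument more explicit by packaging it as a structural-induction invariant on \textsc{FindMax}; no substantive difference.
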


\iffullversion{
\begin{proof}
\hide{
  Let $\globalmax$ be the largest true score obtained by the algorithm after \cref{line:call_findmax} in \cref{alg:winning tree}.
  We first show that tree $T$ after \cref{line:call_findmax} is a valid \wintree{}.
  Based on the node's status, there are two cases. First,  if the score is stale, then it must be lower than $\globalmax$, otherwise, it will be re-evaluated by the algorithm (\cref{line:WinTree_re-evaluate}). In this case, the algorithm will skip the entire subtree, so the node is still the larger one of its children. Second, if the score is not stale, the algorithm will update its subtree recursively (\cref{line:divide_begin}-\cref{line:divide_end}) and update the node as the larger one of its children (\cref{line:update_id_begin}-\ref{line:update_id_end}). In both cases, the node is the larger one of its children, so the tree $T$ is a valid \wintree{}.

  Then we will show that $\gain{T.root.id}=\max_{v\in V}\{\gain{v}\}$. $T.root$ is not stale because when calling \textsc{FindMax} at \cref{line:call_findmax}, $\globalmax$ is 0, so $T.root$ must be re-evaluated. Then by the property of $\wintree{}$, the root stores the largest score among all the nodes. For any  stale node $v$, $\truegain{T.root}>\lazygain{v} \geq \truegain{v}$. For any not stale node $v$, $\truegain{T.root}>\truegain{v}$. Therefore, $T.root$ has the largest $\truegain$ among all the vertices.
  }
  \hide{First, note that the \wintree{} is always valid after each round, where each internal node records the vertex with the highest (stale) score in its children. This is because any re-evaluation at an internal node must result in a comparison later to ensure the stored value is the larger one in its children (\cref{line:update_id_begin,line:update_id_end}). }

  Due to \cref{line:update_id_begin,line:update_id_end}, the \wintree{} is always valid after each round, where each internal node records the vertex with the higher (stale) score in its children. We then show that the final selected seed has the highest true score. First, using write-max, the value of $\globalmax$ is non-decreasing during the algorithm. Let $\Delta_m$ be the final highest score selected by the algorithm. This means that throughout the algorithm, $\globalmax \le \Delta_m$. For any vertex $v$, there are two cases.
   The first is when $v$ has been re-evaluated, so its true score has updated $\globalmax$ by write-max (\cref{line:wintree-writemax}), and therefore $\gain{v}\le \globalmax \le \Delta_m$.
  Otherwise, $v$ is skipped by \cref{line:WinTree_skip} at a node $t$, indicating $\gain{v} \le \lazygain{v} \le\globalmax \le \Delta_m$.
  Hence, $\Delta_m \ge \gain{v}$ for all $v$.

  Note that when $\globalmax$ obtained its final value $\Delta_m$, the corresponding vertex will be carried all the way up to the root (by \cref{line:update_id_begin}-\ref{line:update_id_end}), which proves the stated theorem.
  \end{proof}
}
}

\ifconference{
We present the proof in the full version of this paper.
}Unlike \ptree{s},
we cannot prove strong bounds for the number of re-evaluations in \wintree{}---since the parallel threads are highly asynchronous,
the progress of updating $\globalmax$ and pruning the search cannot be guaranteed.
However, we expect \wintree{} to be more practical than \ptree{} for a few reasons. First,
\wintree{} is a complete binary and can be maintained in an array,
which requires smaller space (no need to store metadata such as pointers in \ptree{s}).
Second, the \ptree{} algorithm requires $O(\log n)$ batches and synchronizing all threads between batches.
Such synchronization may result in scheduling overhead, while the \wintree{} algorithm is highly asynchronous.
Most importantly, \wintree{} does not maintain the total order, and the construction time is $O(n)$ instead of $O(n\log n)$.
In \cref{sec:exp-ana}, we experimentally verify that although \wintree{} incurs more re-evaluations than \ptree{s}, it is faster in most tests.

\hide{
During the traversing, whenever we visit a node whose id is different from its parent, we encounter a new vertex that we have never seen before, and therefore, it is not re-evaluated (or say stale, \cref{line:stale_true}). Otherwise, the vertex in this node is already re-evaluated by its ancestor, so the value is not stale (\cref{line:stale_false}). If the node is stale and the marginal gain is not larger than the largest gain so far, by the CELF optimization, we will skip traversing the subtree (\cref{line:WinTree_skip}). If the node is stale but the marginal gain is greater than the largest gain so far, it is a possible candidate for the seed. We will re-evaluate its marginal gain (\cref{line:WinTree_re-evaluate}) and update the best value so far by \textsc{WriteMax}. \textsc{WriteMax} is an atomic operation that guarantees correctness when the global variable is updated simultaneously by multiple threads.
Then we continue traversing the tree in parallel (\cref{line:divide_begin} to \cref{line:divide_end}). After recursively traversing the sub-trees, the marginal value of the child may be different, we need to update the id to the same as the larger child (\cref{line:update_id_begin} and \cref{line:update_id_end}), which re-ordered the tree to be a valid winner tree.
When $\textsc{FindMax}$ called on the root of the \WinTree returns (\cref{line:call_findmax}), the \WinTree is a valid winner tree, so the seed vertex is stored in the root node. which is the vertex with the largest re-evaluated marginal gain(see \cref{line:seed_root}).
}
\hide{
It is worth noting that, although we use $t.\lleft$, $t.\rright$ and $t.\parent$ in the pseudocode,
they do not need to be stored explicitly in tree nodes:
as mentioned above, \wintree{} can be maintained in an array, and we can compute the index of a node's left/right child and parent in $O(1)$ cost.
}


\hide{
Ideally, we want to re-evaluate the vertices in $F_i$ in parallel and select the maximum vertex among them as the $i$-th seed.
Note that $F_i$ are the top $|F_i|$ largest vertices in $\Delta_{i-1}[\cdot]$ (recall $F_i = \{v | \Delta_{i-1}[v] > \Delta^*\}$).
Assuming we could know $|F_i|$ ahead, we need a container for vertices that can maintain them by their marginal gains, pop the top $|F_i|$ vertices and push back the vertices in $|F_i|$ back to the container efficiently. A binary search tree (BST) can store items in the order of their keys.
Besides, an existing Parallel Augmented Maps (PAM) uses the underlying balanced binary tree structure using join-based algorithms \cite{blelloch2016just, sun2018pam}.
It supports batch deletion ($\textsc{SplitAndRemove}$ in \cref{alg:BST}) and batch insertion ($\textsc{BatchInsert}$ in \cref{alg:BST}). $\textsc{SplitAndRemove}(k)$ splits the tree at the $k$-th node, returns and removes the subtree of the first $k$ nodes and keeps the remaining nodes. \textsc{SplitAndRemove($|F_i|$)} can implement the function of popping the top $|F_i|$ vertices.
\textsc{BatchInsert}($F_i$) can implement the function of pushing $F_i$ back into the tree in parallel.


However, we can not know $|F_i|$ in advance.  To make the number of vertices popped as close as possible to $|F_i|$  and gain parallelism as well, we pop the top $1, 2, 4, 8, ...$ nodes in a prefix-doubling manner until the total number of vertices popped exceeds $|F_i|$ (the condition to stop is that the maximum re-evaluated marginal gain of popped vertices is greater than the maximum old marginal value of not popped vertices). The pseudo-code of this method is shown in \cref{alg:BST}. In round $i$, we pop the top $2^i$ nodes by $\textsc{SplitAndRemove}(2^i)$ (see \cref{line:split}). Then we re-evaluate these nodes and update their marginal value stored in the $\Delta[\cdot]$ (see \cref{line:BST_evalue_begin}-\cref{line:BST_evalue_end}). We keep track of the vertex with the largest new marginal value so far as the temporary seed (see \cref{line:BST_update_seed}). We repeat this process until the new value of the temporary seed is larger than the largest value in the tree (see \cref{line:BST_stop}). When we stop, the temporary seed is the vertex with the largest new marginal value among all the vertices, which means it is the seed we need. Finally, we insert all the vertices that have been popped out back into the tree with their new marginal value by $\textsc{BatchInsert}$ (see \cref{line:batch_insert}).

\begin{lemma}[Correctness]\label{lemma:connectness}
  Given a set of selected seeds $S$, the next seed selected by the \BST method is $\argmax \{\Delta [v~|~S]~|~v \in V\}$.
\end{lemma}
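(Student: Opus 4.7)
The plan is to establish two invariants and then combine them: (i) after \textsc{FindMax} returns on a subtree rooted at $t$, the node $t$ still stores the id of one of its two children with the higher stale score (i.e., the \wintree{} property is preserved), and (ii) for every vertex $v \in V$, by the end of the algorithm, $\gain{v} \le \globalmax$, where $\globalmax$ is the final value of the global variable. Once these are in place, the seed is the id stored at the root, and I will argue that it equals $\arg\max_v \gain{v}$.

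First I would verify invariant (i) by induction on subtree height. Leaves trivially satisfy it. For an internal node $t$, either the algorithm returns early at \cref{line:WinTree_skip} (in which case nothing in the subtree changed, so the pre-existing \wintree{} property still holds), or the two recursive calls on $t.\lleft$ and $t.\rright$ complete, after which \cref{line:update_id_begin}--\ref{line:update_id_end} explicitly reset $t.id$ to whichever child now has the larger $\lazygain{\cdot}$. In particular, whenever a node is not skipped, its stored id is refreshed from freshly propagated children, preserving the winning-tree invariant up the call stack.

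Next I would prove invariant (ii) by a case split on $v$. If $v$ was re-evaluated at some node $t$ during the call (\cref{line:WinTree_re-evaluate}), then the immediately following \writemax{} on \cref{line:wintree-writemax} guarantees $\globalmax \ge \gain{v}$ at that point, and since \writemax{} is monotone, this continues to hold at the end. If $v$ was never re-evaluated, then the traversal must have skipped some ancestor $t'$ of $v$'s leaf via \cref{line:WinTree_skip}; at that moment, $\lazygain{t'.id} < \globalmax$ held, and by the \wintree{} property together with the fact that stale scores in the tree are upper bounds of the true scores (by submodularity, \cref{eq:submodular}), $\gain{v} \le \lazygain{v} \le \lazygain{t'.id} < \globalmax$. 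Here I need to be slightly careful: the relevant $\lazygain{t'.id}$ is indeed an upper bound on stale scores throughout $t'$'s subtree because $t'.id$ stored the maximum among its descendants in the most recent valid \wintree{} state, and no update in the current call has overwritten these stale values without also re-evaluating them.

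Finally, combining (i) and (ii), the root's stored id after the call is the leaf whose stale score equals $\max_{v \in V} \lazygain{v}$ among the evaluated set, and its true score equals the final $\globalmax$; since (ii) shows every $v$ satisfies $\gain{v} \le \globalmax$, the returned vertex $T.\rt.id$ achieves $\max_v \gain{v}$. The main obstacle is the concurrency argument supporting the skip in \cref{line:WinTree_skip}: because threads update $\globalmax$ asynchronously, I must argue that any threshold value used for pruning is always a legitimate lower bound on some actually-achieved true score (which follows because $\globalmax$ is only raised via \writemax{} on a freshly computed $\gain{\cdot}$), and that the stale value $\lazygain{t'.id}$ examined at skip time still dominates the true scores of every descendant (which follows from submodularity plus the \wintree{} invariant established in (i)).
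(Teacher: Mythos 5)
Your proposal proves the wrong statement. The lemma concerns the \BST{} method, i.e., the \ptree{}-based prefix-doubling procedure of \cref{alg:BST} in \cref{sec:BST_section}: extract batches of size $1,2,4,\dots$ from a binary search tree ordered by stale score, re-evaluate each batch in parallel, and stop when the best re-evaluated true score exceeds $T.\textsc{Max}()$. Every line you cite --- \textsc{FindMax}, $\globalmax$, \WriteMax{}, the skip at \cref{line:WinTree_skip}, the re-evaluation at \cref{line:WinTree_re-evaluate}, the child-comparison at \cref{line:update_id_begin}--\ref{line:update_id_end}, and the return of $T.\rt.id$ --- belongs to \cref{alg:winning tree}, the \wintree{} data structure of \cref{sec:WinTree}. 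The \ptree{} algorithm has no winning-tree invariant to maintain, no asynchronous global maximum, and no subtree pruning; its correctness rests on an entirely different mechanism, so none of your two invariants applies to the object the lemma is about. (As a proof of the \emph{other} correctness theorem, \cref{lemma:wintree_connectness}, your argument essentially matches the paper's: validity of the winning tree after each call, monotonicity of $\globalmax$ under \WriteMax{}, and a case split on evaluated versus skipped vertices.)

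The argument actually needed here is shorter. Partition $V$ into the vertices extracted from the tree across the batches $B_0,\dots,B_{j-1}$ and the vertices still in $T$ when the loop terminates. Every extracted vertex has been re-evaluated, and $s^*$ is by construction (\cref{line:BST_max}, \cref{line:BST_update_seed}) the one with the largest \emph{true} score among them, so $\gain{s^*}\ge\gain{v}$ for all extracted $v$. For any $u$ still in $T$, the stop condition at \cref{line:BST_stop} gives $\gain{s^*}=\lazygain{s^*}>T.\textsc{Max}()\ge\lazygain{u}$, and submodularity (\cref{eq:submodular}) guarantees the stale score is an upper bound on the true score, so $\lazygain{u}\ge\gain{u}$. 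Chaining these yields $\gain{s^*}\ge\gain{v}$ for all $v\in V$. No concurrency reasoning is required, because the batches are separated by synchronization points and the comparison against $T.\textsc{Max}()$ happens after each batch completes.
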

\begin{proof}
The vertices in $V$ can be classified into two sets, $V_T$ is the set containing vertices remaining in the tree after prefix-doubling ($T$ in \cref{line:batch_insert}) and  $V_{\bar{T}}$ is the set containing vertices removed from the tree.
$V_{\bar{T}}$ stores vertices re-evaluated in the current round. Seed vertex $s^*$ keeps track of the vertex having the largest re-evaluated marginal gain, which means $s^* \gets \argmax \{ \Delta[v~|~S]~|~v\in V_{\bar{T}}\}$.
When the prefix-doubling re-evaluation stops, $\Delta[s^*] > \max\{\Delta[v]~|~v\in V_T\}$ (see \cref{line:BST_stop}), where $\Delta[\cdot]$ stores lazily-computed marginal values. Because $s^*$ is re-evaluated, $\Delta[s^*] = \Delta[s^*~|~S]$. Because vertices in $V_T$ are not re-evaluated and the property of submodular, $\Delta[v~|~S] \leq \Delta[v]$ for $v\in V_T$.
$\Delta[s^*] = \Delta[s^*~|~S] > \max \{\Delta[v]~|~v\in V_T\} \geq \max\{\Delta[v~|~S]~|v\in V_T\}$. Therefore, $\Delta[s^*] \geq \max \{\Delta[v]~|~v \in V_T \cup V_{\bar{T}}\}$, which proofs $s^* \gets \argmax \{\Delta[v~|~S]~|v\in V\}$.
\end{proof}

\begin{lemma}[Efficiency]\label{lemma:efficiency}
The number of vertices re-evaluated by the \BST method to select the one seed is at most twice that of CELF.
\end{lemma}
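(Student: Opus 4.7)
The plan is to prove this in two stages: first handle the idealized single-round case where both CELF and the \ptree{}-based algorithm enter round $i$ with identical stale scores $\lazygain{\cdot}$, and then lift this to the multi-round setting via an amortized token argument. The cleanest version of the statement is $\sum_i |F'_i| \le 2 \sum_i |F_i|$, where $F_i$ and $F'_i$ are the evaluation sets for CELF and \ptree{} in round $i$ respectively.

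For the single-round base case, I would reorder the vertices in decreasing order of their shared stale scores and let $v_\ell$ be the last vertex CELF evaluates, so $|F_i| = \ell$. Because \ptree{} processes vertices in the same order in batches of sizes $1, 2, 4, \dots, 2^j$, the batch that contains $v_\ell$ must satisfy $2^j - 1 < \ell$ (otherwise the loop would already have stopped before reaching $v_\ell$), hence $|F'_i| = 2^{j+1} - 1 < 2\ell = 2|F_i|$. This gives the factor-of-two slack that must then be preserved in the amortized argument.

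For the general case, I would introduce two score arrays $\lazygainc{\cdot}$ and $\lazygaint{\cdot}$ for CELF and \ptree{} at the start of round $i$, and reorder vertices into sequences $v_{1..n}$ and $u_{1..n}$ accordingly. Letting $\deltas$ be the true score of the round's chosen seed, define $x_i$ as the rank of the last $v_j$ with $\lazygainc{v_j} \ge \deltas$ and $y_i$ analogously for $u$. By definition $|F_i| = x_i$, and by the base-case argument applied to the actual \ptree{} batches, $|F'_i| \le 2 y_i$. So it suffices to show $\sum_i y_i \le \sum_i x_i$. The core idea is to partition $u_{1..y_i}$ into $U_1 = u_{1..y_i} \cap v_{1..x_i}$, which is charged to $x_i$ directly, and $U_2$, whose vertices $t$ must satisfy $\lazygaint{t} > \lazygainc{t}$; such a gap can only be created by a previous round $j < i$ in which CELF evaluated $t$ but \ptree{} did not, i.e., $t \in F_j \setminus F'_j$. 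At that moment I would deposit a token on $t$ accounting against $x_j$, and argue that since $t \notin \bigcup_{j \le j' < i} F'_{j'}$ the token is still available and can pay for counting $t$ in $y_i$ for free.

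The main obstacle will be rigorously verifying that tokens are never double-spent: once $t$ appears in some $F'_{i}$ (so it contributes to $y_i$), its stale score is refreshed and the condition $\lazygaint{t} > \lazygainc{t}$ that triggered the token may cease to hold, so I must show the token is consumed exactly once and that any future divergence is paid by a fresh token deposited in a later CELF-only round. I also need to be careful about ties (resolved by vertex id, as the paper assumes) and about the case $s^* \in U_2$, to make sure the seed vertex itself is correctly accounted. Once these bookkeeping details are settled, combining $\sum y_i \le \sum x_i$ with $|F'_i| \le 2y_i$ and $|F_i| = x_i$ yields $\sum |F'_i| \le 2 \sum |F_i|$, which is exactly the claimed bound.
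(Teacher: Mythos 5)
Your proposal follows essentially the same route as the paper's proof: the same single-round bound $|F'_i|\le 2|F_i|$ via the prefix-doubling batch containing the last CELF-evaluated vertex, the same definitions of $x_i$ and $y_i$ as ranks of $\deltas$ in the two stale-score orderings, and the same partition of $u_{1..y_i}$ into $U_1$ and $U_2$ with tokens deposited in rounds where CELF evaluates a vertex that \ptree{} does not. The bookkeeping concerns you flag (token availability until consumption) are handled in the paper exactly as you anticipate, by observing that $t\notin F'_{j'}$ for all $j\le j'<i$ means $t$ is not counted in any intervening $y_{j'}$.
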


\begin{proof}
Let's first prove that given the same lazy-evaluated marginal gain array $\Delta_{i-1} [\cdot]$, the \BST method re-evaluates at most twice as many vertices as that of CELF.
Remember that the vertices CELF will evaluate to select the $i$-th seed are $F_i = \{ v | \Delta_{i-1} [v] > \Delta^*\}$, where $\Delta_{i-1}[cdot]$ is the lazy-updated marginal value prior to the $i$-th round (the $\Delta[cdot]$ that passed to the $i$-th round $\textsc{NextSeed}$).
$|F_i|$ is the number of vertices CELF re-evaluated, and is also the rank of the seed in $\Delta_{i-1}[\cdot]$.
In our \BST method, we pop the top $1,2,4,8,...$ vertices until the seed is popped. If we use $j$ as the number of prefix doubling rounds we run, the number of vertices we evaluate is $2^j-1$. The actual number of rounds we run to find the seed is the $j^*$ that satisfies $2^{j^* -1} \leq |F_i| \leq 2^{j^*} -1$. The number of our evaluates over that of CELF is $\frac{2^{j^*} -1}{|F_i|} \leq \frac{2^{j^*} -1}{2^{j^* -1}} < 2$.

Starting from the second round, $\Delta[\cdot]$ passed to CELF and \BST method are different, because \BST evaluates a superset of $F_1$ in the first round. Then we will show that CELF will evaluate fewer vertices given $\Delta_{\BST}[\cdot]$ than given $\Delta[\cdot]$, which are the lazy-evaluated marginal gain arrays maintained by \BST and CELF respectively before round $i$.

Given that \BST evaluates a superset of vertices evaluated by CELF , by the property of submodular, $\Delta_{\BST}[v] \leq \Delta[v] $, for $v \in V$.
Let $F_{\BST, i} = \{v | \Delta_{\BST}[v] > \Delta^* \}$ be the vertices evaluated by CELF given $\Delta_{\BST}[\cdot]$.
The number of vertices in $\Delta_{\BST}[\cdot]$ that are greater than $\Delta^*$ is no larger than that in $\Delta[\cdot]$, which means $|F_{\BST,i}| \leq |F_i|$.
According to the previous proof, the number of vertices evaluated by \BST is smaller than $2\cdot |F_{\BST, i}|$ , which is also no more than $2\cdot|F_i|$.

\letong{The proof is not very formal, we do not proof $\Delta_{\BST, i-1}[v] \leq \Delta_{\text{CELF}, i-1}[v]$ always hold for all $v\in V$ and all $2 \leq i \leq K$.}

\end{proof}

\begin{lemma}[Work and Span for one seed selection]\label{lemma:work_span_one}
The BST method has $O((W_\Delta + \log n) \cdot |F_i| )$ work and $O((D_{\Delta}+\log n)\cdot \log |F_i| )$ span to select one seed, where $|F_i|$ is the number vertices will be evaluated by CELF algorithm, $W_{\Delta}$ and $D_{\Delta}$ are the work and span to re-evaluate the marginal gain of a vertex.
\end{lemma}
\begin{proof}
According to the PAM paper~\cite{sun2018pam}, \textsc{SplitAndRemove} and \textsc{Max} have $O(\log n)$ work and $O(\log n)$ span, where $n$ is the tree size. \textsc{BatchInsert} has $O(m\log m +m\log n)$ work and $O(\log m \log n)$ span, where $m$ is the size of unordered insertion batch, $n$ is the sum of tree size and batch size.
According to the conclusion from \cref{lemma:efficiency}, the number of vertices we evaluated in total is $O(|F_i|)$, and the number of prefix-doubling rounds is $O(\log |F_i|)$. Note that the tree size in \cref{alg:BST} is different in each round, for simplicity, we all use $O(n)$ to represent tree size.
From \cref{alg:BST}, in the $j$-th round,  the batch size is $B_j$.  \textsc{SplitAndRemove} in \cref{line:split} has $O(\log n)$ work and span. Re-evaluating marginal gains of the batch in \cref{line:BST_evalue_begin}-\cref{line:BST_evalue_end} has $O(B_j \cdot W_{\Delta})$ work and $O(\log B_j + D_{\Delta})$ span. Finding the maximum marginal value in a batch \cref{line:BST_max} is bounded by the re-evaluating step. Finding the max value in the tree (\cref{line:BST_stop}) has $O(\log n)$ work and span.
Note that $\sum_{j} B_j = O(F_i)$, $\sum_j{\log B_j}=O(\log |F_i|)$.
The total work and span for the previous cost in $O(\log|F_i|)$ rounds is $O(W_{\Delta} |F_i| + \log|F_i| \log n)$ and $O(D_{\Delta} \cdot \log |F_i|  + \log |F_i| \log n)$.
After that, we insert all the removed vertices except the seed into the tree, where the number of vertices inserted is $O(|F_i|)$ and the sum of the tree size and insertion batch size is $O(n)$, the work and span for batch insertion in \cref{line:batch_insert} are $O(|F_i|\log |F_i|\log n)$ and $O(\log |F_i|\log n)$.
Therefore, the total work and span of finding the $i$-th seed is $O((W_\Delta + \log n) \cdot |F_i| )$ and $O((D_{\Delta}+\log n)\cdot \log |F_i| )$.
\end{proof}

\begin{theorem}
\letong{replace $W_{\Delta}$, $D_{\Delta}$}
Given the same sketches $\sketch{1..R}$, the BST method will select the same seed set as CELF with $O((W_{\Delta}+\log n)\sum_{i=1}^{K}(|F_i|))$ work and \\
$O((D_{\Delta}+\log n)\sum_{i=1}^K \log |F_i|)$ span, where $|F_i|$ is the number vertices will be evaluated by CELF algorithm to select seed $i$, $W_{\Delta}$ and $D_{\Delta}$ are the work and span to re-evaluate the marginal gain of a vertex.
\end{theorem}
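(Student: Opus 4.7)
The plan is to decompose the theorem into three claims and handle them separately: (i) \ptree{}-based selection produces the same seed set as CELF given identical sketches, (ii) the total work is $O((W_{\Delta}+\log n)\sum_i |F_i|)$, and (iii) the total span is $O((D_{\Delta}+\log n)\sum_i \log |F_i|)$. The correctness claim (i) is by induction on the seed index. In round $i$, assume CELF and \cref{alg:BST} have already selected the same seeds $s_1,\dots,s_{i-1}$. Then both operate on the same current seed set $S$ and the same sketches $\sketch{1..R}$, so the true marginal gains $\gain{v\mid S}$ are identical functions of $(S,\sketch{1..R})$, even though the stale arrays $\lazygain{\cdot}$ in the two data structures may differ. \cref{lemma:connectness} guarantees that \cref{alg:BST} returns $\arg\max_v \gain{v\mid S}$, matching CELF; the base case $i=1$ holds because both begin with $\lazygain{v}=\marginal(\emptyset,v,\sketch{1..R})$ for every $v$.

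For (ii), I would bound the work per round and aggregate with \cref{lemma:efficiency}. Let $|F'_i|$ be the number of vertices re-evaluated by \ptree{} in round $i$; the prefix-doubling loop then runs $\lceil\log_2(|F'_i|+1)\rceil$ iterations, extracting a total of $|F'_i|$ keys. Each \textsc{SplitAndRemove} of size $b$ costs $O(b+\log n)$ work on a \ptree{}, each re-evaluation costs $W_{\Delta}$, each \textsc{Max} costs $O(\log n)$, and the final \textsc{BatchInsert} of the $|F'_i|-1$ non-seed vertices costs $O(|F'_i|\log n)$ work. Summing the three contributions, the per-round work is $O((W_{\Delta}+\log n)|F'_i|)$. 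Aggregating across $K$ rounds and applying \cref{lemma:efficiency}'s amortized bound $\sum_i |F'_i|\le 2\sum_i|F_i|$ yields the stated $O((W_{\Delta}+\log n)\sum_i|F_i|)$ total work, with the $O(n\log n)$ initial tree construction absorbed into the first round since $|F_1|=\Theta(n)$.

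For (iii), the span of round $i$ is controlled by its $O(\log |F'_i|)$ sequential prefix-doubling iterations. Each iteration has span $O(D_{\Delta}+\log n)$: parallel \textsc{SplitAndRemove} and \textsc{Max} run in $O(\log n)$ span, the parallel batch re-evaluation has span $O(D_{\Delta})$, and the comparison with the tree top takes $O(\log n)$; the concluding \textsc{BatchInsert} contributes only polylogarithmic span. Summing over rounds gives $O((D_{\Delta}+\log n)\sum_i\log|F'_i|)$ total span.

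The main obstacle will be closing the gap between $\sum_i\log|F'_i|$ and the target $\sum_i\log|F_i|$, because \cref{lemma:efficiency} only supplies the aggregate $\sum_i|F'_i|\le 2\sum_i|F_i|$, not the per-round $|F'_i|\le 2|F_i|$. I plan to close this by refining the token argument in the proof of \cref{lemma:efficiency}: each vertex counted in $|F'_i|$ is either (a) also counted in $|F_i|$ and thus contributes at most one to the corresponding $\log|F_i|$ bound, or (b) charged to a token saved from an earlier CELF round $j<i$. With this finer per-round accounting, $\log|F'_i|$ can be expressed as $O(\log|F_i|+\log(\text{tokens released in round }i))$, and summing by concavity of $\log$ (Jensen) closes the gap. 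As a fallback, the coarse bound $|F'_i|\le n$ gives $\log|F'_i|=O(\log n)$ per round, matching the claimed span whenever each $|F_i|$ is polynomial in $n$, which covers all regimes of practical interest.
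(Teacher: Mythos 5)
Your decomposition matches the paper's: correctness by induction on rounds via \cref{lemma:connectness}, plus a cost bound summed over rounds. The correctness part is fine and identical in spirit to the paper's argument. Your work bound is also essentially sound, and in fact rests on firmer ground than the paper's own proof: you charge each round $O((W_\Delta+\log n)|F'_i|)$ and then aggregate with the summed guarantee $\sum_i|F'_i|\le 2\sum_i|F_i|$ of \cref{lemma:efficiency}, whereas the paper routes the cost through a per-round lemma (``selecting the $i$-th seed costs $O((W_\Delta+\log n)|F_i|)$ work and $O((D_\Delta+\log n)\log|F_i|)$ span''), which tacitly assumes the per-round bound $|F'_i|=O(|F_i|)$ -- a bound that is only proved when CELF and the \ptree{} enter the round with identical stale scores. (One small slip: $|F_1|=\Theta(n)$ is false -- the initial scores are exact, so CELF evaluates a single vertex in round 1 -- but neither the statement nor the paper's proof accounts for the $O(n\log n)$ construction cost anyway; the published \cref{thm:ptreecost} adds that term explicitly.)

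The genuine gap is exactly where you located it, the span bound, and your proposed repair does not close it. From the token argument one only gets $|F'_i|\le 2y_i$ with $y_i\le x_i+t_i$, where the tokens spent satisfy the aggregate constraint $\sum_i t_i\le\sum_i x_i$; this does not force $\sum_i\log|F'_i|=O(\sum_i\log|F_i|)$, and Jensen runs in the wrong direction. Concretely, if CELF performs $K^2$ evaluations in one round $j$ (vertices the \ptree{} had refreshed earlier), and those tokens are then spent at a rate of roughly $K$ per round over the remaining rounds, then $\sum_i\log t_i=\Theta(K\log K)$ while the round-$j$ evaluations contribute only $\Theta(\log K)$ to $\sum_i\log|F_i|$; so the number of prefix-doubling iterations, hence the span, is not controlled by $O(\sum_i\log|F_i|)$ via the aggregate lemma. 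Your fallback $\log|F'_i|=O(\log n)$ gives span $O((D_\Delta+\log n)K\log n)$, which is precisely the $\tilde{O}(kD_\Delta)$ bound the paper proves in \cref{thm:ptreecost}, not the stated $O((D_\Delta+\log n)\sum_i\log|F_i|)$ -- the two differ exactly when many $|F_i|$ are tiny (note the stated bound even degenerates to $0$ per round when $|F_i|=1$, although each of the $K$ sequential rounds costs at least $D_\Delta$ span). In short: the statement's span bound needs a genuinely per-round relation between $|F'_i|$ and $|F_i|$ that neither your aggregate charging nor the paper's published efficiency lemma supplies, which is why the paper's final version restates the span as $\tilde{O}(kD_\Delta)$.
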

\begin{proof}
Firstly, we want to show that the seed set $\{s_1, \dots, s_K\}$ selected by \BST method is the same as CELF. We will prove it by induction.
In the first round, given the empty seed set $S$,  CELF and \BST will both select the vertex with the largest marginal gain. Assume until round $i-1$, CELF and \BST select the same seeds $\{s_1, \dots, s_{i-1}\}$. In round $i$, given the same seed set $\{s_1, \dots, s_{i-1}\}$, by \cref{lemma:connectness}, they will select the same seed $s^*$, so $S\gets \{s_1,\dots,s_{i-1}\} \cup \{s^*\}$ are still the same. Therefore, after the $K$-th round ($K\geq1$), the seeds selected by CELF and \BST are the same.

Then, we want to show the work and span of \BST method for selecting $K$ seeds. By \cref{lemma:work_span_one}, selecting the $i$-th seed has $O((W_\Delta + \log n) \cdot |F_i| )$ work and $O((D_{\Delta}+\log n)\cdot \log |F_i| )$ span. Summing up the work and span in each round from $i=1$ to $K$, we can get the total work is $O((W_{\Delta}+\log n)\sum_{i=1}^{K}(|F_i|))$ and the total span is $O((D_{\Delta}+\log n)\sum_{i=1}^K \log |F_i|)$.
\end{proof}
}

\hide{
\myparagraph{Discussion about other search trees.}
Our approach in \cref{alg:BST} works with any search tree that supports \textsc{SplitAndRemove} and parallel \textsc{BatchInsert}.
To alleviate the space and I/O issue mentioned above, one may also consider using search trees that group multiple data in one tree node.
In a recent paper that compares \ptree{s} with search tree with blocked leaves, it was observed that blocking multiple data into leaves generally improve the time and space of \ptree{s}, but may be unfriendly to batch-inserting a small batch into the tree, which is important in \cref{alg:BST}. Therefore, we did not consider such data structures in our implementation.
}

\hide{Although BST allows for theoretical guarantees for seed selection,
its performance in practice may be limited by overhead in space and I/O-unfriendliness.
Like other BST structures, to store a data entry, \ptree{} requires to store large metadata, including two child pointers, a subtree size (which is important in \textsc{SplitAndRemove}), and some auxiliary data to aid memory management. In addition, the pointer-based structure is generally I/O-unfriendly.
In addition, the prefix-doubling process has $\log |F'_i|$ rounds of thread synchronization,
which may incur high overhead in scheduling.
To overcome such issues, we propose a solution based on winning trees, and call it \wintree{}.
}

\section{Experiments}
\label{sec:exp}

\myparagraph{Setup.} We implemented \oursystem{} in C++.
We run our experiments on a 96-core (192 hyperthreads) machine with four Intel Xeon Gold 6252 CPUs and 1.5 TB of main memory.
We use \texttt{numactl -i all} in experiments with
more than one thread to spread the memory pages across CPUs in a round-robin fashion.
We run each test four times and report the average of the last three runs.

We tested 17 graphs with information shown in \cref{tab:graph_info}. 
We include real-world graphs with a wide range of sizes and distributions, including \textbf{five billion-scale or larger graphs}.
In addition to the commonly used benchmarks of social networks, we also include web graphs, road networks, and
\knn{} graphs (each vertex is a multi-dimensional data point connecting to its $k$-nearest neighbors~\cite{wang2021geograph}).
Solving IM on such graphs simulates the influence spread between websites (web graphs), geologically connected objects (road networks),
and geometrically close objects (\knn{} graphs).
Based on graph patterns, we call social and web graphs \defn{scale-free} graphs,
and the rest \defn{sparse} graphs.
We symmetrize the directed graphs to make them undirected.
\oursystem{} (with compression) is the only tested system that can process
the largest graph ClueWeb~\cite{webgraph} with 978M vertices and 74B edges.
Even with 1.5TB memory, \oursystem{} can process Clueweb only when $\rate\le 0.25$ ($4\times$ or more compression ratio in sketches), which shows the necessity of compression.

We select $k=100$ seeds in all tests.
We use the IC model with constant propagation probability $p$ in the same graph.
\revision{
For scale-free networks, we use $p=0.02$, similar to previous work~\cite{chen2009efficient, kim2013scalable, cohen2014sketch, gokturk2020boosting}. 
For sparse graphs, we set $p=0.2$ since the average vertex degrees are mostly within 5. 
We also tested two other edge probability distributions similar to previous papers~\cite{gokturk2020boosting, minutoli2019fast}.
Among different distributions, we observed similar relative performance among the tested algorithms. 
Thus, we provide the result using fixed $p=0.02$ or $0.2$ here. Full results on other distributions can be found in \ifconference{the full version~\cite{wang2023fast}}\iffullversion{\cref{sec:distribution}}. 
}
When comparing the \defn{average} numbers across multiple graphs, we use the \defn{geometric mean}.

\myparagraph{Software Libraries.}
We use ParlayLib~\cite{blelloch2020parlaylib} for fork-join parallelism and some parallel primitives (e.g., sorting).
We use the \ptree{} implementation from the PAM library~\cite{sun2018pam,sun2019implementing}, 
\hide{When computing connectivity in sketch construction, we use the union-find implementation \textsf{UniteRemCAS} from \textsf{ConnectIt}~\cite{dhulipala2020connectit}.}
and the \textsf{UniteRemCAS} implementation from \textsf{ConnectIt}~\cite{dhulipala2020connectit} for parallel connectivity. 

\begin{table}[t]
  \centering
  \small

    \caption{\textbf{Graph Information.} 
    {Influence}: the maximum influence spread of 100 seeds selected by \oursystem{} ($R=256$), \infuser{} ($R=256$), and \ripples{} ($\epsilon=0.5$), rounded to integers. Most influence values are evaluated by 20000 simulations. The {underline} numbers on large graphs use 2000 simulations.
    ``\textbf{-}'': unable to evaluate within 50 hours using 2000 simulations.
    } 
    
    \begin{tabular}{clrr|r|l@{ }}
          &       & \multicolumn{1}{c}{$|V|$} & \multicolumn{1}{c|}{$|E|$} & \multicolumn{1}{c|}{\textbf{Influence}} & \multicolumn{1}{c}{\textbf{Notes}} \\
    \midrule
    \multirow{9}[1]{*}{\begin{sideways}\textbf{Social}\end{sideways}}
          & \textbf{EP} & 0.08M & 0.81M & 5332  & Epinions1~\cite{richardson2003trust} \\
          & \textbf{SLDT} & 0.08M & 0.94M & 6342  & Slashdot~\cite{leskovec2009community} \\
          & \textbf{DBLP} & 0.32M & 2.10M & 1057  & DBLP~\cite{yang2015defining} \\
          & \textbf{YT} & 1.14M & 5.98M & 29614 & com-Youtube~\cite{yang2015defining} \\
          & \textbf{OK} & 3.07M & 234M  & 1460000 & com-orkut~\cite{yang2015defining} \\
          & \textbf{LJ} & 4.85M & 85.7M & 376701 & soc-LiveJournal1~\cite{backstrom2006group} \\
          & \textbf{TW} & 41.7M & 2.40B & \underline{11776629} & Twitter~\cite{kwak2010twitter} \\
          & \textbf{FT} & 65.6M & 3.61B & \underline{19198744} & Friendster~\cite{yang2015defining} \\
    \midrule
    \multirow{2}[2]{*}{\begin{sideways}\textbf{Web}\end{sideways}}
          & \textbf{SD} & 89.2M & 3.88B & \underline{15559737} & sd\_arc~\cite{webgraph} \\
          & \textbf{CW} & 978M  & 74.7B &   -    & ClueWeb~\cite{webgraph} \\
    \midrule
    \multirow{2}[2]{*}{\begin{sideways}\textbf{Road}\end{sideways}}
            & \textbf{GER} & 12.3M & 32.3M & 384   & Germany~\cite{roadgraph} \\
            & \textbf{USA} & 23.9M & 57.7M & 370   & RoadUSA~\cite{roadgraph} \\
    \midrule
    \multirow{5}[2]{*}{\begin{sideways}\textbf{k-NN}\end{sideways}} & \textbf{HT5} & 2.05M & 13.0M & 1018  & HT~\cite{uciml,wang2021geograph}, $k$=5 \\
          & \textbf{HH5} & 2.05M & 13.0M & 2827  & Household~\cite{uciml,wang2021geograph}, $k$=5 \\
          & \textbf{CH5} & 4.21M & 29.7M & 355065 & CHEM~\cite{fonollosa2015reservoir,wang2021geograph}, $k$=5 \\
          & \textbf{GL5} & 24.9M & 157M  & 11632 & GeoLife~\cite{geolife,wang2021geograph}, $k$=5 \\
          & \textbf{COS5} & 321M  & 1.96B & 4753  & Cosmo50~\cite{cosmo50,wang2021geograph}, $k$=5 \\
    \bottomrule
    \end{tabular}%
    
  \label{tab:graph_info}%
\end{table}%

\begin{table*}[htbp]
\small
  \centering 

  \caption{\textbf{Running time, memory usage, and influence spread (normalized to the maximum) of all tested systems on a machine with 96 cores (192 hyperthreads).}
    \revision{Relative influence is the influence spread normalized to the maximum influence spread among \ours{} ($R=256$), \infuser{} ($R=256$), and \ripples{} 
 ($\epsilon=0.5$).}
    ``-'': out of memory (1.5 TB) or time limit (3 hours).
    \ours{1} is our implementation with \wintree{} without compression. 
    {\ours{0.1}} is our implementation with $\rate=0.1$ ($10\times$ compression for sketches).
    {\infuser{}}~\cite{gokturk2020boosting} and {\ripples{}}~\cite{minutoli2019fast, minutoli2020curipples} are baselines.
    We report the \emph{best time} of \infuser{} and \ripples{} by varied core counts (the scalability issue of \infuser{} and \ripples{} are shown in \cref{fig:scale}).
    {CSR} is the memory used to store the graph in CSR format (see more in \cref{sec:overall_performance}).
    The bold numbers are the fastest time/smallest memory among all implementations on each graph.
    The underlined numbers in memory usage are the smallest memory among systems that do not use compression (\ours{1}, \ripples{} and \infuser{}). A heatmap of the full result is in \cref{fig:heatmap}. 
    }
    
    \begin{tabular}{clrrr|rrrr|rrrrr}
          &       & \multicolumn{3}{c|}{\textbf{Relative Influence}} & \multicolumn{4}{c|}{\textbf{Total Running Time (second)}}     & \multicolumn{5}{c}{\textbf{Memory Usage (GB)}} \\
          &       & \multicolumn{1}{c}{\textbf{Ours}} & \multicolumn{1}{@{}c}{\textbf{\infuser{}}} & \multicolumn{1}{c|}{\textbf{\ripples{}}} & \multicolumn{1}{c}{\textbf{\ours{1}}} & \multicolumn{1}{c}{\textbf{\ours{0.1}}} & \multicolumn{1}{@{}c}{\textbf{\infuser}} & \multicolumn{1}{c|}{\textbf{\ripples{}}} & \multicolumn{1}{c}{\textbf{CSR}} & \multicolumn{1}{c}{\textbf{\ours{1}}} & \multicolumn{1}{c}{\textbf{\ours{0.1}}} & \multicolumn{1}{@{}c}{\textbf{\infuser{}}} & \multicolumn{1}{c}{\textbf{\ripples{}}} \\
    \midrule
    \multirow{9}[2]{*}{\begin{sideways}\textbf{Social}\!\!\end{sideways}}
          & \textbf{EP} & 100\% & 99.9\% & 98.9\% & \textbf{0.29} & 0.49  & 0.38  & 3.70  & 0.01  & \underline{0.14} & \textbf{0.06}  & 0.17  & 0.21 \\
          & \textbf{SLDT} & 100\% & 99.8\% & 99.1\% & \textbf{0.30} & 0.53  & 0.48  & 6.88    & 0.01  & \underline{0.15} & \textbf{0.05}  & 0.17  & 0.24 \\
          & \textbf{DBLP} & 100\% & 99.0\% & 98.3\% & \textbf{0.35} & 0.37  & 0.86  & 2.71  & 0.02  & 0.48  & \textbf{0.10}   & 0.67  & \underline{0.29} \\
          & \textbf{YT} & 100\% & 99.9\% & 98.2\% & \textbf{1.22} & 2.44  & 6.20   & 26.6  & 0.05  & 1.63  & \textbf{0.28}  & 2.36  & \underline{1.50} \\
          & \textbf{OK} & 100\% & 100\% & 99.8\% & \textbf{8.79} & 39.6  & 80.3  & 325   & 1.77  & \underline{6.13} & \textbf{2.44}  & 8.07  & 78.3 \\
          & \textbf{LJ} & 100\% & 99.9\% & 99.4\% & \textbf{6.00} & 20.5  & 61.3  & 130   & 0.68  & \underline{5.98} & \textbf{1.63}  & 10.5  & 20.4 \\
          & \textbf{TW} & 100\% & 100\% & -     & \textbf{93.4} & 378   & 639   & 5863  & 18.2  & \underline{63.0} & \textbf{25.9}  & 103   & 718 \\
          & \textbf{FT} & 100\% & 100\% & -     & \textbf{128} & 609   & 1973  & -     & 27.4  & \underline{97.9} & \textbf{39.9}  & 161   & - \\
    \midrule
    \multirow{2}[2]{*}{\begin{sideways}\textbf{Web}\!\end{sideways}}
          & \textbf{SD} & 100\% & 97.1\% & -     & \textbf{150} & 627   & 1684  & -     & 29.6  & \underline{125} & \textbf{44.8}  & 211   & - \\
          & \textbf{CW} & 100\% & -     & -     & -     & 9776 & -     & -     & 564   & -     & 738 & -     & - \\
    \midrule
    \multirow{2}[2]{*}{\begin{sideways}\textbf{Road}\!\end{sideways}} 
        & \textbf{GER} & 100\% & 70.4\% & 94.4\% & 9.35   & \textbf{8.53} & 26.7  & 392  & 0.33  & \underline{13.3} & \textbf{2.58}  & 25.1  & 22.8 \\
        & \textbf{USA} & 100\% & 74.7\% & 92.8\% & 14.6  & \textbf{13.7} & 53.1  & 8534 & 0.61  & \underline{25.8} & \textbf{5.05}  & 49.0    & 46.1 \\
    \midrule
    \multirow{5}[2]{*}{\begin{sideways}\textbf{$k$-NN}\!\!\end{sideways}} & \textbf{HT5} & 100\% & 85.7\% & 94.8\% & 0.72  & \textbf{0.68} & 2.73 & 9.26  & 0.11  & 1.37  & \textbf{0.24}  & 1.93  & \underline{0.79} \\
          & \textbf{HH5} & 100\% & 79.5\% & 97.6\% & 2.28  & \textbf{2.11} & 8.26  & 14.4  & 0.11  & 3.00     & \textbf{0.51}  & 4.26  & \underline{1.14} \\
          & \textbf{CH5} & 100\% & 92.3\% & 97.8\% & \textbf{3.52}  & 5.32 & 124   & 10.2  & 0.25  & 4.87  & \textbf{1.05}  & 8.77  & \underline{1.64} \\
          & \textbf{GL5} & 100\% & 76.9\% & 98.9\% & 19.5  & \textbf{17.9} & 116   & 232   & 1.36  & 27.6  & \textbf{6.02}  & 51.6  & \underline{8.18} \\
          & \textbf{COS5} & 100\% & 37.7\% & -     & 348 & \textbf{284}   & 2319  & -     & 17.0    & \underline{355}   & \textbf{66.1}  & 666   & - \\
    \bottomrule
    \end{tabular}%
    
  \label{tab:baselines}%
\end{table*}%

\myparagraph{Tested Algorithms.}
We tested both \ptree{} and \ourtree{} for seed selection.
In most cases, \ourtree{} is more efficient in both time and space,
so use \ourtree{} as the default option in \oursystem{}. We present more results comparing the two options in \cref{sec:compare_trees}.

We compare with three existing parallel IM systems: \infuser{}~\cite{gokturk2020boosting}, \NoSingles{}~\cite{popova2018nosingles} and \ripples{}~\cite{minutoli2019fast, minutoli2020curipples},
and call them \defn{baselines}.
As introduced in \cref{sec:prelim}, \infuser{} uses a similar sketch-based approach as \oursystem{} but does not support compression or parallel CELF.
\NoSingles{} and \ripples{} both use Reverse Influence Sampling~\cite{borgs2014maximizing}.
In our tests, \revision{\ripples{} is always better than \NoSingles{} in time and space, so we only report the results of \ripples{}.
We have also tested some sequential algorithms, such as PMC~\cite{ohsaka2014fast} and IMM~\cite{tang2015influence}, but their running times are not competitive to the parallel implementations.}
We observe that \infuser{} and \ripples{} have \emph{scalability issues} when the number of threads increases (see examples in \cref{fig:scale}).
Hence, we report \emph{their shortest time among all the tested numbers of threads}.

\revision{Each algorithm has a parameter that controls the solution quality: $R$ for \infuser{} and \oursystem{} and $\epsilon\in (0,0.5]$ for \ripples{}.
The solution quality increases with larger $R$ or smaller $\epsilon$.
When comparing running time, we guarantee that \oursystem{} \emph{always gives a better solution} than the baselines.}
For \infuser{} and \oursystem{}, we set the number of sketches $R=256$.
\oursystem{} with $R=256$ is on average more than 99\% of the quality when using $R=2^{15}$, which is consistent with the observation in the \staticgreedy{} paper~\cite{cheng2013staticgreedy} (see our full version for quality analysis).
\iffullversion{\begin{figure}[t]
  \centering
  \includegraphics[width=\columnwidth]{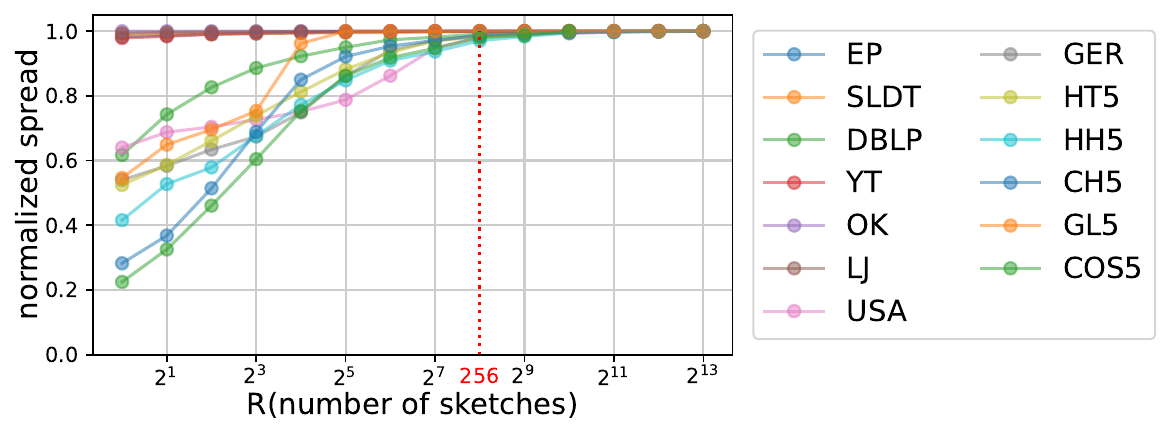}
  \caption{\small \textbf{Spread curves for our algorithm.}  In each plot, $x$-axis is the number of sketches (R) and $y$-axis is the spread normalized to the maximum spread of each graph. Higher is better.
  \label{fig:spread}}
\end{figure} }
\iffullversion{We independently verified this and shown it in \cref{fig:spread}}
For \ripples{},
smaller $\epsilon$ means better accuracy but more time.
We tested $\epsilon$ in $[0.13, 0.5]$ as in their paper. \oursystem{} under $R=256$ yields about the same solution quality as $\epsilon=0.13$ (the best tested setting in their paper).
When reporting time, we use $\epsilon=0.5$, which gives the fastest running time, and the quality is still reasonably high (at least 93\% of our best influence).

We observe that on sparse graphs,
the influence spread of \infuser{} is only 38--92\% of the best achieved by \oursystem{} and \ripples{}.
Although theoretically, \oursystem{} and \infuser{} should give the same output,
\infuser{} uses many optimizations that sacrifice solution quality.
We tried to increase $R$ and various other attempts in \infuser{}, but they did not improve the solution quality.
Therefore, we keep the same value $R=256$ for \oursystem{} and \infuser{}.

\subsection{Overall Time and Space}
\label{sec:overall_performance}

\cref{tab:baselines} shows the running time, memory usage, and normalized influence spread of all systems. 
\hide{\revision{
The relative influence is the influence spread normalized to the maximum influence spread selected by \ours{} ($R=256$),  \infuser{} ($R=256$), and \ripples{} ($\epsilon$). The value of maximum influence is shown in \cref{tab:graph_info}. 
}}
\ours{1} and \ours{0.1} are \oursystem{} with $\rate=1$ (no compression) and $\rate=0.1$ ($10\times$ sketch compression), respectively.
To illustrate the relative performance, we present a heatmap in \cref{fig:heatmap}, where all the numbers (time and space) are normalized to \ours{1}.
\ours{0.1} is the \emph{only algorithm that can process the largest graph CW}~\cite{webgraph}.
With similar quality, \ours{1} is \emph{faster than all baselines on all graphs}, and \ours{0.1} is just slower than \infuser{} on the two smallest graphs.
\ours{0.1} has the smallest space usage on \emph{all} graphs.
The advantage of \oursystem{} is more significant on larger graphs, both in time and space.

\hide{
On all graphs, our algorithms achieve the best influence spread, and \ripples{} also achieves at least 93\% of the best influence.
Although theoretically \oursystem{} and \infuser{} should give the same output (assuming a fixed seed for the random number generator),
we note that \infuser{} uses several performance optimizations that sacrifices solution quality.
As a result, on sparse graphs, \infuser{} only achieves 38--92\% of the best score.
}

\myparagraph{Running Time.}
\oursystem{} is significantly faster than the baselines on almost all graphs.
\iffullversion{As mentioned, we report the best running time among all tested core counts for \infuser{} and \ripples{},
since they may not scale to 192 threads (see \cref{fig:scale,sec:scalability}).
Even so, \oursystem{} is still faster on all graphs.}
On average, \ours{1} is $5.7\times$ faster than \infuser{} and $18\times$ faster than \ripples{}.
\ours{0.1} is slightly slower than \ours{1}, but is still $3.2\times$ faster than \infuser{} and $10\times$ faster than \ripples{}.

In general, the compression in \oursystem{} saves space by trading off more time. 
When $\rate=1$, the CC sizes for all vertices are stored in the sketches, and a re-evaluation only needs a constant time per sketch.
When $\rate=0.1$, each query involves a search to either find a center, or visit all connected vertices, which roughly costs $O(1/\rate)$ on each sketch.
Indeed, on all scale-free graphs, \ours{0.1} takes a longer time than \ours{1}.
Interestingly, on most sparse graphs, \ours{0.1} can be faster than \ours{1}.
This is because seed selection only takes a small fraction of the total running time (except for CH5), so the slow-down in seed selection is negligible for the overall performance. 
Meanwhile, avoiding storing $O(Rn)$ connectivity sizes reduces memory footprint and makes the sketching step slightly faster, which overall speeds up the running time.

\myparagraph{Memory Usage.}
We show space usage of \oursystem{} and baselines in \cref{fig:heatmap,fig:compression,tab:baselines}.
We also show the size of representing the graph in standard Compressed Sparse Row (CSR) format as a reference in \cref{tab:baselines,fig:compression},
which roughly indicates the space to store the input graph. CSR uses 8 bytes for each vertex and each edge. 
Using $\rate=0.1$, \oursystem{} uses the least memory on all graphs.
Even without compression, our algorithm uses less memory than the baselines on 10 out of 16 graphs.
Note that although the compression rate for sketches is $10\times$ in \ours{0.1}, the total space also includes
the input graph and the data structure for seed selection.
Therefore, we cannot directly achieve a $10\times$ improvement in space. In most cases, the total memory usage is about $5\times$ smaller.

\hide{For example, on EP, the running time of \infuser{} under 192 threads is 19$\times$ slower than under 8 threads. For \ripples{}, we even need to set the number of threads used for sketch construction and seed selection separately. We first select the best number of threads for seed selection (2 for HP, 4 for EP, and 16 for other graphs), then test \ripples{} under different threads to report the best total time.
}

\myparagraph{Summary.} Overall, \oursystem{} has better performance than the baselines in both time and space.
We note that the space usage of \ripples{} can be (up to $3.4\times$) better than \ours{1} on certain graphs (but still worse than \ours{0.1}),
but in these cases the running time is also much longer (by $2\times$ to $583\times$ ).
On scale-free graphs, \ours{0.1} is $2.5\times$ slower than \ours{1} on average, but uses $3\times$ less space.
On sparse graphs, \ours{0.1} is almost always better in both time and space.


\subsection{Scalability}\label{sec:scalability}
\hide{
\begin{figure*}[th]
  \centering 
  \begin{minipage}[b]{0.77\textwidth}
  \includegraphics[width=\textwidth]{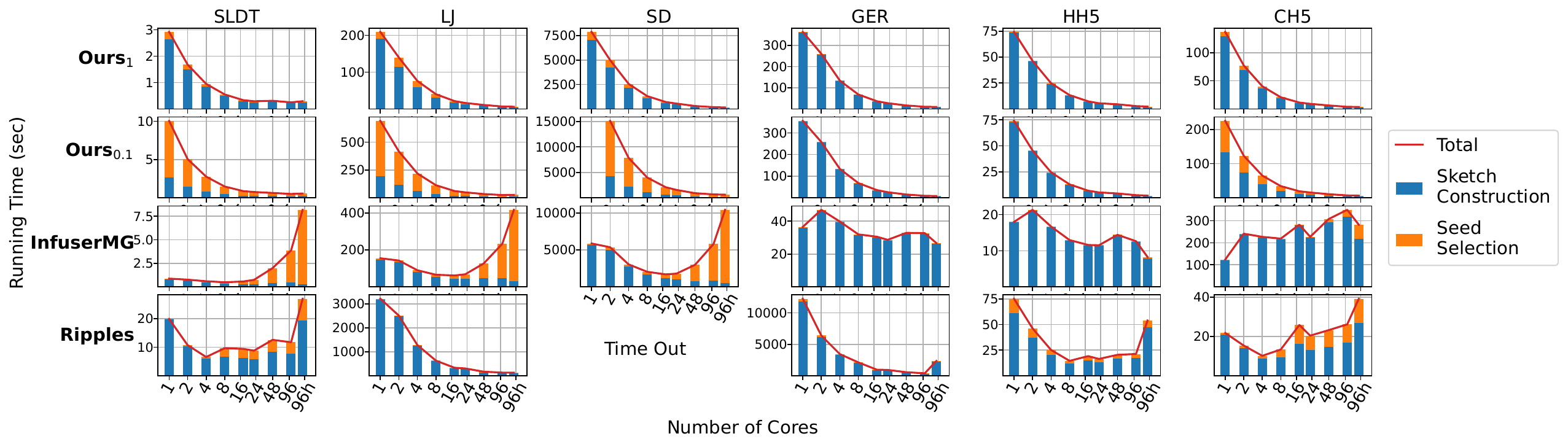}
  \caption{\small \textbf{Breakdown running time under different threads for different IM algorithms.}  We select four representative graphs to show. Each row represents an algorithm. In each plot, the $x$-axis is core counts (last data point is 96 cores with hyperthreading) and the $y$-axis is running time in seconds.
  \label{fig:scale}}
  \end{minipage}
  \begin{minipage}[b]{0.22\textwidth}
    \includegraphics[width=\textwidth]{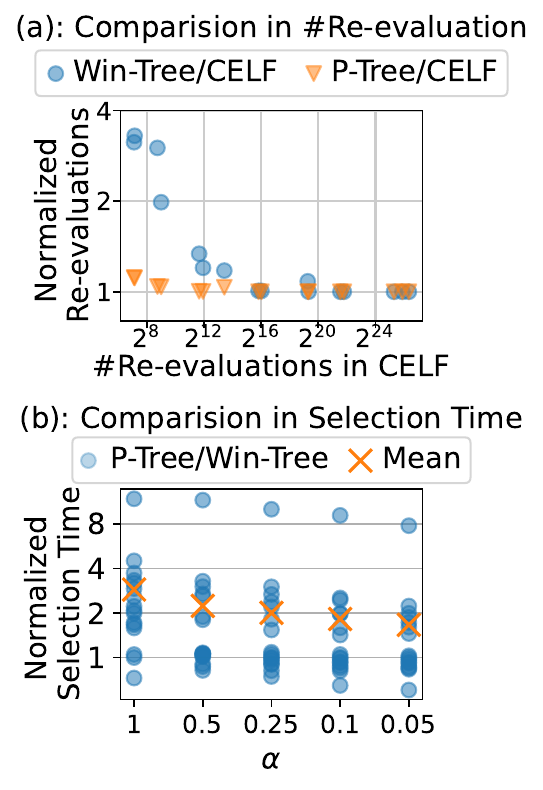}
  \caption{\small \textbf{Compare \ptree and \WinTree{} in the number of re-evaluations and selection time.}
\label{fig:trees}}
  \end{minipage}
\end{figure*}
}
\hide{
\begin{figure}
  \includegraphics[width=.9\columnwidth]{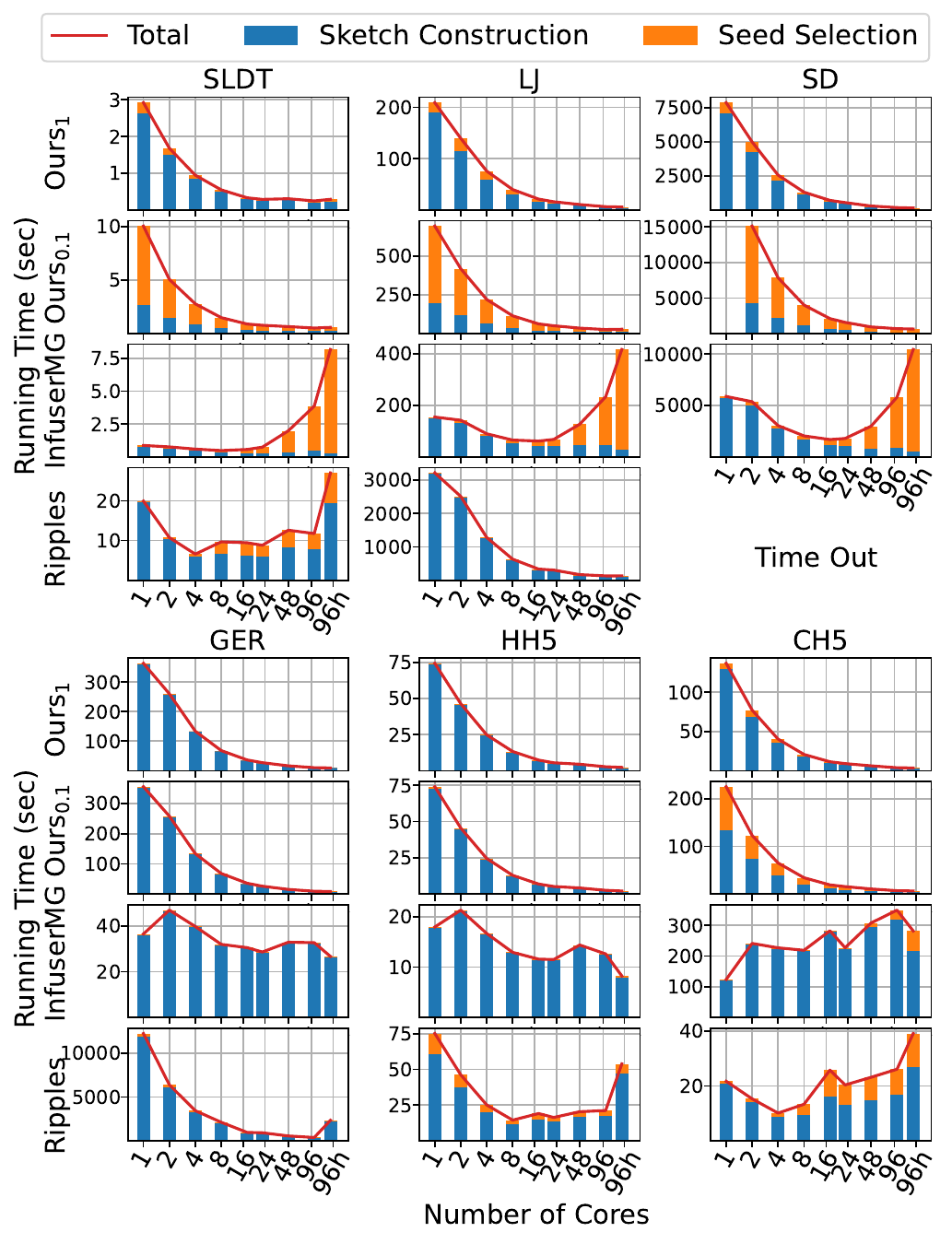}
  \caption{\small \textbf{Running time using different core counts for different IM algorithms.}
  In each plot, the $x$-axis shows core counts (96h means 96 cores with hyperthreading) and the $y$-axis is running time in seconds.
  \label{fig:scale}}
\end{figure}
}

\begin{figure*}[th]
  \centering 
  \includegraphics[width=\textwidth]{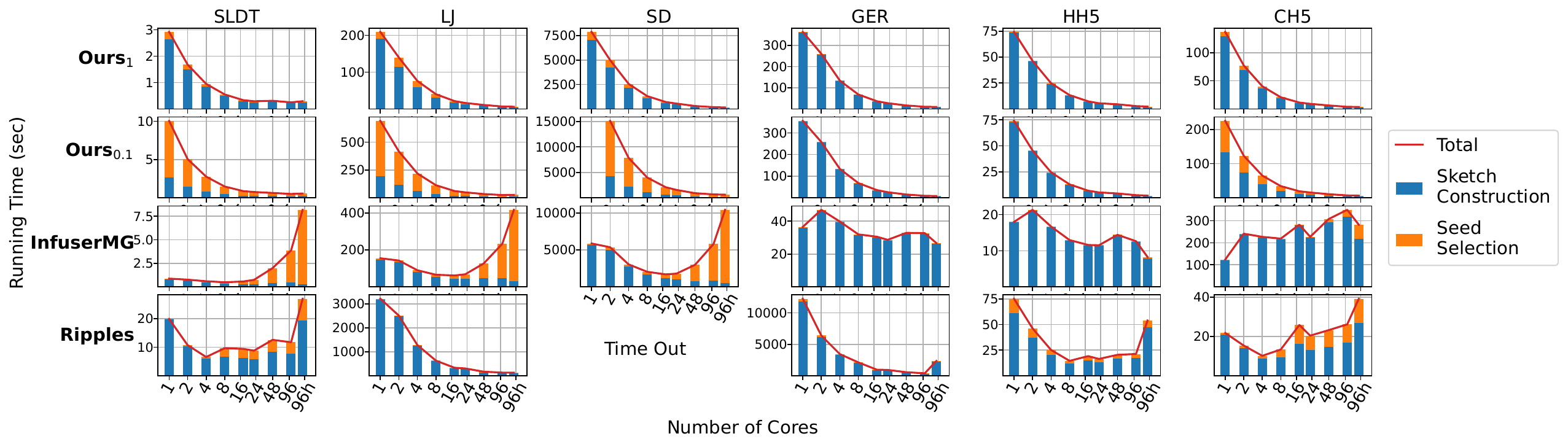} 
  \caption{\small \textbf{Running time using different core counts for different IM algorithms.}
  In each plot, the $x$-axis shows core counts (96h means 96 cores with hyperthreading) and the $y$-axis is running time in seconds.
  \label{fig:scale}} 
\end{figure*} 
We study the scalability of all systems. 
We present the performance on six representative graphs in 
\cref{fig:scale} 
with varying core counts $P$.
We separate the time for sketch construction (the ``sketch time'') and seed selection (the ``selection time'') to study the two components independently.
For \oursystem{}, both sketch and selection time decrease with more cores and achieve almost linear speedup.
\hide{
Unfortunately, such scalability trend is not always true for both \infuser{} and \ripples{}.
They scale to 8--16 cores on many graphs, but the performance may get worse with more cores.
}
\revision{
In contrast, \infuser{} and \ripples{} get the best speedup with 8--16 cores on many graphs, and perform worse with more cores. 
For both \infuser{} and \ripples{}, the sketch time scales better than the selection time.
This is because both algorithms use well-parallelized algorithms to construct sketches.
For example, \infuser{} uses a standard coloring~\cite{slota2014bfs, orzan2004distributed}
idea for parallel connectivity.
However, in seed selection, the baseline algorithms can have longer running times with more threads used. 
This is because they only use parallelism within one evaluation, and perform all evaluations one by one.
For both baselines, a single evaluation does not cause much computation.
In this case, the overhead of scheduling the parallel threads can be more expensive than the computation (and can even dominate the cost), which increases with the number of threads. 

The scalability curves indicate a major performance gain of \oursystem{} over baselines is from \defn{better scalability}. On SLDT, LJ, and SD, although the sequential running time of \infuser{} is better than \oursystem{}, \oursystem{} achieves better performance when more than 8 cores are used. The advantage is more significant with more cores. 
}

The scalability curves also \emph{indicate the necessity of our parallel CELF}.
\revision{
For \ours{0.1}, the sequential selection time takes a large portion of the total running time,
so parallelizing seed selection is crucial for improving the overall performance.
Even for \ours{1}, 
the selection time on 1 core is still much longer than the total parallel running time. 
This means that \emph{if} this step was not well-parallelized, it would become the performance bottleneck when $P$ is large.
Using our new techniques from \cref{sec:compact_sketching,sec:selecting}, both steps scale well.
}


\hide{
For \infuser{}, on Slashdot and LJ, although the sketch construction time decreases, the selection time even increases as the number of cores increases. For \ripples{}, note that when the number of threads is greater than 16, we still run the seed selection step under 16 threads, because otherwise, it will take a much larger time. From \cref{fig:scale}, the sketch construction time for \ripples{} on Slashdot, HH5 and CH5 first decreases then increases as the number of threads increases.
Both \infuser{} and \ripples{} use OpenMP as their parallel scheduler, where parallel overhead is proportional to the number of parallel threads.
\yihan{not sure if we are going to argue our benefit is from not using OpenMP. Also not sure if it's safe to claim so for OpenMP, since almost all schedulers should have overhead proportional to the number of parallel threads.}
However, their algorithm lacks parallelism (either only in the selection step or in both steps), and increasing threads can not bring enough speedup to overcome the overhead of creating and synchronizing threads.
}


\subsection{Analysis of the Proposed Techniques}\label{sec:exp-ana}

\begin{figure}[t]
  \centering
  \includegraphics[width=\columnwidth]{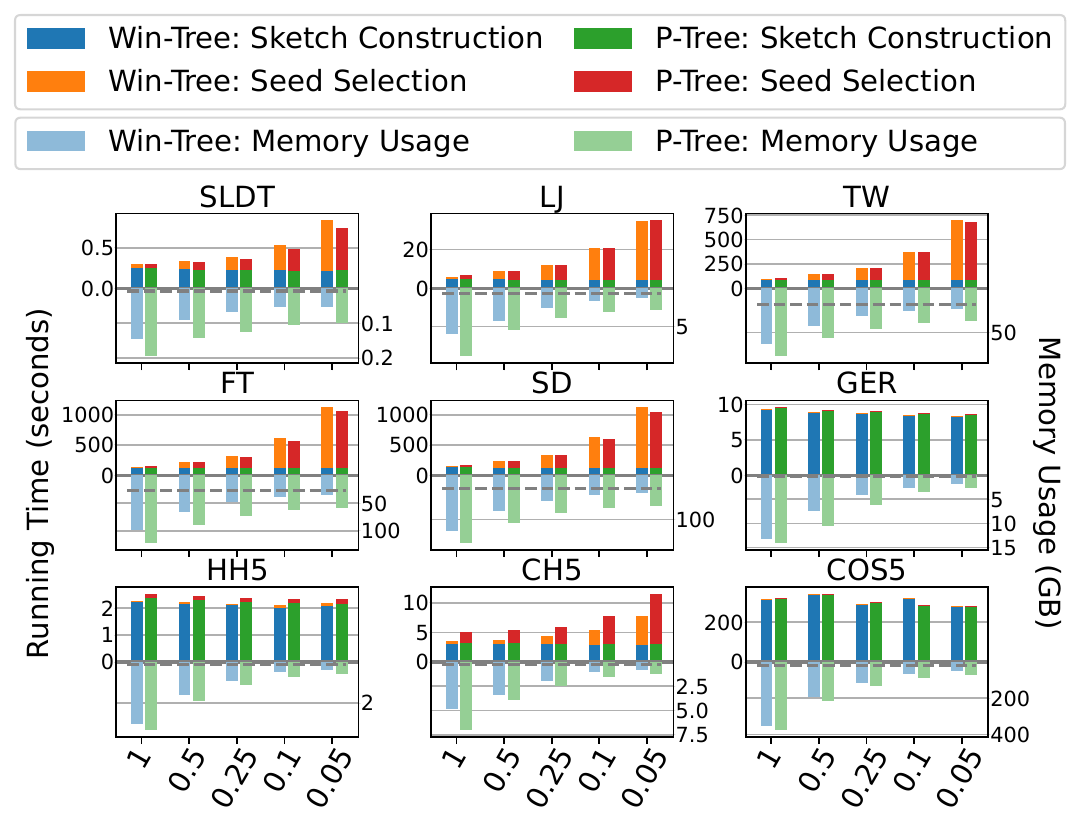}
  \caption{\small \textbf{Running time and memory with different values of $\alpha$.} The $x$-axis represents the compression rate.  The growing up $y$-axis represents the running time (in seconds) of the sketching and selecting process. The growing down $y$-axis represents the total memory usage (in GB). The gray horizontal line represents the CSR size of each graph, which is the basic memory we need to load the graph.
  \label{fig:compression}}
\end{figure} 


\myparagraph{Compression.}
\label{sec:compression_effect}
We evaluate the time and space usage of different compression ratios by controlling the parameter $\rate$, and present the results in \cref{fig:compression}.
The gray dashed line represents the CSR size of each graph, which is the space to store the input.
Compression always reduces memory usage, but may affect running time differently.
Memory usage always decreases with the value $\rate$ decreases.
As mentioned previously, the actual compression rate can be lower than $1/\rate$, since our compression only controls the memory for storing sketches, and there are other space usage in the algorithm.
Roughly speaking, using $\rate=0.05$ can save space on graphs by up to $8\times$ and shrink the space very close to the input graph size.

Compression affects the running time differently for sketch construction and seed selection.
Smaller $\rate$ indicates longer running time in seed selection,
but may improve the sketch time slightly due to the reduced memory footprint.
The effect of compression on total running time depends on the ratio of sketching and selection time,
where scale-free and sparse graphs exhibit different patterns.
\hide{In scale-free graph, the sampled graph usually has one or several large CCs,
while the sparse graphs usually have many small CCs in the sampled graphs.}
Scale-free graphs usually have one or several large CCs in the sampled graphs since they are dense,
while sparse graphs usually have many small CCs.
This can also be seen by the overall influence in \cref{tab:graph_info}:
even though we use a smaller $p$ on scale-free networks, the total influence is still much larger than the sparse graphs.
On scale-free graphs, due to large CCs on sampled graphs, selecting any seed in a large CC may significantly lower the score of other vertices,
leading to much more total evaluations than sparse graphs.
As a result, a smaller $\rate$ causes a clear time increase for seed selection on scale-free networks because each evaluation becomes slower while only having a small impact on most sparse graphs.

\ifconference{We provide the total number of evaluations on each graph as a reference in the full version \cite{wang2023fast}.}
\iffullversion{We provide the number of re-evaluations on each graph in \cref{tab:re-evaluations} as a reference.}

\iffullversion{
\begin{table}[t]
  \centering
  \small
    \begin{tabular}{clrrrr}
          &       & \multicolumn{1}{c}{\textbf{\emph{n}}} & \multicolumn{1}{c}{\textbf{CELF}} & \multicolumn{1}{c}{\textbf{P-Tree}} & \multicolumn{1}{c}{\textbf{Win-Tree}} \\
    \midrule
    \multirow{9}[2]{*}{\begin{sideways}\textbf{social}\end{sideways}} & \textbf{HP} & 12008 & 7936  & 7964  & 8322 \\
          & \textbf{EP} & 75879 & 57980 & 58010 & 58398 \\
          & \textbf{SLDT} & 77360 & 66618 & 66649 & 67035 \\
          & \textbf{DBLP} & 317080 & 3255  & 3269  & 4075 \\
          & \textbf{YT} & 1134890 & 656014 & 656068 & 656727 \\
          & \textbf{OK} & 3072627 & 3066983 & 3066995 & 3067514 \\
          & \textbf{LJ} & 4847571 & 3609995 & 3610105 & 3611583 \\
          & \textbf{TW} & 41652231 & 41200415 & 41200437 & 41200774 \\
          & \textbf{FT} & 65608366 & 61946405 & 61946434 & 61947064 \\
    \midrule
    \multirow{2}[2]{*}{\begin{sideways}\textbf{web}\end{sideways}} & \textbf{SD} & 89247739 & 84554009 & 84576755 & 84576236 \\
          & \textbf{CW} &    978408098   &   & 873230222   &    873256302   \\
    \midrule
    \multirow{2}[2]{*}{\begin{sideways}\textbf{road}\end{sideways}} & \textbf{USA} & 23947348 & 143   & 159   & 465 \\
          & \textbf{GER} & 12277375 & 139   & 155   & 416 \\
    \midrule
    \multirow{5}[2]{*}{\begin{sideways}\textbf{k-NN}\end{sideways}} & \textbf{HT5} & 928991 & 516   & 536   & 1007 \\
          & \textbf{HH5} & 2049280 & 3931  & 3943  & 4717 \\
          & \textbf{CH5} & 4208261 & 628918 & 629501 & 653157 \\
          & \textbf{GL5} & 24876978 & 11042 & 11430 & 12996 \\
          & \textbf{COS5} & 321065547 & 431   & 449   & 1264 \\
    \bottomrule
    \end{tabular}%
  \caption{Numbers of evaluations for each graph instance using different algorithms.}
  \label{tab:re-evaluations}%
\end{table}%

}

\hide{
Unlike the scale-free graphs where the sampled graph usually has a large CC, the sparse graphs usually have many small CCs after sampling.
Thus, the influence is generally smaller (see \cref{tab:graph_info}), making the re-evaluation faster, expect for CH5.
Meanwhile, on scale-free graphs, even we picked small value of $p$, the influence is generally larger.
This will lead to more stale values after selecting each seed and longer re-evaluation time.
In this case, smaller $\rate$ causes a clear time increase for seed selection on scale-free networks.
We provide the total number of re-evaluations on each graph as a reference in the full version\yan{check}.
}




\myparagraph{\ptree vs. \ourtree.} \label{sec:compare_trees}
We now study both data structures for seed selection.
Our goal is to achieve high parallelism without causing much overhead in work.
\ptree{} has the theoretical guarantee that the total number of evaluations is no more than twice that in CELF.
To evaluate the work overhead,
we compare the number of evaluations for both data structures to CELF in \cref{fig:trees}(a).
For each graph, we count the number of evaluations done by CELF as $x$ and that by \ptree{} and $\wintree{}$ as $y_1$ and $y_2$, respectively. We plot all $(x,y_1/x)$ as blue dots and all $(x,y_2/x)$ as orange triangles. 
The number of evaluations done by \ptree{s} is very close to that by CELF ($1.03\times$ on average), while \wintree{s} require slightly more ($1.7\times$ on average).
The result is consistent with our theoretical analysis.

\begin{figure}
    \centering
    \includegraphics[width=\columnwidth]{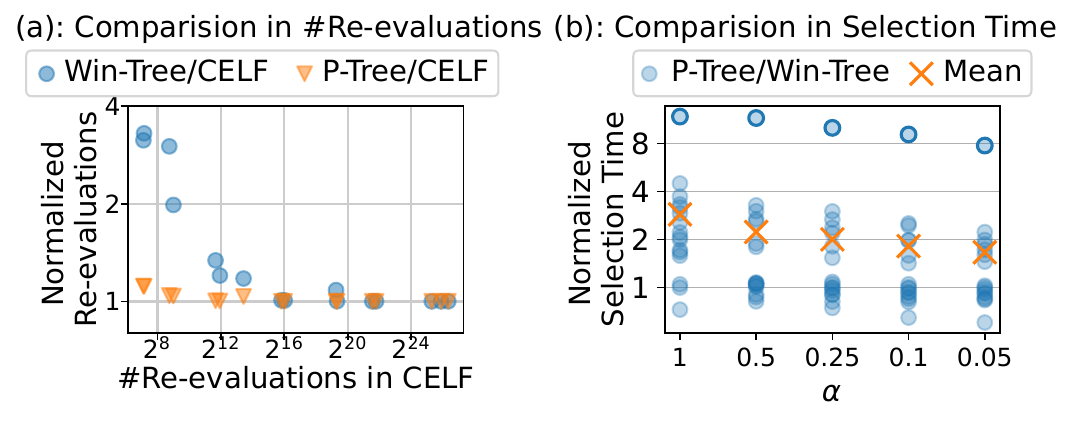}
    \caption{\small \textbf{Compare \ptree and \WinTree{} in the number of re-evaluations and selection time.}
    \textbf{(a)}: Let $\# CELF$, $\#\ptree$ and $\#\WinTree$ represent the total number of re-evaluations needed in the corresponding method. Each point represents a tested graph. Each blue circle is a data point $(\#CELF,\#\WinTree/\#CELF)$. Each orange triangle is a data point $(\#CELF,\#\ptree/\#CELF)$.
    \textbf{(b)}: Each point represents a tested graph under different values of $\alpha$. The y-axis is \ptree selection time/\WinTree selection time. The orange cross is the average value under each $\alpha$.
    \hide{The cross is the mean of y under each $\alpha$.}
    } 
    \label{fig:trees}
\end{figure} 
As mentioned in \cref{sec:WinTree}, we expect \wintree{s} to perform better in practice due to various reasons.
To study this, we plotted \cref{fig:trees}(b) as a comparison for selection time between \ptree{} and \wintree{} under different compression ratios.
Each data point represents a graph, and the value is the ratio of selection time between \ptree{} and \wintree{}.
The average ratios are always larger than 1, indicating better performance for \wintree{},
but the advantage decreases as $\rate$ becomes smaller (higher compression).
This is because when~$\rate$ is large, the evaluation is fast, and the major time is on the tree operations,
where \wintree{s} is more advantageous due to the reasons mentioned in \cref{sec:WinTree}.
With higher compression (small~$\rate$), the evaluation becomes more expensive.
Since the \ptree{} evaluates fewer vertices, the overall selection time is more likely to be better.
To further understand this, we focus on the five topmost points (circled) in \cref{fig:trees}(b): those data points are from COS5, where \ptree{} is 8--12$\times$ slower than \wintree{}.
For COS5, the number of vertices $n$ is large, but only hundreds of vertices are evaluated in total.
Thus, the seed selection time is dominated by constructing the data structures, i.e.,
$O(n\log n)$ work for \ptree{} to maintain total ordering, and $O(n)$ for \wintree{} that is a lot faster.
Since the number of evaluations is small, most of the
vertices in the tree are never touched, and thus, maintaining their order wastes the work.


In \cref{fig:compression},
we also compare the time and space between \ptree{} and \wintree{}.
Similar to the discussions above, when $\rate$ is small, \wintree{} is almost always faster than \ptree{},
but \ptree{} may perform better when $\rate$ is large.
However, the advantage is quite small since the number of evaluations of \wintree{} is still close to \ptree{} (see \cref{fig:trees}(a)).
\wintree{} also uses smaller memory than \ptree{}. As discussed in \cref{sec:WinTree}, this is because \ptree{} needs to explicitly maintain tree pointers and balancing criteria, while each \wintree{} node only needs to store the vertex id ($2n$ integers in total).

\hide{In summary, \ptree{} almost always incurs fewer re-evaluations than \wintree{}, but is worse in running time in most cases, especially with low compression ($\rate$ close to 1).
When higher compression (smaller $\rate$), \ptree{} can perform slightly better.
\wintree{} is also more space-efficient than \ptree{}.}
Based on these observations, \oursystem{} always uses \wintree{} by default, but also provides the interface for users to choose \ptree{s}.

\hide{
In \cref{lemma:efficiency} we prove that the number of re-evaluates \ptree{} does is at most twice that of CELF. As shown in \cref{fig:evaluates}, each circle point represents an experiment on a graph, whose $x$-index is the number of re-evaluates CELF does, and $y$-index is that of \ptree. The circles almost lie on the $y=x$ line, which means, in practice, \ptree re-evaluates almost the same number of vertices as CELF.
}



\hide{
We have two different methods for seed selection, \ptree and \ourtree. Our experiments show that \ourtree is more time-efficient in most of our test cases and is more memory-efficient in all cases (more details in \cref{sec:compare_trees}), so our default algorithm is using \ourtree for seed selection.
We call all other existing algorithms that we compare to the \defn{baselines}. We compare to three existing available influence maximization algorithms \infuser{}~\cite{gokturk2020boosting}, \NoSingles{}~\cite{popova2018nosingles} and \ripples{}~\cite{minutoli2019fast, minutoli2020curipples}.
\infuser{} and our algorithms are based on the static greedy algorithm~\cite{cheng2013staticgreedy}.
\NoSingles{} and \ripples{} are both based on the Reverse Influence Sampling (RIS)~\cite{borgs2014maximizing} idea. Through our experiments, \ripples{} is always better than \NoSingles{} on running time and memory usage, so we only report the results of \ripples{} in our paper.
We need to mention that \infuser{} and our algorithms are only to deal with undirected graphs, while \ripples{} is capable of both directed and undirected graphs.
}

\section{Related Work} \label{sec:related}
\myparagraph{Influence Maximization (IM).} IM has been widely studied for decades with a list of excellent surveys~\cite{banerjee2020survey, peng2018influence,AACGS10,zhou2021survey,li2018influence} that review the applications and papers on this topic.
IM is also of high relevance of the data management community, and many excellent papers published recently regarding benchmarking~\cite{ohsaka2020solution, arora2017debunking}, new algorithms~\cite{guo2020influence}, new propagation models and applications~\cite{huang2022influence, tsaras2021collective, bian2020efficient, xie2022hindering}, and interdisciplinary extensions~\cite{zhu2021analysis,zhang2021grain}.

A survey paper~\cite{li2018influence} roughly categorizes IM algorithms into three methodologies: simulation-based, proxy-based, and sketch-based.
Among them, (Monte Carlo) simulation-based approaches (e.g.,~\cite{kempe2003maximizing,leskovec2007cost,goyal2011celf++, zhou2015upper, wang2010community,dynamicinf}) are the most general and apply to most settings (i.e., graph types and diffusion models); however, they do not take advantage of the specific settings, so generally, their performance is limited.
Proxy-based approaches (e.g.,~\cite{liu2014influence,page1999pagerank,jung2012irie,kimura2006tractable,chen2010scalable,kim2013scalable}) use simpler algorithms/problems (e.g., PageRank or shortest-paths) to solve IM.
While they can be fast in practice, their solution quality has no theoretical guarantees.
Sketch-based solutions, as mentioned in \cref{sec:intro,sec:prelim}, generally have good performance and theoretical guarantees, but only apply to specific diffusion models.
Our sketch compression focuses on the IC model. We note that the parallel data structures in \cref{sec:selecting} are general to submodular diffusion models such as linear threshold~\cite{kempe2003maximizing}, and more~\cite{tsaras2021collective, zhang2021grain}.

Sketch-based algorithms can further be categorized into forward (influence) sketches and reverse (reachable) sketches~\cite{li2018influence}.
As the names suggest, forward sketches record the influence that each vertex can propagate to in sampled graphs.
Most algorithms~\cite{gokturk2020boosting,chen2009efficient, cheng2013staticgreedy,ohsaka2014fast,cohen2014sketch, kempe2003maximizing} mentioned in this paper, including \oursystem{}, use forward sketches.
Reverse sketches find a sample of vertices $T$ and keep the sets of vertices that can reach them.
Many IM algorithms (e.g.,~\cite{borgs2014maximizing,tang2014influence,tang2015influence,wang2016bring}), including \NoSingles{}~\cite{popova2018nosingles} and \ripples{}~\cite{minutoli2019fast, minutoli2020curipples} that we compared to, use reverse sketches.
These algorithms can trade off (lower) solution quality for (better) performance/space, by adjusting the size of $T$. 
\revision{
This paper focuses on forward sketches because our compression technique is designed for them. However, the parallel data structures we proposed for seed selection, \ptree{} and \wintree{}, are independent of forward sketches. Applying \ptree{} and \wintree{} to reverse-reachable sketches can be interesting for future work.
}

\myparagraph{Space-Efficient Connectivity.}
We are aware of a few algorithms that can compute graph connectivity using $o(n)$ space~\cite{edmonds1999tight,kosowski2013faster,broder1994trading,BBFGGMS18},
which share a similar motivation with \oursystem{}.
However, \oursystem{} does not require computing connectivity in $o(n)$ space, but only requires storing it in $O(n)$ space.
Hence, the goal here is essentially different, although these approaches are inspiring.

\hide{
\oursystem{} is a \textbf{Sketch-based Method}. Besides, we are also aware of \textbf{Monte-Carlo-Simulation-based Methods} and \textbf{Proxy-based Methods} methods for IM problems.
MC simulation-based methods are generally applicable to all propagation models. However, with all the optimizations to improve the efficiency of the raw greedy algorithm, MC simulation-based methods still suffer from expensive computational costs and are thus very limited in scalability.
Proxy-based methods estimate the influence spread of seed sets using influence ranking proxy~\cite{liu2014influence,page1999pagerank,jung2012irie} or diffusion model reduction proxy~\cite{kimura2006tractable,chen2010scalable,kim2013scalable}  Although the computational overhead is largely reduced, proxy-based methods lack theoretical guarantees to the quality of solutions.
Sketch-based methods improve the theoretical efficiency of the simulation-based methods and preserve the approximation guarantee that proxy-based methods lack.

Based on how the sketches are generated, the survey~\cite{li2018influence} classifies the algorithms of sketch-based methods into two branches: forward influence sketch and reverse reachable sketch. The idea of the forward influence sketch is to construct a sketch by extracting a subgraph that corresponds to an instance of propagation under a specific diffusion model. \oursystem{} is based on the forward influence sketch idea, and we discuss more forward influence sketch algorithms in \cref{sec:prelim} and \cref{tab:related}.
Besides optimized only on undirected graphs, PMC \cite{ohsaka2014fast} is a forward influence sketch algorithm that accelerates directed graphs.
The reverse reachable sketch approach constructs lots of random Reverse Reachable (RR) sets and estimates the influence of a seed set by the number of RR sets containing seeds. One of our main baseline \ripples{} is an RR-sketch method.
Reverse reachable methods~\cite{borgs2014maximizing,tang2014influence,tang2015influence,wang2016bring} are faster than the forward influence method in general because each forward influence sketch is generated by examining the entire graph whereas an RR sketch is constructed by only visiting the nodes that can active the randomly sampled node.
However, a large number of RR sets have to be maintained in the memory for the seed selection to guarantee the quality of seeds, which results in large memory usage. While \oursystem{} can balance the memory and efficiency without loss of quality.
}
\hide{
\myparagraph{Monte-Carlo-Simulation-based Methods}
\citet{kempe2003maximizing} first find the submodularity of influence maximization problem and propose the greedy algorithm, which has a theoretical guarantee of $(1-1/e-\epsilon)$ of the optimal solution for any $\epsilon>0$. For each influence estimation, the necessary number of MC simulations is usually tens of thousands.
\citet{leskovec2007cost} propose the Cost-Effective Lazy Forward (CELF) optimization, which utilize the submodularity to prune the number of influence estimations. CELF++ \cite{goyal2011celf++} further improves the idea of reducing unnecessary MC simulations by computing some sets of estimations in advance. UBLF \cite{zhou2015upper} propose a fast method to compute the upper bound of $\influence(\{v\})$, which can be combined with CELF in the first round of seeds selection. \citet{wang2010community} propose Community based Greedy algorithm (CGA), which use divide-and-conquer to reduce the complexity of individual MC simulation. However, with all the optimizations to improve the efficiency of the raw greedy algorithm, it still suffers from expensive computational costs and is thus very limited in scalability.

\myparagraph{Proxy-based Methods}
Another idea to estimate the influence spread is to use heuristic models instead of heavy MC simulations. \citet{liu2014influence} extends the Page Rank model \cite{page1999pagerank} from a single node to a group of nodes. \citet{jung2012irie} proposes IRIE which estimates the influence with a system of $n$ linear equations.
These ranking models hugely reduced the computational cost of influence estimation, but the transformed problems are not directly related to IM, which causes the inaccuracy of influence estimation.
Some proxy models are directly derived from the diffusion model. SPM \cite{kimura2006tractable} simplifies the IC model based on shortest-path computation, while PMIA \cite{chen2010scalable} and IPA \cite{kim2013scalable} are based on maximum influence arborescence (MIA).
However, these proxy-based methods lack theoretical guarantees.
Even in practice, the quality of solutions of these methods is quite sensitive and unstable on different graph scales and propagation probabilities \cite{li2018influence}.

\myparagraph{Sketch-based Methods}
To overcome the drawbacks of MC-simulation-based methods and heuristic-based methods, others have proposed to first compute sampled graphs (sketches) under the diffusion model, which can be used to speed up seed evaluation.
StaticGreedy \cite{cheng2013staticgreedy} samples $R$ (usually 100 to 200) sketches on the original graph, and these sketches keep static in each round of seed selection.
PMC \cite{ohsaka2014fast} also keeps $R=200$ sketches, but further optimizes the seed selection phase by pruning BFS.
\citet{borgs2014maximizing} are the first to propose the reverse reachable sketch approach (RR-sketch), where an RR-sketch is constructed by randomly selecting a node $v$ and searching on the transpose graph. Then the influence of node $u$ can be estimated by counting the RR-sketches it appears in. Once $u$ is selected as a seed, the RR-sketches containing $u$ are deleted for the following seed selection.
TIM \cite{tang2014influence}, IMM \cite{tang2015influence} and BKRIS \cite{wang2016bring} further optimize the required number of RR-sketches. However, when the graph is large and the required $\epsilon$ is small, the RR-based algorithms need to sample a large number of RR-sketches to ensure the theoretical bound and thus have an expensive overhead of time and memory.

\myparagraph{Parallel Influence Maximization Algorithms}
Considering the growing graph size and plateaued single-processor performance, it is crucial to consider parallel solutions for influence maximization.
\citet{gokturk2020boosting} parallelized the greedy algorithm with hash functions and SIMD vectorization. \cite{popova2018data, popova2018nosingles} parallelized the RR-sketch-based algorithm using Java. \cite{minutoli2019fast, minutoli2020curipples} parallelized IMM, and further expand it to a distributed, multi-CPU and multi-GPU manner.} 
\section{Conclusion and Discussions}\label{sec:conclusion}

This paper addresses the scalability issues in existing IM systems by novel techniques including sketch compression and parallel CELF.
Our sketch compression (\cref{sec:compact_sketching}) applies to the IC model and undirected graphs, 
and avoids the $O(Rn)$ space usage in SOTA systems, 
which allows \oursystem{} to run on much larger graphs without sacrificing much performance. 
To the best of our knowledge, our new data structures (\cref{sec:selecting}) are the first parallel version of CELF seed selection, which is general to any submodular diffusion models.

In addition to new algorithms, our techniques are carefully analyzed (\cref{thm:sketch,thm:ptreecost}) and have good theoretical guarantees regarding work, parallelism, and space.
These analyses not only lead to good practical performance but also help to understand how the techniques interplay. 
The techniques are also experimentally verified in \cref{fig:compression,fig:scale,fig:trees}.
The theory and our careful implementation also lead to stable speedup with increasing core counts (see \cref{fig:scale}).

\revision{
\myparagraph{Limitations, Generalizations, and Future Work.} Our paper discussed two techniques: sketch compression and parallel CELF. Sketch compression uses the idea of memoization. As in previous work~\cite{chen2009efficient,gokturk2020boosting}, this approach only applies to IC models on undirected graphs, so that each edge can be sampled in advance regardless of the actual influence propagation direction. One future direction is to study similar approaches for sketch compression on directed graphs, but it may require a different technique, such as a compressed representation for strongly connected components. 

For parallel CELF, we believe that our approaches based on \ptree{} and \ourtree{} can be generalized to other settings. 
Within the scope of IM, this technique can also be combined with reverse (reachable) sketches, as long as the diffusion models are submodular. 
Moreover, CELF is a general greedy approach to accelerate optimization problems with submodular objective functions~\cite{krause2014submodular,mirzasoleiman2015lazier,leskovec2007cost}. 
Therefore, our approaches can potentially provide parallelism for these problems. We leave these extensions as future work. 

One limitation of \ourtree{} is that it does not have strong bounds as \ptree{s}. 
An interesting future work is to derive a strong bound regarding the number of evaluations. 
In the worst case, \ourtree{} may require $O(n)$ evaluations due to the asynchrony of the threads. 
However, such a bad case is very unlikely in practice, and \ourtree{} has demonstrated good performance in our experiments. 
Giving a tighter bound under some practical assumptions may be interesting.

\hide{We also note that the high-level ideas behind parallel CELF can be extended to other parallel models that even do not support shared memory access. For example, our prefix-doubling strategy used in the \ptree{} method can also be applied in the distributed setting, providing a distributed search tree that supports batch operations. This can also be interesting for future work.}
}

\hide{
=======
\revision{\section{Limitations and Potential Future Work}\label{sec:conclusion}
}

In this paper, we address the scalability issues in existing IM systems by novel techniques including sketch compression and parallel data structures for seed selection.
\revision{The techniques we proposed are novel and may be applied to other IM algorithms or other problems that have the submodular property. They also have limitations, and we discuss their limitations and potential future work here.}

\revision{\myparagraph{Sketch Compression.} Our sketch compression technique (in \cref{sec:compact_sketching}) avoids the $O(Rn)$ space usage in existing forward-sketch systems, 
which allows \oursystem{} to run on much larger graphs, without sacrificing the performance much. 
However, it requires the sampled graphs (sketches) to be undirected graphs, which essentially asks the edge $(u,v)$ and $(v,u)$ to have the same probability of being sampled. The IC model on undirected graphs is the only propagation model we know that satisfies this requirement. If another novel propagation model also meets the requirement, it can also apply our sketch compression technique. 
}

\myparagraph{Parallel CELF.}
Our new data structures, \ptree{} and \ourtree{} (in \cref{sec:selecting}), are the first solutions to parallelize the CELF seed selection step, to the best of our knowledge, which is general to any submodular diffusion models. 
We note that our approaches based on \ptree{} and \ourtree{} can be much more general than parallelizing CELF seed selection in IM. In fact, CELF is a general greedy approach to accelerate optimization problems with submodular objective functions~\cite{krause2014submodular,mirzasoleiman2015lazier,leskovec2007cost}.
We plan to evaluate our data structures for other CELF-based algorithms in future work. 
\hide{
In addition to new algorithmic ideas, our new techniques are carefully analyzed (see \cref{thm:sketch,thm:ptreecost}) and have good theoretical guarantees regarding work, parallelism, and space usage.
These analyses not only lead to good practical performance (see \cref{fig:heatmap}), but also help to understand how the techniques interplay. 
The techniques are also experimentally verified in \cref{fig:compression,fig:scale,fig:trees}.
The theory and our careful implementation also lead to stable speedup with increasing core counts (see \cref{fig:scale}).
This means that \oursystem{} will likely have even better performance in the future given stronger processors with more core counts.
}

\revision{
\myparagraph{Re-evaluations in \WinTree.} We can prove a good theoretical guarantee regarding the number of re-evaluations for \ptree \cref{lemma:efficiency}. However, we can not prove a good bound for \wintree since the parallel threads are highly asynchronous. We do not know whether \wintree has interesting bound regarding the number of re-evaluations. We leave it as future work. 
}

\revision{
\myparagraph{Reverse-reachable Sketch Algorithms}
Our compression technique in \oursystem is designed for forward-sketch algorithms, so in this paper, we focus on forward-sketch algorithms. However, our parallel techniques can also be applied to the other kind of sketch based IM algorithm: reverse-reachable sketch algorithms. Most existing reverse-reachable sketch algorithms actively update the marginal gains of related vertices after selecting a seed. However, we can lazily updates the marginal gains in the reverse-reachable sketch algorithms, thus applying our parallel techniques. We plan to evaluate our data structures for reverse-reachable sketch algorithm in future work.
}

\revision{
\myparagraph{Extend to other Parallel Architectures.}
Our parallel techniques are novel and the high-level idea can be extended to another parallel computational model that does not support shared memory access. For example, if you can find an implementation of binary search tree (BST) that supports split and batch insertion operation on the desired parallel architecture, you can extend our \ptree method by replacing the \ptree with that BST implementation. For \wintree method, you can store sub-trees separatly. When query the \textsc{NextSeed}, the algorithm can find the maximum vertex on each sub-tree and take the maximum among them as the next seed. This method may lead to more computation compared to the original \wintree method, but can still and bring in parallelism avoid re-evaluating all the vertices.
}
}

\iffullversion{
\section{Acknowledgement}
This work is supported by NSF grants CCF-2103483, IIS-2227669, NSF CAREER awards CCF-2238358, CCF-2339310, and UCR Regents Faculty Fellowships.
}

\clearpage
\bibliography{strings,main}

\iffullversion{
\appendix\ifconference{\section*{Supplemental Material: Performance Evaluation under Different Edge Probability Distributions}\label{sec:distribution}}
\iffullversion{
\section{Performance Evaluation under Different Edge Probability Distributions}\label{sec:distribution}
}
\begin{table*}[htbp]
  \centering
  \small
    \begin{tabular}{clrrr|rrrr|rrrrr}
          &       & \multicolumn{3}{c|}{\textbf{Relative Influence}} & \multicolumn{4}{c|}{\textbf{Total Running Time (second)}}     & \multicolumn{5}{c}{\textbf{Memory Usage (GB)}} \\
          &       & \multicolumn{1}{c}{\textbf{Ours}} & \multicolumn{1}{@{}c}{\textbf{\infuser{}}} & \multicolumn{1}{c|}{\textbf{\ripples{}}} & \multicolumn{1}{c}{\textbf{\ours{1}}} & \multicolumn{1}{c}{\textbf{\ours{0.1}}} & \multicolumn{1}{@{}c}{\textbf{\infuser}} & \multicolumn{1}{c|}{\textbf{\ripples{}}} & \multicolumn{1}{c}{\textbf{CSR}} & \multicolumn{1}{c}{\textbf{\ours{1}}} & \multicolumn{1}{c}{\textbf{\ours{0.1}}} & \multicolumn{1}{@{}c}{\textbf{\infuser{}}} & \multicolumn{1}{c}{\textbf{\ripples{}}} \\
    \midrule
    \multirow{9}[2]{*}{\begin{sideways}\textbf{Social \& Web}\end{sideways}} & \textbf{EP} & \textbf{11.3K} & 11.3K & 11.2K & 0.36  & 0.49  & \textbf{0.34} & 4.30  & 0.01  & \underline{0.14}  & \textbf{0.05} & 0.16  & 0.46 \\
          & \textbf{SLDT} & \textbf{15.4K} & 15.3K & 15.3K & \textbf{0.35} & 0.52  & 0.39  & 10.4  & 0.01  & \underline{0.13}  & \textbf{0.05} & 0.17  & 0.53 \\
          & \textbf{DBLP} & \textbf{10.4K} & 10.3K & 10.2K & \textbf{0.52} & 0.79  & 2.15  & 7.55  & 0.02  & 0.47  & \textbf{0.08} & 0.67  & \underline{0.40} \\
          & \textbf{YT} & \textbf{86.6K} & 86.6K & 86.1K & \textbf{1.60} & 2.95  & 5.21  & 56.0  & 0.05  & \underline{1.68}  & \textbf{0.27} & 2.37  & 3.84 \\
          & \textbf{OK} & \textbf{2.32M} & 2.32M & 2.32M & \textbf{9.93} & 29.0  & 52.7  & 290   & 1.77  & \underline{6.09}  & \textbf{2.43} & 8.09  & 129 \\
          & \textbf{LJ} & \textbf{1.19M} & 1.19M & 1.19M & \textbf{7.30} & 22.3  & 54.2  & 179   & 0.68  & \underline{6.08}  & \textbf{1.62} & 10.5  & 77.5 \\
          & \textbf{TW} & \textbf{20.1M} & 20.1M & -     & \textbf{122} & 327   & 571   & -     & 18.2  & \underline{63.2}  & \textbf{26.5} & 103   & - \\
          & \textbf{FT} & \textbf{29.4M} & 29.4M & -     & \textbf{150} & 492   & 1213  & -     & 27.4  & \underline{97.6}  & \textbf{39.4} & 161   & - \\
          & \textbf{SD} & \textbf{29.2M} & 29.0M & -     & \textbf{187} & 570   & 1523  & -     & 29.6  & \underline{125}   & \textbf{44.5} & 211   & - \\
    \midrule
    \multirow{2}[2]{*}{\begin{sideways}\textbf{Road}\end{sideways}} 
        & \textbf{GER} & \textbf{0.40K} & 0.29K & 0.38K & 9.26  & \textbf{8.44} & 27.6  & 388  & 0.61  & \underline{13.3}  & \textbf{2.59} & 25.2  & 21.8 \\
        & \textbf{USA} & \textbf{0.39K} & 0.30K & 0.36K & \textbf{14.4} & 13.9  & 64.3  & 8048 & 0.33  & \underline{25.8}  & \textbf{5.06} & 49.1  & 45.0 \\
    \midrule
    \multirow{5}[2]{*}{\begin{sideways}\textbf{k-NN}\end{sideways}} 
          & \textbf{HT5} & \textbf{1.07K} & 0.89K & 1.02K & 0.74  & \textbf{0.71} & 2.71  & 9.09  & 0.11  & 1.37  & \textbf{0.23} & 1.94  & \underline{0.77} \\
          & \textbf{HH5} & \textbf{2.93K} & 2.31K & 2.85K & 2.28  & \textbf{2.10} & 9.92  & 14.6  & 0.11  & 3.04  & \textbf{0.51} & 4.27  & \underline{1.14} \\
          & \textbf{CH5} & \textbf{360K} & 320K  & 355K  & \textbf{3.61} & 5.58  & 118.24 & 9.82  & 0.25  & 4.88  & \textbf{1.04} & 8.80  & \underline{1.63} \\
          & \textbf{GL5} & \textbf{11.6K} & 8.5K  & 11.5K & 19.3  & \textbf{17.7} & 94.5  & 231   & 1.36  & 27.6  & \textbf{6.02} & 51.7  & \underline{8.17} \\
          & \textbf{COS5} & \textbf{5.0K} & 2.1K  & -     & \textbf{310} & 322   & 2024  & -     & 17    & \underline{355}   & \textbf{66} & 666   & - \\
    \bottomrule
    \end{tabular}%

    \caption{\textbf{Under uniform edge distribution: running time, memory usage, and the influence spread of all tested systems on a machine with 96 cores (192 hyperthreads).}
    For social and web graphs, the edges are sampled under the uniform distribution $U(0,0.1)$; for road and k-NN graphs, the edges are sampled under $U(0.1, 0.3)$.
    ``-'': out of memory (1.5 TB) or time limit (3 hours).
    \ours{1} is our implementation with \wintree{} without compression. 
    {\ours{0.1}} is our implementation with $\rate=0.1$ ($10\times$ compression for sketches).
    {\infuser{}}~\cite{gokturk2020boosting} and {\ripples{}}~\cite{minutoli2019fast, minutoli2020curipples} are baselines.
    We report the \emph{best time} of \infuser{} and \ripples{} by varied core counts (the scalability issue of \infuser{} and \ripples{} are shown in \cref{fig:scale}).
    {CSR} is the memory used to store the graph in CSR format (see more in \cref{sec:overall_performance}).
    The bold numbers are the highest influence spread/fastest time/smallest memory among all implementations on each graph.
    The underlined numbers in memory usage are the smallest memory among systems that do not use compression (\ours{1}, \ripples{} and \infuser{}).
    }
  \label{tab:baselines_uniform}%
\end{table*}%

\begin{table*}[htbp]
  \centering
  \small
    \begin{tabular}{clrrr|rrrr|rrrrr}
          &       & \multicolumn{3}{c|}{\textbf{Relative Influence}} & \multicolumn{4}{c|}{\textbf{Total Running Time (second)}}     & \multicolumn{5}{c}{\textbf{Memory Usage (GB)}} \\
          &       & \multicolumn{1}{c}{\textbf{Ours}} & \multicolumn{1}{@{}c}{\textbf{\infuser{}}} & \multicolumn{1}{c|}{\textbf{\ripples{}}} & \multicolumn{1}{c}{\textbf{\ours{1}}} & \multicolumn{1}{c}{\textbf{\ours{0.1}}} & \multicolumn{1}{@{}c}{\textbf{\infuser}} & \multicolumn{1}{c|}{\textbf{\ripples{}}} & \multicolumn{1}{c}{\textbf{CSR}} & \multicolumn{1}{c}{\textbf{\ours{1}}} & \multicolumn{1}{c}{\textbf{\ours{0.1}}} & \multicolumn{1}{@{}c}{\textbf{\infuser{}}} & \multicolumn{1}{c}{\textbf{\ripples{}}} \\
    \midrule
    \multirow{9}[1]{*}{\begin{sideways}\textbf{Social \& Web}\end{sideways}} 
          & \textbf{EP} & \textbf{615} & 605   & 598   & 0.30  & 0.41  & \textbf{0.28} & 1.19  & 0.01  & 0.13  & \textbf{0.04} & 0.16  & \underline{0.09} \\
          & \textbf{SLDT} & \textbf{645} & 628   & 629   & 0.29  & 0.35  & \textbf{0.28} & 0.77  & 0.01  & 0.13  & \textbf{0.04} & 0.17  & \underline{0.09} \\
          & \textbf{DBLP} & \textbf{738} & 636   & 713   & \textbf{0.37} & 0.42  & 1.04  & 3.18  & 0.02  & 0.47  & \textbf{0.08} & 0.67  & \underline{0.31} \\
          & \textbf{YT} & \textbf{678} & 652   & 656   & \textbf{1.00} & 1.08  & 3.34  & 13.5  & 0.05  & 1.68  & \textbf{0.27} & 2.37  & \underline{1.22} \\
          & \textbf{OK} & 1747  & 1622  & \textbf{1810} & 6.81  & \textbf{6.77} & 80.3  & 65.5  & 1.77  & 6.21  & \textbf{2.43} & 8.10  & \underline{5.62} \\
          & \textbf{LJ} & 1199  & 1054  & \textbf{1211} & \textbf{5.23} & 5.23  & 32.9  & 75.6  & 0.68  & 6.04  & \textbf{1.63} & 10.5  & \underline{4.98} \\
          & \textbf{TW} & \textbf{990} & 509   & -     & \textbf{73.2} & 75.6  & 416   & -     & 18.2  & \underline{62.8}  & \textbf{25.6} & 103   & - \\
          & \textbf{FT} & \textbf{1425} & 1324  & -     & 95.7  & \textbf{93.2} & 1352  & -     & 27.4  & \underline{97.6}  & \textbf{39.2} & 161   & - \\
          & \textbf{SD} & \textbf{3762} & 3675  & -     & \textbf{117} & 121   & 1924  & -     & 29.6  & \underline{125}   & \textbf{44.4} & 211   & - \\
    \midrule
    \multirow{2}[2]{*}{\begin{sideways}\textbf{Road}\end{sideways}} 
        & \textbf{GER} & \textbf{430} & 385   & 408     & 10.9  & \textbf{9.87} & 35.3  & 334 & 0.33  & \underline{13.26} & \textbf{2.62} & 25.2  & 21.0 \\
        & \textbf{USA} & \textbf{422} & 405   & 411    & 13.9  & \textbf{12.9} & 86.0  & 6283  & 0.61  & \underline{25.80} & \textbf{4.99} & 49.1  & 43.0 \\
         
    \midrule
    \multirow{5}[2]{*}{\begin{sideways}\textbf{k-NN}\end{sideways}} 
          & \textbf{HT5} & \textbf{440} & 397   & 404   & 0.73  & \textbf{0.68} & 2.50  & 15.8  & 0.11  & 1.35  & \textbf{0.23} & 1.94  & \underline{1.24} \\
          & \textbf{HH5} & \textbf{518} & 453   & 491   & 1.97  & \textbf{1.83} & 7.55  & 45.4  & 0.11  & 3.05  & \textbf{0.49} & 4.27  & \underline{2.69} \\
          & \textbf{CH5} & \textbf{755} & 596   & 733   & 3.03  & \textbf{2.95} & 16.4  & 88.1  & 0.25  & \underline{4.83}  & \textbf{1.05} & 8.80  & 4.87 \\
          & \textbf{GL5} & \textbf{489} & 406   & 459   & 18.2  & \textbf{17.0} & 74.1  & 3493 & 1.36  & \underline{27.6}  & \textbf{6.05} & 51.7  & 45.3 \\
          & \textbf{COS5} & \textbf{519} & 453   & -     & 294   & \textbf{267} & 1729  & -     & 17    & \underline{355}   & \textbf{66.0} & 666   & - \\
    \bottomrule
    \end{tabular}%
    \caption{\textbf{Under vertex-related edge distribution: running time, memory usage, and the influence spread of all tested systems on a machine with 96 cores (192 hyperthreads).}
    For an undirected-edge $(u,v)$, the probability it is sampled is $\frac{2}{d_u+d_v}$, where $d_u$ and $d_v$ are the degrees of vertex $u$ and $v$. 
    ``-'': out of memory (1.5 TB) or time limit (3 hours).
    \ours{1} is our implementation with \wintree{} without compression. 
    {\ours{0.1}} is our implementation with $\rate=0.1$ ($10\times$ compression for sketches).
    {\infuser{}}~\cite{gokturk2020boosting} and {\ripples{}}~\cite{minutoli2019fast, minutoli2020curipples} are baselines.
    We report the \emph{best time} of \infuser{} and \ripples{} by varied core counts (the scalability issue of \infuser{} and \ripples{} are shown in \cref{fig:scale}).
    {CSR} is the memory used to store the graph in CSR format (see more in \cref{sec:overall_performance}).
    The bold numbers are the highest influence spread/fastest time/smallest memory among all implementations on each graph.
    The underlined numbers in memory usage are the smallest memory among systems that do not use compression (\ours{1}, \ripples{} and \infuser{}).
    }
  \label{tab:baselines_wic}%
\end{table*}%

\ifconference{
    (This section is added to the appendix of our \textbf{full paper} to address Question 2 in \cref{sec:exp-cover} of the cover page (Improve the Experimental Analysis). We also provide it here for the reviewers for easier access.)
}

To study how different edge probability distributions affect our algorithm and baseline algorithms, we also tested two more edge probability distributions. The first one uses $U(0,0.1)$ for each edge in social and web graphs, and $U(0.1,0.3)$ for road and $k$-NN graphs, where $U(x,y)$ means to draw a uniform random number from $x$ to~$y$. 
This edge probability assignment is also commonly used in previous work~\cite{gokturk2020boosting, minutoli2019fast}. 
The other probability distribution is taking $p_{uv}=\frac{2}{d_u+d_v}$, where $p_{uv}$ is the probability to sample edge $(u,v)$, and $d_u$ and $d_v$ are degrees of vertex $u$ and $v$. It is similar to the Weighted IC (WIC) model on the directed graph, where $p_{uv} = \frac{1}{d_{in}(v)}$, $p_{uv}$ is the probability that the directed edge $u\to v$ is sampled, and $d_{in}(v)$ is the in-degree of vertex $v$. We will use \textit{Uniform} and \textit{WIC} to refer to these two edge probability assignments, respectively, and use \textit{Consistent} to refer to the assignment mentioned in the main context of the paper.

\cref{tab:baselines_uniform} and \cref{tab:baselines_wic} show the influence scores, running time and memory usage of all systems with \textit{Uniform} and \textit{WIC} edge probability assignments respectively. \ours{1} and \ours{0.1} are our \oursystem{} with $\rate=1$ (no compression) and $\rate=0.1$ ($10\times$ sketch compression), respectively. As mentioned in the main paper, \infuser{} and \ripples{} have scalability issues.  We tested them with both 192 hyperthreads and the same core counts to get their best performance in \cref{tab:baselines} for each graph. In \cref{tab:baselines_uniform} and \cref{tab:baselines_wic}, we report the smaller one as the running time.

\myparagraph{Influence Score.}
With $R=256$ for \oursystem{} and \infuser{}, and $\epsilon=0.5$ for \ripples{}, \oursystem{} has the largest influence score on all 16 tested graphs with \textit{Uniform} edge probability assignment, and is smaller than \ripples{} by $3.5\%$ and $1.0\%$ on graph OK and LJ with \textit{WIC} edge probability assignment. The influence scores of \ripples{} and \ours{} differ less than $8\%$, which indicates $R=256$ and $\epsilon=0.5$ is a fair setting for baselines to compare their running time and memory. We observe that the influence score of \infuser{} is only 41--89\% of the best-achieved score by \ripples{} and \oursystem{} on sparse graphs with \textit{Uniform} edge assignment and $51\%$ of the best influence on TW with \textit{WIC} edge assignment. The observation is the same with the \textit{Consistent} assignment in the main body of the paper.

\myparagraph{Running Time.}
\oursystem{} is significantly faster than two baseline algorithms on almost all graphs. 
With both \textit{Uniform} and \textit{WIC} assignments, \ours{1} is faster than all baselines on 14 graphs. \ours{1} and \ours{0.1} are just slower than \infuser{} on the two smallest graphs.  On average, with the \textit{Uniform} assignment, \ours{1} is $4.6\times$ faster than \infuser{} and $21\times$ faster than \ripples{}, and \ours{0.1} is $2.9\times$ faster than \infuser{} and $15\times$ faster than \ripples. With \textit{WIC} assignment, \ours{1} is $4.5\times$ faster than \infuser{} and $21\times$ faster than \ripples{}, and \ours{0.1} is $4.4\times$ faster than \infuser{} and $20\times$ faster than \ripples{}.

\myparagraph{Memory Usage.}
\oursystem{} and \infuser{} have almost the same memory usage under different edge probability assignments. Recall that \oursystem{} and \infuser{} are forward-reachable sketch-based IM algorithms. Their memory usage for the same graph only depends on the number of sketches $R$. Therefore, with the same $R=256$, they will have almost the same memory usage. 
Different from \oursystem{} and \infuser{}, \ripples{} is a reverse-reachable sketch-based IM algorithm, which dynamically samples sketches according to the $\epsilon$ and graphs. \ripples{} memory usage varies on different edge probability assignments. One observation is that on social graphs YT, OK and LJ, the memory usage of \ripples{} with WIC is much smaller than that with \textit{Uniform} and \textit{Consistent} edge probability assignments ($10\times$ fewer than \textit{Uniform} and $4.1\times$ fewer than \textit{Consistent} on average on these three graphs). 
With $\alpha=0.1$, \oursystem{} uses the least memory on all graphs with all different assignments.

\myparagraph{Summary.}
Overall, the performance comparison between \oursystem{} and the baseline algorithms are similar on \textit{Consistent} and \textit{Uniform} assignments. 
\ripples{} exhibits slightly different performance on \textit{WIC} to \textit{Consistent}, because it  samples reverse-reachable sketches dynamically. 
Both \ours{1} and \ours{0.1} have shorter running time than the baselines in time on \emph{all but the two smallest graphs} for each tested edge probability assignment. In those exceptions, \ours{1} has a running time very close to \infuser{}, which is much shorter than \ripples{}. 
\ours{0.1} has lower space usage on \emph{all} tested edge probability assignments and graphs. 
In summary, the relative performance is fairly consistent on these three edge probability distributions. Therefore, in the main body of the paper, we simply use the \textit{Consistent} assignment (fixing $p$ for all edges) to demonstrate the performance. 

}

\end{document}
\endinput